\newcommand{\lyxmathsym}[1]{\ifmmode\begingroup\def\b@ld{bold}
  \text{\ifx\math@version\b@ld\bfseries\fi#1}\endgroup\else#1\fi}
\providecommand{\tabularnewline}{\\}
\theoremstyle{plain}
\newtheorem{assumption}{\protect\assumptionname}
\theoremstyle{plain}
\newtheorem{prop}{\protect\propositionname}
\theoremstyle{plain}
\newtheorem{thm}{\protect\theoremname}
\theoremstyle{plain}
\newtheorem{cor}{\protect\corollaryname}
\Crefname{assumption}{Assumption}{Assumptions}
\Crefname{prop}{Proposition}{Propositions}
\Crefname{thm}{Theorem}{Theorems}
\Crefname{test}{Test}{Tests}
\Crefname{cor}{Corollary}{Corollaries}
\Crefname{assumption}{Assumption}{Assumptions}%
\let\ref\Cref
\global\boolfalse{cbx:parens}}
\providecommand{\assumptionname}{Assumption}
\providecommand{\corollaryname}{Corollary}
\providecommand{\propositionname}{Proposition}
\providecommand{\theoremname}{Theorem}
\begin{document}
\title{Revisiting Event Study Designs:\\
Robust and Efficient Estimation{\Large{} }}
\author{\vspace{1.25cm}
}
\author{Kirill Borusyak\\
UC Berkeley and CEPR\and Xavier Jaravel\\
LSE and CEPR\and Jann Spiess\\
Stanford\thanks{Borusyak: \protect\href{mailto:k.borusyak@berkeley.edu}{k.borusyak@berkeley.edu};
Jaravel: \protect\href{mailto:x.jaravel@lse.ac.uk}{x.jaravel@lse.ac.uk};
Spiess: \protect\href{mailto:jspiess@stanford.edu}{jspiess@stanford.edu}.
This draft supersedes our 2018 manuscript, \textquotedblleft Revisiting
Event Study Designs, with an Application to the Estimation of the
Marginal Propensity to Consume.\textquotedblright{} We thank Alberto
Abadie, Isaiah Andrews, Raj Chetty, Itzik Fadlon, Ed Glaeser, Peter
Hull, Guido Imbens, Larry Katz, Jack Liebersohn, Benjamin Moll, Jonathan
Roth, Pedro Sant'Anna, Amanda Weiss, and three anonymous referees
for thoughtful conversations and comments. We are particularly grateful
to Jonathan Parker for his support in accessing and working with the
data and code from \textcite{Broda2014}. The results in the empirical
part of this paper are calculated based in part on data from Nielsen
Consumer LLC and marketing databases provided through the NielsenIQ
Datasets at the Kilts Center for Marketing Data Center at The University
of Chicago Booth School of Business. The conclusions drawn from the
NielsenIQ data are ours and do not reflect the views of NielsenIQ.
NielsenIQ is not responsible for, had no role in, and was not involved
in analyzing and preparing the results reported herein. Two accompanying
Stata commands are available from the SSC repository: \texttt{did\_imputation}
for treatment effect estimation with our imputation estimator and
pre-trend testing, and \texttt{event\_plot} for making dynamic event
study plots.}\vspace{0.25cm}
}
\date{This version: January 2024}
\maketitle
\begin{abstract}
We develop a framework for difference-in-differences designs with
staggered treatment adoption and heterogeneous causal effects. We
show that conventional regression-based estimators fail to provide
unbiased estimates of relevant estimands absent strong restrictions
on treatment-effect homogeneity. We then derive the efficient estimator
addressing this challenge, which takes an intuitive \textquotedblleft imputation\textquotedblright{}
form when treatment-effect heterogeneity is unrestricted. We characterize
the asymptotic behavior of the estimator, propose tools for inference,
and develop tests for identifying assumptions. Our method applies
with time-varying controls, in triple-difference designs, and with
certain non-binary treatments. We show the practical relevance of
our results in a simulation study and an application. Studying the
consumption response to tax rebates in the United States, we find
that the notional marginal propensity to consume is between 8 and
11 percent in the first quarter \textemdash{} about half as large
as benchmark estimates used to calibrate macroeconomic models \textemdash{}
and predominantly occurs in the first month after the rebate.

\noindent 
\global\long\def\expec#1{\mathbb{E}\left[#1\right]}%
\global\long\def\var#1{\mathrm{Var}\left[#1\right]}%
\global\long\def\cov#1{\mathrm{Cov}\left[#1\right]}%
\global\long\def\one{\mathbf{1}}%
\global\long\def\diag{\operatorname{diag}}%
\global\long\def\tr{\operatorname{tr}}%
\global\long\def\plim{\operatorname*{plim}}%
\newtheorem*{target*}{Estimation Target} \newtheorem{test}{Test} 

\noindent 
\global\long\def\E{\mathbb{E}}%
\global\long\def\Var{\mathrm{Var}}%
\global\long\def\wnorm{\|v\|_{\textnormal{H}}}%
\global\long\def\1{\textbf{1}}%
\global\long\def\0{\textbf{0}}%
$
\global\long\def\R{\mathbb{R}}%
\global\long\def\I{\mathbb{I}}%
\global\long\def\Null{\mathbb{O}}%
\global\long\def\N{\mathcal{N}}%
$

\noindent \newpage{}
\end{abstract}

\section{Introduction}

Event studies are one of the most popular tools in applied economics
and policy evaluation. An event study is a difference-in-differences
(DiD) design in which a set of units in the panel receive treatment
at different points in time. In this paper, we investigate the robustness
and efficiency of estimators of causal effects in event studies, with
a focus on the role of treatment effect heterogeneity. We first develop
a simple econometric framework that delineates the identification
assumptions from each other and from the estimation target, defined
as some average of heterogeneous causal effects. We then apply this
framework in three ways. First, we analyze the conventional practice
of implementing event studies via two-way fixed effect Ordinary Least
Squares (TWFE OLS) regressions and show how the implicit conflation
of different assumptions leads to biases. Second, leveraging event
study assumptions in an explicit and principled way allows us to derive
the robust and efficient estimator, along with appropriate inference
methods and tests. The estimator takes an intuitive ``imputation''
form when treatment-effect heterogeneity is unrestricted. Finally,
we illustrate the practical relevance of our approach in an application
estimating the marginal propensity to spend (MPX) out of tax rebates;
our MPX estimates are lower than in prior work, implying that fiscal
stimulus is less powerful than commonly thought.

Event studies are frequently used to estimate treatment effects when
treatment is not randomized, but the researcher has panel data allowing
them to compare outcome trajectories before and after the onset of
treatment, as well as across units treated at different times. By
analogy to conventional DiD designs without staggered rollout, event
studies are commonly implemented by two-way fixed effect regressions,
such as
\begin{align}
Y_{it} & =\alpha_{i}+\beta_{t}+\tau D_{it}+\varepsilon_{it},\label{eq:baselineTEmodel}
\end{align}
where outcome $Y_{it}$ and binary treatment $D_{it}$ are measured
in periods $t$ and for units $i$, $\alpha_{i}$ are unit fixed effects
(FEs) that allow for different baseline outcomes across units, and
$\beta_{t}$ are period fixed effects that accommodate overall trends
in the outcome. Specifications like \ref{eq:baselineTEmodel} are
meant to isolate a treatment effect $\tau$ from unit- and period-specific
confounders. A commonly-used dynamic version of this regression includes
``lags'' and ``leads'' of the indicator for the onset of treatment,
to capture treatment effects for different ``horizons'' since the
onset of treatment and test for the parallel trajectories of the pre-treatment
outcomes.

To understand the problems with conventional two-way fixed effect
estimators in event-study designs and provide a principled econometric
approach to overcoming these issues, in \ref{sec:Setting} we develop
a simple framework that makes the estimation targets and underlying
assumptions explicit and clearly isolated. We suppose that the researcher
chooses a particular weighted average (or weighted sum) of heterogeneous
treatment effects they are interested in estimating. We make (and
later test) two standard DiD identification assumptions: that potential
outcomes without treatment are characterized by parallel trends and
that there are no anticipatory effects. We also allow for \textemdash{}
but do not require \textemdash{} an auxiliary assumption that the
treatment effects themselves follow some model that restricts their
heterogeneity for \emph{a priori }specified economic reasons. This
explicit approach is in contrast to regression specifications like
\ref{eq:baselineTEmodel}, both static and dynamic, which implicitly
conflate choices of estimation target and identification assumptions.
Our framework covers a broad class of empirically relevant estimands
beyond the standard average treatment-on-the-treated (ATT), including
heterogeneous treatment effects by observed covariates and ATTs at
different horizons that hold the composition of units fixed.

Through the lens of this framework, in \ref{sec:Conventional-Practice}
we uncover a set of challenges with conventional event-study estimation
methods and trace them back to a mismatch between estimation target,
identification assumptions, and the flexibility of the regression
specification. First, we note that failing to rule out anticipation
effects in ``fully-dynamic'' specifications (with all leads and
lags of the event included) leads to an underidentification problem
when there are no never-treated units, such that the dynamic path
of anticipation and treatment effects over time is not point-identified.
We conclude that it is important to separate out testing the assumptions
about pre-trends from the estimation of dynamic treatment effects
under those assumptions. Second, implicit assumptions of homogeneous
treatment effects embedded in static DiD regressions like \ref{eq:baselineTEmodel}
may lead to estimands that put negative weights on some long-run treatment
effects. With staggered rollout, regression-based estimation leverages
comparisons between groups that got treated over a period of time
and reference groups which had been treated \emph{earlier}. We label
such cases ``forbidden comparisons.'' Indeed, these comparisons
are only valid when the homogeneity assumption is true; when it is
violated, they can substantially distort the weights the estimator
places on treatment effects, or even make them negative. Third, in
dynamic specifications, implicit assumptions about treatment effect
homogeneity across groups first treated at different times lead to
the spurious identification of long-run treatment effects for which
no DiD comparisons valid under heterogeneous treatment effects are
available. The last two challenges highlight the danger of imposing
implicit treatment effect homogeneity assumptions instead of allowing
for heterogeneity and explicitly specifying the target estimand. We
show that these challenges are not resolved by trimming the sample
to a fixed window around the event date.

From the above discussion, the reader should not conclude that event
study designs are plagued by fundamental problems. On the contrary,
these challenges only arise due to a mismatch between treatment effect
heterogeneity and specifications which restrict it. We therefore use
our framework to circumvent these issues and derive robust and efficient
estimators.

In \ref{sec:Imputation-Solution}, we first establish a simple characterization
for the most efficient linear unbiased estimator of any pre-specified
weighted sum of treatment effects, in the baseline case of spherical
errors, i.e. homoskedasticity with no serial correlation. This estimator
explicitly incorporates the researcher's estimation goal and assumptions
about parallel trends, anticipation effects, and restrictions on treatment
effect heterogeneity. It is constructed by estimating a flexible high-dimensional
regression that differs from conventional event study specifications,
and aggregating its coefficients appropriately. While spherical errors
are a natural starting point, the principled construction of this
estimator more generally ensures unbiasedness and yields attractive
efficiency properties, as we later confirm in simulations.

In our leading case where the heterogeneity of treatment effects is
not restricted, the efficient robust estimator can be implemented
using a transparent ``imputation'' procedure. First, the unit and
period fixed effects $\hat{\alpha}_{i}$ and $\hat{\beta}_{t}$ are
fitted by regressions using untreated observations only. Second, these
fixed effects are used to impute the untreated potential outcomes
and therefore obtain an estimated treatment effect $\hat{\tau}_{it}=Y_{it}-\hat{\alpha}_{i}-\hat{\beta}_{t}$
for each treated observation. Finally, a weighted sum of these treatment
effect estimates is taken, with weights corresponding to the estimation
target.

To relate our efficient imputation estimator to other unbiased estimators
that have been proposed in the literature, we derive two additional
results showing the generality of the imputation structure. First,
any other linear estimator that is unbiased in our framework with
unrestricted causal effects can be represented as \emph{an} imputation
estimator, albeit with an inefficient way of imputing untreated potential
outcomes. Second, even when assumptions that restrict treatment effect
heterogeneity are imposed, any unbiased estimator can still be understood
as an imputation estimator for an adjusted estimand. Together, these
two results allow us to characterize estimators of treatment effects
in event studies as a combination of how they impute unobserved potential
outcomes and which weights they put on treatment effects.

For the efficient estimator in our framework, we provide tools for
valid inference. Specifically, we derive conditions under which the
estimator is consistent and asymptotically normal and propose standard
error estimates. Inference is challenging under arbitrary treatment
effect heterogeneity, because causal effects cannot be separated from
the error terms. We instead show how asymptotically conservative standard
errors can be derived, by attributing some variation in estimated
treatment effects to the error terms.\footnote{\label{fn:comparable estimators}While the generality of our setting
only allows for conservative inference (on any robust estimator, including
ours), we obtain asymptotically exact standard errors in the special
case that received the most attention in the literature: when units
are randomly sampled from a population and the estimand consists of
average treatment effects by period\textendash cohort pairs.} Our inference results apply under mild conditions in short panels.
Advancing the existing literature on DiD estimation with staggered
adoption, we also provide conditions for consistency and inference
that extend to panels where the number of time periods grows, as long
as growth is not too fast. We also propose a leave-one-out modification
to our conservative variance estimates with improved finite-sample
performance.

Another important practical advantage of our approach is that it provides
a principled way of testing the identifying assumptions of parallel
trends and no anticipation effects, based on OLS regressions with
untreated observations only. Compared to conventional specifications
with leads and lags of treatment that implicitly restrict treatment
effects, this approach avoids the contamination of the tests by treatment
effect heterogeneity shown by \textcite{Abraham2018}. Moreover, our
strategy circumvents the inference problems after pre-testing that
were pointed out by \textcite{Roth2018a}, under spherical errors.
These attractive properties result from the clear separation of estimation
and testing.

It is also useful to point out two limitations of our analysis. First,
all event study designs assume a restrictive parametric model for
untreated outcomes. We do not evaluate when these assumptions may
be applicable, and therefore when the event study design are \emph{ex
ante }appropriate, as \textcite{Roth2020} do. We similarly do not
consider estimation that is robust to violations of parallel-trend
type assumptions, as \textcite{Roth2019} propose, although our framework
allows relaxing those assumptions by including unit-specific trends
and time-varying covariates. We instead take the standard assumptions
of event study designs as given and derive optimal estimators, valid
inference, and practical tests to assess whether parallel-trend assumptions
hold. Second, we also do not consider event studies as understood
in the finance literature, based on high-frequency panel data, which
typically do not use period fixed effects \parencite{MacKinlay1997}.

In \ref{sec:Application}, we illustrate the practical relevance of
our theoretical insights by revisiting the estimation of the marginal
propensity to spend out of tax rebates in the event study of \textcite{Broda2014}.
First, we show that the choice of a binned specification used by \textcite{Broda2014}
leads to a substantial upward bias in estimated MPXs. Indeed, we find
that the binned specification puts a large weight on the effects happening
in the first week after the rebate receipt, and negative weights on
some longer-run effects, biasing the estimate upwards because the
spending response quickly decays over time. Second, we highlight that,
due to the implicit extrapolation of treatment effects in specifications
restricting treatment effect heterogeneity, some dynamic specifications
could be mistakenly interpreted as evidence for a large and persistent
increase in spending. Our imputation estimator eliminates unstable
patterns found across such specifications. Finally, we illustrate
the underidentification problem with the fully-dynamic specification:
the dynamic path of estimates is very sensitive to the choice of leads
to drop.

Our findings deliver several insights for the macroeconomics literature.
While commonly used estimates of the quarterly MPX covering all expenditures
range from 50-90\% and estimates of the quarterly MPX for nondurable
expenditure range from 15-25\%,\footnote{\textcite{Broda2014}, \textcite{Parker2013} and \textcite{Johnson2006a}
estimate different versions of the MPX out of tax rebates. \textcite{Laibson2022},
\textcite{kaplan2020marginal} and \textcite{di2020stock} provide
recent reviews of the literature on the estimation of the marginal
propensity to spend and consume.} our estimates, when appropriately rescaled, are about half as large,
at 25\textendash 37\% for the MPX one quarter after tax rebate receipt
for all expenditures and 8\textendash 11\% for nondurables. Using
the scaling methodology of \textcite{Laibson2022}, we estimate that
the model-consistent, or ``notional,'' MPC in the quarter following
the tax rebate ranges between 7.8\% and 11.4\%, compared with 15.9\%
to 23.4\% in the original estimation of \textcite{Broda2014}. Furthermore,
our preferred estimates are much more short-lived than benchmark estimates,
falling to a statistical zero beyond the first month after receiving
the tax rebate. Thus, our new estimates imply that fiscal stimulus
may be less potent than predicted by leading macroeconomic models
targeting benchmark estimates.\footnote{\textcite{Orchard2022} apply our imputation estimator to the \textcite{Parker2013}
quarterly data, covering the full consumption basket, and also obtain
estimates around half as large as in the original study. Our analysis
complements their results since, thanks to the high-frequency data,
it allows us to investigate the dynamics of the effect and explain
the source of the bias of conventional approaches. See also \textcite{Baker2021}
for evidence that using robust event study estimation methods matters
in other empirical contexts.}

For convenient application of our results, we supply a Stata command,
\texttt{did\_imputation}, which implements the imputation estimator
and inference for it in a computationally efficient way. Our command
handles a variety of practicalities which are also covered by our
theoretical results, such as time-varying covariates, triple-difference
designs, and repeated cross-sections. We also provide a second command,
\texttt{event\_plot}, for producing ``event study plots'' that visualize
the estimates with both our estimator and the alternative ones.

Our paper contributes to a growing methodological literature on event
studies. To the best of our knowledge, our paper is the first and
only one to characterize the underidentification and spurious identification
of long-run treatment effects that arise in conventional implementations
of event study designs. The negative weighting problem has received
more attention. It was first shown by \textcite[Supplement 1]{DeChaisemartin2015}.
The earlier manuscript of our paper \parencite{Borusyak2017} independently
pointed it out and additionally explained how it arises because of
forbidden comparisons and why it affects long-run effects in particular,
which we now discuss in \ref{subsec:Negative-Weighting} below. The
issue has since been further investigated by \textcite{Goodman-Bacon2021},
\textcite{Strezhnev2018}, and \textcite{DeChaisemartin2018}, while
\textcite{Abraham2018} have shown similar problems with dynamic specifications.
\textcite{Abraham2018} and \textcite{Roth2018a} have further uncovered
problems with conventional pre-trend tests, and \textcite{Schmidheiny2018}
have characterized the problems which arise from binning multiple
lags and leads in dynamic specifications. Besides being the first
to point out some of these issues, our paper provides a unifying econometric
framework which explicitly relates these issues to the conflation
of the target estimand and the underlying identification assumptions.

Several papers have proposed ways to address these problems, introducing
estimators that remain valid when treatment effects can vary arbitrarily
\parencite{DeChaisemartin2020,Abraham2018,Callaway2018,Marcus2020,Cengiz2019}.
An important limitation of these robust estimators is that their efficiency
properties are not known.\footnote{There are three notable exceptions. \textcite{Marcus2020} consider
a two-stage generalized method of moments (GMM) estimator and establish
its semiparametric efficiency under heteroskedasticity in a large-sample
framework with a fixed number of periods. However, they find this
estimator to be impractical, as it involves many moments, e.g. almost
as many as the number of observations in the application they consider.
Second, \textcite{Roth2021} characterize the efficient DiD estimator
which leverages random timing of the treatment, rather than a more
conventional parallel trends assumption, as we do. Finally, \textcite{harmon_DiD}
builds on our framework to characterize the efficiency properties
of difference-in-differences estimators when error terms follow a
random walk \textemdash{} the opposite case from our benchmark analysis
of efficiency which imposes no serial correlation of errors. In \ref{subsec:appx-GLS},
we generalize our results to intermediate cases, allowing for models
of heteroskedasticity and serial correlation.} A key contribution of our paper is to derive a practical, robust,
and finite-sample efficient estimator from first principles. We show
that this estimator takes a particularly transparent form under unrestricted
treatment effect heterogeneity, while our construction also yields
efficiency when some restrictions on treatment effects are imposed.
By clearly separating the testing of underlying assumptions from the
estimation step imposing these assumptions, we simultaneously increase
estimation efficiency and avoid problems with inference after pre-testing
under spherical errors. Our estimator uses all pre-treatment periods
for imputation, as appropriate under the standard DiD assumptions,
while alternative estimators use more limited information.\footnote{This efficiency gain relative to \textcite{DeChaisemartin2020} and
\textcite{Abraham2018} is obtained without stronger assumptions.
The \textcite{Callaway2018} assumptions are also equivalent to ours
when there is only one period before any unit is treated and there
are no covariates (see \textcite{Marcus2020}).}

In the MPX application, we find large gains of our imputation estimator:
the confidence interval is about 50\% longer for each week relative
to the rebate for \textcite{DeChaisemartin2020}, and 2\textendash 3.5
times longer for \textcite{Abraham2018} (which without extra controls
are equivalent to the two versions of the \textcite{Callaway2018}
estimator). We confirm these gains in a simulation study, finding
that the standard deviations of alternative robust estimators are
1.3\textendash 3.6 times higher with spherical errors, and that these
gains are generally preserved under heteroskedasticity and serial
correlation of errors.

Finally, our paper is related to a nascent literature that develops
robust estimators similar to the imputation estimator. To the best
of our knowledge, this idea has been first proposed for factor models
\parencite{Gobillon2016a,Xu2017}. \textcite{Athey2018b} consider
a general class of ``matrix-completion'' estimators for panel data
that first impute untreated potential outcomes by regularized factor-
and fixed-effects models and then average over the implied treatment-effect
estimates. The imputation idea has been explicitly applied to fixed-effect
estimators in event studies by \textcite{Liu2020a}, \textcite{Gardner2020a}, \textcite{Thakral2020}, and \textcite{thakral2023two}.
Specifically, the counterfactual estimator of \textcite{Liu2020a},
the two-stage estimator of \textcite{Gardner2020a}, \textcite{Thakral2020}, and \textcite{thakral2023two}, and a version
of the matrix-completion estimator from \textcite{Athey2018b} without
factors or regularization coincide with the imputation estimator in
our model for the specific class of estimands their papers consider.
Relative to these papers, we make four contributions: we derive a general imputation estimator from first principles, show its efficiency,
provide tools for valid asymptotic inference when unit fixed effects
are included, and show its robustness to pre-testing. Subsequently
to our work, \textcite{Wooldridge2021} derives a two-way Mundlak
estimator, which is also equivalent to the imputation estimator for
a restricted class of estimands in complete panels with controls that
are not allowed to change over time (but that may have time-varying
effects). The robustness and efficiency properties of our estimator
are not limited to those situations.

\section{Setting\label{sec:Setting}}

We consider estimation of causal effects of a binary treatment $D_{it}$
on an outcome $Y_{it}$ in a panel of units $i$ and periods $t$.
We focus on ``staggered rollout'' designs in which being treated
is an absorbing state. For each unit there is an event date $E_{i}$
when $D_{it}$ switches from 0 to 1 forever: $D_{it}=\one\left[K_{it}\ge0\right]$,
where $K_{it}=t-E_{i}$ is the number of periods since the event date
(``horizon''). Some units may never be treated, denoted by $E_{i}=\infty$.
Units with the same event date are referred to as a cohort.

We do not make any random sampling assumptions and work with a set
of observations $it\in\Omega$ of total size $N$, which may or may
not form a complete panel. We similarly view the event date for each
unit, and therefore all treatment indicators, as fixed. We define
the set of treated observations by $\Omega_{1}=\left\{ it\in\Omega\colon\ D_{it}=1\right\} $
of size $N_{1}$ and the set of untreated (i.e., never-treated and
not-yet-treated) observations by $\Omega_{0}=\left\{ it\in\Omega\colon\ D_{it}=0\right\} $
of size $N_{0}$.\footnote{Viewing the set of observations and event times as non-stochastic
is not essential. In \ref{subsec:Stochastic-Regressors}, we show
how this framework can be derived from one in which both are stochastic,
by appropriate conditioning. Our conditional framework avoids random
sampling assumptions made in other work on DiD designs (e.g. \cite{DeChaisemartin2018},
\cite{Abraham2018}, and \cite{Callaway2018}).}

We denote by $Y_{it}(0)$ the period-$t$ stochastic potential outcome
of unit $i$ if it is never treated. Causal effects on the treated
observations $it\in\Omega_{1}$ are denoted $\tau_{it}=\expec{Y_{it}-Y_{it}(0)}$.
We suppose a researcher is interested in a statistic which sums or
averages treatment effects $\tau=\left(\tau_{it}\right)_{it\in\Omega_{1}}$
over the set of treated observations with pre-specified non-stochastic
weights $w_{1}=\left(w_{it}\right)_{it\in\Omega_{1}}$ that can depend
on treatment assignment and timing, but not on realized outcomes:

\begin{target*}$\tau_{w}=\sum_{it\in\Omega_{1}}w_{it}\tau_{it}\equiv w_{1}^{\prime}\tau$.\end{target*}

\noindent For notation brevity, we consider scalar estimands.

Different weights are appropriate for different research questions.
The researcher may be interested in the overall ATT, formalized by
$w_{it}=1/N_{1}$ for all $it\in\Omega_{1}$. In event study analyses
a common estimand is the average effect $h$ periods since treatment
for a given horizon $h\ge0$: $w_{it}=\one\left[K_{it}=h\right]/\lvert\Omega_{1,h}\rvert$
for $\Omega_{1,h}=\left\{ it\colon\ K_{it}=h\right\} $. Our approach
also allows researchers to specify target estimands that place unequal
weights on units within the same cohort-by-horizon cell. For example,
one may be interested in weighting units by their size, or in estimating
a ``balanced'' version of horizon-average effects: the ATT at horizon
$h$ computed only for the subset of units also observed at horizon
$h^{\prime}$, such that the gap between two or more estimates is
not confounded by compositional differences. Finally, we do not require
the $w_{it}$ to add up to one; for example, a researcher may be interested
in the difference between average treatment effects at different horizons
or across some groups of units (e.g. women and men), corresponding
to $\sum_{it\in\Omega_{1}}w_{it}=0$.\footnote{More broadly, the choice of weights allows for estimation of treatment
effect heterogeneity by observed characteristics $R_{it}$. Indeed,
the slope of the linear projection of $\tau_{it}$ on some observable
$R_{it}$ (which may or may not be time-varying) is a weighted sum
of treatment effects, $\sum_{it\in\Omega_{1}}w_{it}\tau_{it}$ for
$w_{it}=\left(R_{it}-\bar{R}\right)/\sum_{js\in\Omega_{1}}(R_{js}-\bar{R})^{2}$
and $\bar{R}=\frac{1}{\lvert\Omega_{1}\rvert}\sum_{js\in\Omega_{1}}R_{js}$.
The same logic generalizes when $R_{it}$ is a vector, via the Frisch\textendash Waugh\textendash Lowell
theorem. This approach also allows for tests of restrictions on treatment
effect heterogeneity, e.g. to assess whether ATTs vary across time
horizons.}

To identify $\tau_{w}$, we consider three assumptions. We start with
the parallel-trends assumption, which imposes a two-way fixed effect
(TWFE) model on the untreated potential outcomes.
\begin{assumption}[Parallel trends]
\label{assu:A1} There exist non-stochastic $\alpha_{i}$ and $\beta_{t}$
such that $\expec{Y_{it}(0)}=\alpha_{i}+\beta_{t}$ for all $it\in\Omega$.\footnote{In estimation, we will set the fixed effect of either one unit or
one period to zero, such as $\beta_{1}=0$. This is without loss of
generality, since the TWFE model is otherwise over-parameterized.}
\end{assumption}
An equivalent formulation requires $\expec{Y_{it}(0)-Y_{it'}(0)}$
to be the same across units $i$ for all periods $t$ and $t'$ (whenever
$it$ and $it'$ are observed).

Parallel trend assumptions are standard in DiD designs, but their
details may vary. First, we impose the TWFE model on the entire sample.
Although weaker assumptions can be sufficient for identification of
$\tau_{w}$ \parencite[e.g.,][]{Callaway2021}, those alternative
restrictions depend on the realized treatment timing. Since parallel
trends is an assumption on \emph{potential} outcomes, we prefer its
stronger version which can be made \emph{a priori}.\footnote{Specifically, Assumption 4 in \textcite{Callaway2018} requires that
the TWFE model only holds for all treated observations ($D_{it}=1$),
observations directly preceding the treatment onset ($K_{it}=-1$),
and in all periods for never-treated units. Similarly, \textcite{Goodman-Bacon2021}
proposes to impose parallel trends on a ``variance-weighted'' average
of units, as the weakest assumption under which static specifications
we discuss in \ref{sec:Conventional-Practice} identify some average
of causal effects. While technically weaker, this assumption may be
hard to justify \emph{ex ante} without imposing parallel trends on
all units as it is unlikely that non-parallel trends will cancel out
by averaging.} Moreover, \ref{assu:A1} can be tested by using pre-treatment data,
while minimal assumptions cannot. Second, we impose \ref{assu:A1}
at the unit level, while sometimes it is imposed on cohort-level averages.
Our approach is in line with the practice of including unit, rather
than cohort, FEs in DiD analyses and allows us to avoid biases in
incomplete panels where the composition of units changes over time.
Moreover, we show in \ref{subsec:appx-Sampling} that, under random
sampling and without compositional changes, assumptions on cohort-level
averages imply \ref{assu:A1}.

Our framework extends immediately to richer models of $Y_{it}(0)$:\renewcommand{\theassumption}{1$^\prime$} 
\begin{assumption}[General model of $Y(0)$]
\label{assu:A1prime}For all $it\in\Omega$, $\expec{Y_{it}(0)}=A_{it}^{\prime}\lambda_{i}+X_{it}^{\prime}\delta$,
where $\lambda_{i}$ is a vector of unit-specific nuisance parameters,
$\delta$ is a vector of nuisance parameters associated with common
covariates, and $A_{it}$ and $X_{it}$ are known non-stochastic vectors.
\end{assumption}
\renewcommand{\theassumption}{\arabic{assumption}}\setcounter{assumption}{1} The
first term in this model of $Y_{it}(0)$ nests unit FEs, but also
allows to interact them with some observed covariates unaffected by
the treatment status, e.g. to include unit-specific trends. This term
looks similar to a factor model, but differs in that regressors $A_{it}$
are observed. The second term nests period FEs but additionally allows
any time-varying covariates, i.e. $X_{it}^{\prime}\delta=\beta_{t}+\tilde{X}_{it}^{\prime}\tilde{\delta}$.
In \ref{subsec:Stochastic-Regressors} we clarify that $X_{it}$ have
to be unaffected by treatment and strictly exogenous to be included
in the specification.

We next rule out anticipation effects, i.e. the causal effects of
being treated in the future on current outcomes (e.g. \cite{Abbring2003}):
\begin{assumption}[No anticipation effects]
\label{assu:A2}$Y_{it}=Y_{it}(0)$ for all $it\in\Omega_{0}$.
\end{assumption}
\ref{assu:A1,assu:A2} together imply that the observed outcomes $Y_{it}$
for untreated observations follow the TWFE model. It is straightforward
to weaken this assumption, e.g. by allowing anticipation for some
$k$ periods before treatment: this simply requires redefining event
dates to earlier ones. However, some form of this assumption is necessary
for DiD identification, as there would be no reference periods for
treated units otherwise.

Finally, researchers sometimes impose restrictions on causal effects,
explicitly or implicitly. For instance, $\tau_{it}$ may be assumed
to be homogeneous for all units and periods, or only depend on the
number of periods since treatment (but be otherwise homogeneous across
units and calendar periods). We will consider such restrictions as
a possible auxiliary assumption:
\begin{assumption}[Restricted causal effects]
\label{assu:A3} $B\tau=0$ for a known $M\times N_{1}$ matrix $B$
of full row rank.
\end{assumption}
It will be more convenient for us to work with an equivalent formulation
of \ref{assu:A3}, based on $N_{1}-M$ free parameters driving treatment
effects rather than $M$ restrictions on them:\renewcommand{\theassumption}{3$^\prime$} 
\begin{assumption}[Model of causal effects]
\label{assu:A3prime}$\tau=\Gamma\theta$, where $\theta$ is a $\left(N_{1}-M\right)\times1$
vector of unknown parameters and $\Gamma$ is a known $N_{1}\times\left(N_{1}-M\right)$
matrix of full column rank.
\end{assumption}
\renewcommand{\theassumption}{\arabic{assumption}}\setcounter{assumption}{3} \ref{assu:A3prime}
imposes a parametric model of treatment effects. For example, the
assumption that treatment effects all be the same, $\tau_{it}\equiv\theta_{1}$,
corresponds to $N_{1}-M=1$ and $\Gamma=\left(1,\dots,1\right)^{\prime}$.
Conversely, a ``null model'' $\tau_{it}\equiv\theta_{it}$ that
imposes no restrictions is captured by $M=0$ and $\Gamma=\mathbb{I}_{N_{1}}$.

If restrictions on the treatment effects are implied by economic theory,
imposing them will increase estimation power. Often, however, such
restrictions are implicitly imposed without an \emph{ex ante }justification,
but just because they yield a simple model for the outcome. We will
show in \ref{sec:Conventional-Practice} how estimators that rely
on this assumption can fail to estimate reasonable averages of treatment
effects, let alone the specific estimand~$\tau_{w}$, when the assumption
is violated.\footnote{We view the null \ref{assu:A3} as a conservative default. We note,
however, that this makes the assumptions inherently asymmetric in
that they impose restrictive models on potential control outcomes
$Y_{it}(0)$ (\ref{assu:A1}), but not on treatment effects $\tau_{it}$.
This asymmetry reflects the standard practice in staggered rollout
DiD designs and is natural when the structure of treatment effects
is \emph{ex ante }unknown, while our framework also accommodates the
case where the researcher is willing to impose structure. Restrictions
on treatment effects, when appropriate, are also useful for external
validity: unless some structure is imposed on treatment effects, one
cannot use estimates from past data to inform future policy, for instance
extending a given treatment to currently untreated units. However,
one can use our framework without restrictions to\textit{ }learn\textit{
}\textit{\emph{about the structure of treatment effects, e.g. whether
they vary across cohorts for each horizon}}.}

While we formulated our setting for staggered-adoption DiD designs
with binary treatments in panel data, our framework applies without
change in many related research designs. In \emph{repeated cross-sections},
a different random sample of units $i$ (e.g., individuals) from the
same groups $g(i)$ (e.g., regions) is observed in each period. Unit
FEs are not possible to include but can be replaced with group FEs
in \ref{assu:A1prime}: $\expec{Y_{it}(0)}=\alpha_{g(i)}+\beta_{t}$.
In \emph{triple-differences designs}, the data have two dimensions
in addition to periods, e.g. $i$ corresponds to a pair of region
$j(i)$ and demographic group $g(i)$. \ref{assu:A1prime} can be
specified as $\expec{Y_{it}(0)}=\alpha_{j(i)g(i)}+\alpha_{j(i)t}+\alpha_{g(i)t}$.\footnote{Another variation is when the outcome is measured in a single period
but across \emph{two cross-sectional dimensions}, such as regions
$i$ and birth cohorts $g$, with the treatment implemented in a set
of regions for the cohorts born after some cutoff period $E_{i}$
(e.g., \textcite{Hoynes2016}). Then one may write $\expec{Y_{ig}(0)}=\alpha_{i}+\beta_{g}$.} With \emph{non-binary treatment intensity}, our setting applies if
each unit is observed untreated before $E_{i}$ and treated with heterogenous
intensity $R_{it}\ne0$ (that may or may not vary over time) from
period $E_{i}$. \ref{assu:A1,assu:A2} can apply, and the researcher
can consider estimands such as the ``ATT per unit of intensity'',
$\frac{1}{\left|\Omega_{1}\right|}\expec{\sum_{it\in\Omega_{1}}\left(Y_{it}-Y_{it}(0)\right)/R_{it}}$,
by setting $w_{it}$ proportionally to $1/R_{it}$. The challenges
we describe in \ref{sec:Conventional-Practice} for standard staggered
DiDs and the solutions of \ref{sec:Imputation-Solution} directly
apply in all of these cases.\footnote{This is also the case of \emph{non-staggered DiD} designs, in which
units receive treatment in a single period or never. Our insights
in \ref{sec:Conventional-Practice} and \ref{sec:Imputation-Solution}
are still relevant if continuous covariates or unit-specific trends
are included (see \textcite{SantAnna2020} and \textcite{Wolfers2003}
for related ideas).}

\section{Challenges Pertaining to Conventional Practice\label{sec:Conventional-Practice}}

In this section, we first introduce the common two-way fixed effects
regressions with restricted treatment effect heterogeneity that have
traditionally been used in DiD designs. We then discuss several estimation
challenges that pertain to these specifications, including underidentification
in certain dynamic specifications, negative weighting, and spurious
identification of long-run causal effects. We conclude the section
by discussing how our framework also relates to other problems that
have been pointed out by \textcite{Roth2018a} and \textcite{Abraham2018}.

\subsection{Conventional Restrictive Specifications in Staggered Adoption DiD}

Causal effects in staggered adoption DiD designs have traditionally
been estimated via OLS regressions with two-way fixed effects, using
specifications that implicitly restrict treatment effect heterogeneity
across units. While details may vary, the following specification
covers many studies:
\begin{equation}
Y_{it}=\tilde{\alpha}_{i}+\tilde{\beta}_{t}+\sum_{\substack{h=-a\\
h\ne-1
}
}^{b-1}\tau_{h}\one\left[K_{it}=h\right]+\tau_{b+}\one\left[K_{it}\ge b\right]+\varepsilon_{it},\label{eq:dynamicOLS}
\end{equation}
Here $\tilde{\alpha}_{i}$ and $\tilde{\beta}_{t}$ are the unit and
period (``two-way'') fixed effects, $a\ge0$ and $b\ge0$ are the
numbers of included ``leads'' and ``lags'' of the event indicator,
respectively, and $\varepsilon_{it}$ is the error term. The first
lead, $\one\left[K_{it}=-1\right]$, is often excluded as a normalization,
while the coefficients on the other leads (if present) are interpreted
as measures of ``pre-trends,'' and the hypothesis that $\tau_{-a}=\dots=\tau_{-2}=0$
is tested visually or statistically. Conditionally on this test passing,
the coefficients on the lags are interpreted as a dynamic path of
causal effects: at $h=0,\dots,b-1$ periods after treatment and, in
the case of $\tau_{b+}$, at longer horizons binned together. We will
refer to this specification as \emph{``dynamic''} (as long as $a+b>0$)
and, more specifically, \emph{``fully-dynamic''} if it includes
all available leads and lags except $h=-1$, or \emph{``semi-dynamic''}
if it includes all lags but no leads.

Viewed through the lens of the \ref{sec:Setting} framework, these
specifications make implicit assumptions on untreated potential outcomes,
anticipation and treatment effects, and the estimand of interest.
First, they make \ref{assu:A1} but, for $a>0$, do not fully impose
\ref{assu:A2}, allowing for anticipation effects for $a$ periods
before treatment.\footnote{One can alternatively view this specification as imposing \ref{assu:A2}
but making a weaker \ref{assu:A1} which includes some pre-trends
into $Y_{it}(0)$. This difference in interpretation is immaterial
for our results.} Typically this is done as a means to \emph{test} \ref{assu:A2} rather
than to \emph{relax} it, but the resulting specification is the same.
Second, equation \ref{eq:dynamicOLS} imposes strong restrictions
on causal effect heterogeneity (\ref{assu:A3}), with treatment (and
anticipation) effects assumed to only vary by horizon $h$ and not
across units and periods otherwise. Most often, this is done without
an \emph{a priori }justification. If the lags are binned into the
term with $\tau_{b+}$, the effects are further assumed to be time-invariant
once $b$ periods have elapsed since the event. Finally, dynamic specifications
do not explicitly define the estimands $\tau_{h}$ as particular averages
of heterogeneous causal effects, even though researchers often consider
that effects may vary across observations, as evidenced by a literature
on the interpretation of OLS estimands going back to at least \textcite{Angrist1998a}
and \textcite{humphreys2009bounds}.

Besides dynamic specifications, equation \ref{eq:dynamicOLS} also
nests a very common specification used when a researcher is interested
in a single parameter summarizing all causal effects. With $a=b=0$,
we have the \emph{``static''} specification in which a single treatment
indicator is included:
\begin{equation}
Y_{it}=\tilde{\alpha}_{i}+\tilde{\beta}_{t}+\tau^{\text{static}}D_{it}+\varepsilon_{it}.\label{eq:staticOLS}
\end{equation}
In line with our \ref{sec:Setting} setting, the static equation imposes
the parallel trends and no anticipation \ref{assu:A1,assu:A2}. However,
it also makes a particularly strong version of \ref{assu:A3} \textemdash{}
that all treatment effects are the same. Moreover, the target estimand
is again not written out as an explicit average of potentially heterogeneous
causal effects.

In the rest of this section we turn to the challenges associated with
OLS estimation of equations \ref{eq:dynamicOLS} and \ref{eq:staticOLS}.
We explain how these issues result from the conflation of the target
estimand, \ref{assu:A2} and \ref{assu:A3}, providing a new and unified
perspective on the problems of static and dynamic specifications with
restricted treatment effect heterogeneity.

\subsection{Under-Identification of the Fully-Dynamic Specification\label{subsec:Underidentification}}

The first problem pertains to fully-dynamic specifications and arises
because a strong enough \ref{assu:A2} is not imposed. We show that
those specifications are under-identified if there is no never-treated
group:
\begin{prop}
\label{prop:Underid}If there are no never-treated units, the path
of $\left\{ \tau_{h}\right\} _{h\ne-1}$ coefficients is not point-identified
in the fully-dynamic specification. In particular, for any $\kappa\in\mathbb{R}$,
the path $\left\{ \tau_{h}+\kappa\left(h+1\right)\right\} $ fits
the data equally well, with the fixed effect coefficients appropriately
modified.
\end{prop}
\begin{proof}
All proofs are given in \ref{sec:appx-proofs}.
\end{proof}
To illustrate this result with a simple example, \ref{fig:UnderId}
plots the outcomes for a simulated dataset with two units (or equal-sized
cohorts), one treated at $t=2$ and the other at $t=4$. Both units
exhibit linear growth in the outcome, starting from different levels.
There are two interpretations of these dynamics. First, treatment
could have no impact on the outcome, in which case the level difference
corresponds to the unit FEs, while trends are just a common feature
of the environment, through period FEs. Alternatively, note that the
outcome equals the number of periods since the event for both groups
and all time periods: it is zero at the moment of treatment, negative
before, and positive after. A possible interpretation is that the
outcome is entirely driven by causal effects and anticipation of treatment.
Thus, one cannot hope to distinguish between unrestricted dynamic
causal effects and a combination of unit effects and time trends.\footnote{Formally, the problem arises because a linear time trend $t$ and
a linear term in the cohort $E_{i}$ (subsumed by the unit FEs) can
perfectly reproduce a linear term in horizon $K_{it}=t-E_{i}$. Therefore,
a complete set of treatment leads and lags, which is equivalent to
the horizon FEs, is collinear with the unit and period FEs.}

\begin{figure}
\caption{Underidentification of Fully-Dynamic Specification\label{fig:UnderId}}

\begin{centering}
\includegraphics[width=0.35\textwidth]{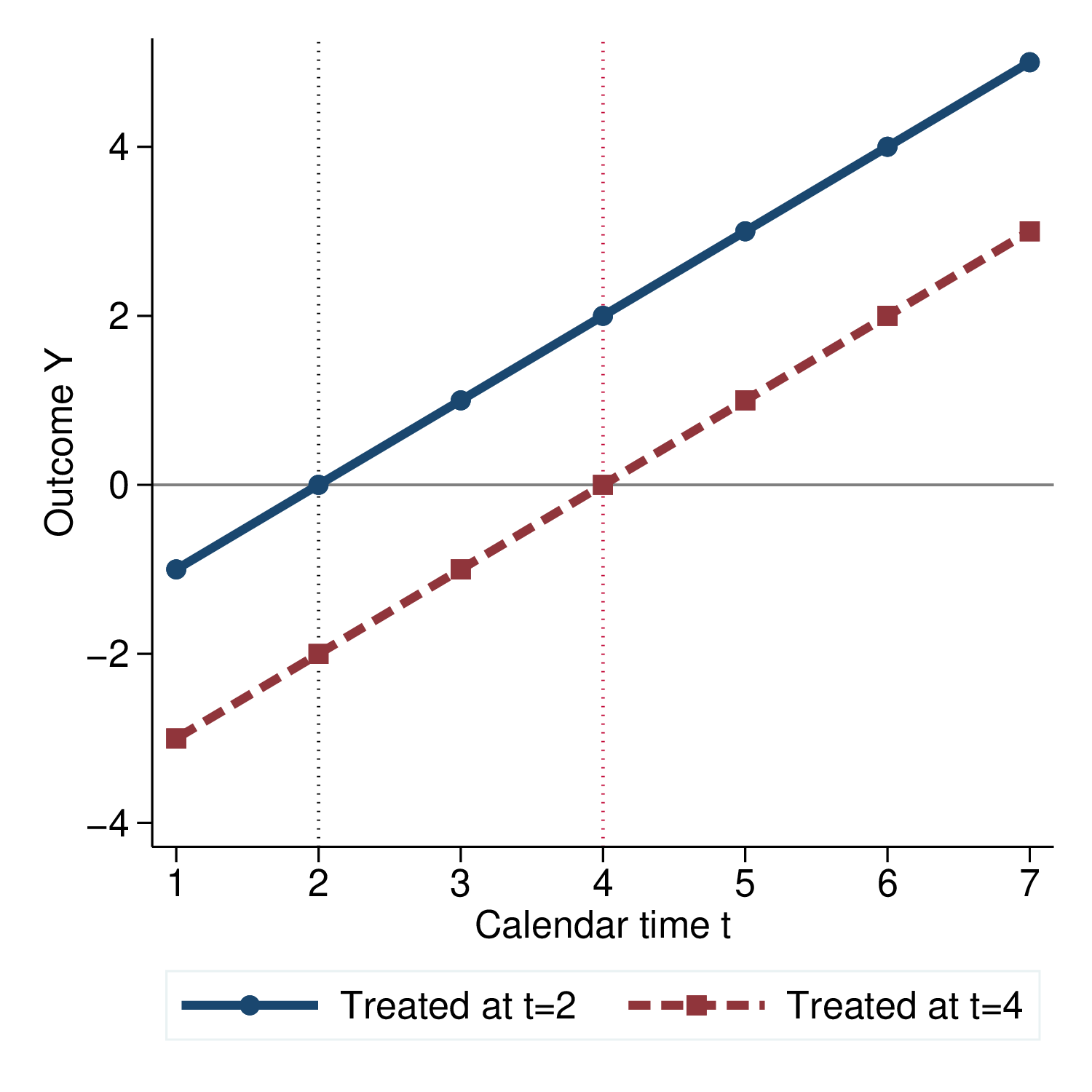}
\par\end{centering}
\raggedright{}\emph{\footnotesize{}Notes}{\footnotesize{}: This figure
shows the evolution of outcomes over seven periods for two units (or
cohorts), for the illustrative example of \ref{subsec:Underidentification}.
Vertical lines mark the periods in which the two units are first treated}{\small{}.}{\small\par}
\end{figure}

The problem may be important in practice, as statistical packages
may resolve this collinearity by dropping an arbitrary unit or period
indicator. Some estimates of $\left\{ \tau_{h}\right\} $ would then
be produced, but because of an arbitrary trend in the coefficients
they may suggest a violation of parallel trends even when the specification
is in fact correct, i.e.~\ref{assu:A1,assu:A2} hold and there is
no heterogeneity of treatment effects for each horizon (\ref{assu:A3}).

To break the collinearity problem, stronger restrictions on anticipation
effects, and thus on $Y_{it}$ for untreated observations, have to
be introduced. One could consider imposing minimal restrictions on
the specification that would make it identified. In typical cases,
only a linear trend in $\left\{ \tau_{h}\right\} $ is not identified
in the fully dynamic specification, while nonlinear paths cannot be
reproduced with unit and period fixed effects. Therefore, just one
additional normalization, e.g. $\tau_{-a}=0$ in addition to $\tau_{-1}=0$,
breaks multicollinearity.\texttt{}\footnote{Additional collinearity arises, e.g., when treatment is staggered
but happens at periodic intervals.}

However, minimal identified models rely on \emph{ad hoc }identification
assumptions which are \emph{a priori }unattractive. For instance,
just imposing $\tau_{-a}=\tau_{-1}=0$ means that anticipation effects
are assumed away $1$ and $a$ periods before treatment, but not in
other pre-periods. This assumption therefore depends on the realized
event times. Instead, a systematic approach is to impose the assumptions
\textemdash{} some forms of no anticipation effects and parallel trends
\textemdash{} that the researcher has an \emph{a priori }argument
for and which motivated the use of DiD. Such assumptions also give
much stronger identification power.\footnote{Our suggestion to impose identification assumptions at the estimation
stage does not mean that those assumptions should not also be tested;
we discuss testing in detail in \ref{subsec:Testing-PTA}.}

\subsection{Negative Weighting in the Static Regression\label{subsec:Negative-Weighting}}

We now show how, by imposing \ref{assu:A3} instead of specifying
the estimation target, the static TWFE specification does not identify
a reasonably-weighted average of heterogeneous treatment effects:
the underlying weights may be negative, particularly for the long-run
causal effects. The issues we discuss here also arise in dynamic specifications
that bin multiple lags together.

First, we note that, if the parallel-trends and no-anticipation assumptions
hold, the static specification identifies \emph{some }weighted average
of treatment effects:\footnote{This result was previously stated in Theorem 1 of \textcite{DeChaisemartin2018}
for general designs, and later in Appendix C of \textcite{Borusyak2017}
for staggered adoption designs.}
\begin{prop}
\label{prop:OLS-linear}If \ref{assu:A1,assu:A2} hold, then the estimand
of the static specification in \ref{eq:staticOLS} satisfies $\tau^{\text{static}}=\sum_{it\in\Omega_{1}}w_{it}^{\text{static}}\tau_{it}$
for some weights $w_{it}^{\text{static}}$ that do not depend on the
outcome realizations and add up to one, $\sum_{it\in\Omega_{1}}w_{it}^{\text{static}}=1$.
\end{prop}
The underlying weights $w_{it}^{\text{static}}$ can be computed from
the data using the Frisch\textendash Waugh\textendash Lovell theorem
(see equation \ref{eq:weights-FWL} in the proof of \ref{prop:OLS-linear})
and only depend on the timing of treatment for each unit and the set
of observed units and periods. The static specification's estimand,
however, cannot be interpreted as a \emph{proper} weighted average,
as some weights can be negative, which we illustrate with a simple
example:
\begin{prop}
\label{prop:example-static}Suppose \ref{assu:A1,assu:A2} hold and
the data consist of two units (or equal-sized cohorts), $A$ and $B$,
treated in periods 2 and 3, respectively, both observed in periods
$t=1,2,3$ (as shown in \ref{tab:twobythree}). Then the estimand
of the static specification \ref{eq:staticOLS} can be expressed as
$\tau^{\text{static}}=\tau_{A2}+\frac{1}{2}\tau_{B3}-\frac{1}{2}\tau_{A3}$.
\end{prop}
\begin{table}[H]
\begin{centering}
\caption{Two-Unit, Three-Period Example{\small{}\label{tab:twobythree}}}
\medskip{}
\par\end{centering}
\begin{centering}
{\footnotesize{}}%
\begin{tabular}{cll}
\toprule 
{\footnotesize{}$\expec{Y_{it}}$} & \multicolumn{1}{c}{{\footnotesize{}$i=A$}} & \multicolumn{1}{c}{{\footnotesize{}$i=B$}}\tabularnewline
\midrule
{\footnotesize{}$t=1$} & {\footnotesize{}$\alpha_{A}$} & {\footnotesize{}$\alpha_{B}$}\tabularnewline
{\footnotesize{}$t=2$} & {\footnotesize{}$\alpha_{A}+\beta_{2}+\tau_{A2}$} & {\footnotesize{}$\alpha_{B}+\beta_{2}$}\tabularnewline
{\footnotesize{}$t=3$} & {\footnotesize{}$\alpha_{A}+\beta_{3}+\tau_{A3}$} & {\footnotesize{}$\alpha_{B}+\beta_{3}+\tau_{B3}$}\tabularnewline
\midrule
{\footnotesize{}Event date} & \multicolumn{1}{c}{{\footnotesize{}$E_{i}=2$}} & \multicolumn{1}{c}{{\footnotesize{}$E_{i}=3$}}\tabularnewline
\bottomrule
\end{tabular}{\footnotesize\par}
\par\end{centering}
\begin{raggedright}
\medskip{}
\par\end{raggedright}
\raggedright{}\emph{\footnotesize{}Notes}{\footnotesize{}: This table
shows the evolution of expected outcomes over three periods for two
units (or cohorts), for the illustrative example of \ref{prop:example-static}.
Without loss of generality, we normalize $\beta_{1}=0$.}{\footnotesize\par}
\end{table}

This example illustrates the severe short-run bias of the static specification:
the long-run causal effect, corresponding to the early-treated unit
$A$ and the late period 3, enters with a negative weight ($-1/2$).
Thus, larger long-run effects make the coefficient smaller.

This problem results from what we call ``forbidden comparisons''
performed by the static specification. Recall that the original idea
of DiD estimation is to compare the evolution of outcomes over some
time interval for the units which got treated during that interval
relative to a reference group of units which didn't, identifying the
period FEs. In the \ref{prop:example-static} example, such an ``admissible''
comparison is between units $A$ and $B$ in periods 2 and 1, $\left(Y_{A2}-Y_{A1}\right)-\left(Y_{B2}-Y_{B1}\right)$.
However, panels with staggered treatment timing also lend themselves
to a second type of comparisons \textemdash{} which we label ``forbidden''
\textemdash{} in which the reference group has been treated throughout
the relevant period. For units in this group, the treatment indicator
$D_{it}$ does not change over the relevant period, and so the restrictive
specification uses them to identify period FEs, too. The comparison
between units $B$ and $A$ in periods 3 and 2, $\left(Y_{B3}-Y_{B2}\right)-\left(Y_{A3}-Y_{A2}\right)$,
in \ref{prop:example-static} is a case in point. While a comparison
like this is appropriate and increases efficiency when treatment effects
are homogeneous (which the static specification was designed for),
forbidden comparisons are problematic under treatment effect heterogeneity.
For instance, subtracting $\left(Y_{A3}-Y_{A2}\right)$ not only removes
the gap in period FEs, $\beta_{3}-\beta_{2}$, but also deducts the
evolution of treatment effects $\tau_{A3}-\tau_{A2}$, placing a negative
weight on $\tau_{A3}$. The restrictive specification leverages comparisons
of both types and estimates the treatment effect by $\hat{\tau}^{static}=\left(Y_{B2}-Y_{A2}\right)-\frac{1}{2}\left(Y_{B1}-Y_{A1}\right)-\frac{1}{2}\left(Y_{B3}-Y_{A3}\right)$.\footnote{The proof of \ref{prop:OLS-linear} shows why long-run effects in
particular are subject to the negative weights problem. In general,
negative weights arise for the treated observations, for which the
residual from an auxiliary regression of $D_{it}$ on the two-way
FEs is negative. \textcite{DeChaisemartin2018} show that, in complete
panels, the unit FEs are higher for early-treated units (which are
observed treated for a larger shares of periods) and period FEs are
higher for later periods (in which a larger shares of units are treated).
The early-treated units observed in later periods correspond to the
long-run effects.}

Fundamentally, this problem arises because the specification imposes
very strong restrictions on treatment effect homogeneity, i.e. \ref{assu:A3},
instead of acknowledging the heterogeneity and specifying a particular
target estimand (or perhaps a class of estimands that the researcher
is indifferent between).

With a large number of never-treated units or a large number of periods
before any unit is treated (relative to other units and periods),
our setting becomes closer to a classical non-staggered DiD design,
and therefore negative weights disappear, as our next result illustrates:
\begin{prop}
\label{prop:anynegweights}Suppose all units are observed for all
periods $t=1,\dots,T$ and the earliest treatment happens at $E_{\text{first}}>1$.
Let $N_{1}^{*}$ be the number of observations for never-treated units
before period $E_{\text{first}}$ and $N_{0}^{*}$ be the number of
untreated observations for ever-treated units since $E_{\text{first}}$.
Then there is no negative weighting, i.e. $\min_{it\in\Omega_{1}}w_{it}^{\text{static}}\ge0$,
if and only if $N_{1}^{*}\ge N_{0}^{*}$.\footnote{$N_{1}^{\ast}$ and $N_{0}^{\ast}$ respectively correspond to the
numbers of admissible and forbidden 2x2 DiD comparisons available
for the earliest-treated units in the latest period $T$. The gap
between them drives negative weights with complete panels, as in \textcite[Proposition 1]{Strezhnev2018}.}
\end{prop}
Even when weights are non-negative, they may remain highly unequal
and diverge from the estimands that the researcher is interested in.
Our preferred strategy is therefore to commit to the estimation target
and explicitly allow for treatment effect heterogeneity, except when
some form of \ref{assu:A3} is \emph{ex ante }appropriate.

\subsection{Spurious Identification of Long-Run Effects in Dynamic Specifications\label{subsec:Spurious}}

Another consequence of inappropriately imposing \ref{assu:A3} concerns
estimation of long-run causal effects. Conventional dynamic specifications
(except those subject to the underidentification problem) yield \emph{some
}estimates for all $\tau_{h}$ coefficients. Yet, for large enough
$h$, no averages of treatment effects are identified under \ref{assu:A1,assu:A2}
with unrestricted treatment effect heterogeneity. Therefore, estimates
from restrictive specifications are fully driven by unwarranted extrapolation
of treatment effects across observations and may not be reliable,
unless strong \emph{ex ante }reasons for \ref{assu:A3} exist.

This issue is well illustrated in the example of \ref{prop:example-static}.
To identify the long-run effect $\tau_{A3}$ under \ref{assu:A1,assu:A2},
one needs to form an admissible DiD comparison, of the outcome growth
over some period between unit $A$ and another unit not yet treated
in period 3. However, by period 3 both units have been treated. Mechanically,
this problem arises because the period fixed effect $\beta_{3}$ is
not identified separately from the treatment effects $\tau_{A3}$
and $\tau_{B3}$ in this example, absent restrictions on treatment
effects. Yet, the semi-dynamic specification
\[
Y_{it}=\tilde{\alpha}_{i}+\tilde{\beta}_{t}+\tau_{0}\one\left[K_{it}=0\right]+\tau_{1}\one\left[K_{it}=1\right]+\tilde{\varepsilon}_{it}
\]
will produce an estimate $\hat{\tau}_{1}$ via extrapolation. Specifically,
two different parameters, $\tau_{A3}-\tau_{B3}$ and $\tau_{A2}$,
are identified by comparing the two units in periods 2 or 3, respectively,
with period 1. Therefore, when imposing homogeneity of short-run effects
across units, $\tau_{A2}=\tau_{B3}\equiv\tau_{0}$, we estimate the
long-run effect $\tau_{A3}\equiv\tau_{1}$ as the sum of $\tau_{1}-\tau_{0}$
and $\tau_{0}$:
\[
\hat{\tau}_{1}=\left[\left(Y_{A3}-Y_{A1}\right)-\left(Y_{B3}-Y_{B1}\right)\right]+\left[\left(Y_{A2}-Y_{A1}\right)-\left(Y_{B2}-Y_{B1}\right)\right].
\]
However, when $\tau_{A2}\ne\tau_{B3}$, this estimator is biased.

In general, the gap between the earliest and the latest event times
observed in the data provides an upper bound on the number of dynamic
coefficients that can be identified without extrapolation of treatment
effects. This result, which follows by the same logic of non-identification
of the later period effects, is formalized by our next proposition:
\begin{prop}
\label{prop:No-Longrun}Suppose there are no never-treated units and
let $\bar{H}=\max_{i}E_{i}-\min_{i}E_{i}.$ Then, for any non-negative
weights $w_{it}$ defined over the set of observations with $K_{it}\ge\bar{H}$
(that are not identically zero), the weighted sum of causal effects
$\sum_{it\colon K_{it}\ge\bar{H}}w_{it}\tau_{it}$ is not identified
by \ref{assu:A1,assu:A2}.\footnote{The requirement that the weights are non-negative rules out some estimands
on the \emph{gaps }between treatment effects for $K_{it}\ge\bar{H}$
which are in fact identified. For instance, adding period $t=4$ to
the \ref{tab:twobythree} example, the difference $\tau_{A4}-\tau_{B4}$
would be identified (by $\left(Y_{A4}-Y_{B4}\right)-\left(Y_{A1}-Y_{B1}\right)$),
even though neither $\tau_{A4}$ nor $\tau_{B4}$ is identified.}
\end{prop}
\noindent Robust estimators, including the one we characterize in
\ref{sec:Imputation-Solution}, can only be computed for identified
estimands, never resulting in spurious estimates.

We finally note that the challenges described in \ref{sec:Conventional-Practice}
apply even if the sample is ``trimmed'' to a fixed window around
the event time; see \ref{subsec:Trimming}.

\section{Imputation-Based Estimation and Testing\label{sec:Imputation-Solution}}

To overcome the challenges affecting conventional practice, we now
derive the robust and efficient estimator and show that it takes a
particularly transparent ``imputation'' form when no restrictions
on treatment-effect heterogeneity are imposed. We then perform asymptotic
analysis, establishing the conditions for the estimator to be consistent
and asymptotically normal, derive conservative standard error estimates
for it, and discuss appropriate pre-trend tests.

Throughout, we continue to suppose that the researcher chose the estimation
target $\tau_{w}$ and assumed a model of $Y_{it}(0)$ (\ref{assu:A1prime})
and no anticipation. Some model of treatment effects (\ref{assu:A3})
may also be assumed, although our main focus is on the null model,
under which treatment effect heterogeneity is unrestricted. Letting
$\varepsilon_{it}=Y_{it}-\expec{Y_{it}}$ for $it\in\Omega$, we thus
have under \ref{assu:A1prime,assu:A2,assu:A3prime}:
\begin{equation}
Y_{it}=A_{it}^{\prime}\lambda_{i}+X_{it}^{\prime}\delta+D_{it}\Gamma_{it}^{\prime}\theta+\varepsilon_{it}.\label{eq:reduced-model}
\end{equation}
We assume throughout that $\tau_{w}=w_{1}'\Gamma\theta$ is identified.
\ref{prop:identification} provides conditions for identification.
First, we provide a general rank condition on the matrices of unit-specific
and other covariates that requires that the covariate space of treated
observations is spanned by that of the untreated ones. This assumption
allows us to estimate from the untreated observations those nuisance
parameters that are necessary to impute control outcomes of the treated
observations, thus providing identification. Second, we derive specific
conditions for the case where the parameter $\delta$ represents time
fixed effects and $A_{it}$ may vary over time but not across units
(as with unit FEs and unit-specific linear trends). In this case,
we show that there is identification if (i) the $A_{it}$ are not
collinear for any relevant unit and (ii) there is at least one untreated
unit at the end of the time period of interest.

\subsection{Efficient Estimation}

For our efficiency result, we impose an additional assumption on the
error variances:
\begin{assumption}[Spherical errors]
\label{assu:homosk}Error terms $\varepsilon_{it}$ are spherical,
i.e. homoskedastic and mutually uncorrelated across all $it\in\Omega$:
$\expec{\varepsilon\varepsilon^{\prime}}=\sigma^{2}\mathbb{I}_{N}$.
\end{assumption}
While this assumption is strong, our efficiency results also apply
without change under dependence that is due to unit random effects,
i.e. if $\varepsilon_{it}=\eta_{i}+\tilde{\varepsilon}_{it}$ for
$\tilde{\varepsilon}_{it}$ that satisfy \ref{assu:homosk} and for
some $\eta_{i}$. Moreover, these results are straightforward to relax
to any known form of heteroskedasticity or mutual dependence.\footnote{For instance, if error terms are uncorrelated and have known variances
$\sigma_{it}^{2}$ (up to a scaling factor), efficiency requires the
estimation step (Step 1) of \ref{thm:OLS-BLUE,thm:imputation} to
be performed with weights proportional to $\sigma_{it}^{-2}$. One
example for this is when the data are aggregated from $n_{it}$ individuals
randomly drawn from group $i$ in period $t$ and spherical individual-level
errors, in which case efficiency is obtained with weights proportional
to $n_{it}$.} Under \ref{assu:homosk} and allowing for restrictions on causal
effects, we have:
\begin{thm}[Efficient estimator]
\label{thm:OLS-BLUE}Suppose \ref{assu:A1prime,assu:A2,assu:A3prime,assu:homosk}
hold. Then among the linear unbiased estimators of $\tau_{w}$, the
(unique) efficient estimator $\hat{\tau}_{w}^{\ast}$ can be obtained
with the following steps:
\begin{enumerate}
\item Estimate $\theta$ by the OLS solution $\hat{\theta}^{*}$ from the
regression \ref{eq:reduced-model} (where we assume that $\theta$
is identified);
\item Estimate the vector of treatment effects $\tau$ by $\hat{\tau}^{*}=\Gamma\hat{\theta}^{*}$;
\item Estimate the target $\tau_{w}$ by $\hat{\tau}_{w}^{\ast}=w_{1}^{\prime}\hat{\tau}^{*}$.
\end{enumerate}
Moreover, this estimator $\hat{\tau}_{w}^{*}$ is unbiased for $\tau_{w}$
under \ref{assu:A1prime,assu:A2,assu:A3prime} alone, even when error
terms are not spherical.
\end{thm}
Under \ref{assu:A1prime,assu:A2,assu:A3prime}, regression \ref{eq:reduced-model}
is correctly specified. Thus, this estimator for $\theta$ is unbiased
by construction, and efficiency under spherical error terms is a direct
consequence of the Gauss\textendash Markov theorem. Moreover, OLS
yields the most efficient estimator for any linear combination of
$\theta$, including $\tau_{w}=w_{1}^{\prime}\Gamma\theta$. While
assuming spherical errors may be unrealistic in practice, we think
of this assumption as a natural conceptual benchmark to decide between
the many unbiased estimators of $\tau_{w}$.\footnote{This benchmark appears natural as it parallels the Gauss-Markov theorem
which also relies on spherical errors. In Monte Carlo simulations
(\ref{subsec:Monte-Carlo-BP-Appx-NEW}), the estimator performs well
even under deviations from spherical errors. In \ref{subsec:appx-GLS}
we generalize the results to parametric models of heteroskedasticity
and serial correlation, in the spirit of generalized least squares
(GLS) and relating to \textcite{Wooldridge2021}.}

In the important special case of unrestricted treatment effect heterogeneity,
$\hat{\tau}_{w}^{*}$ has a useful ``imputation'' representation.
The idea is to estimate the model of $Y_{it}(0)$ using the untreated
observations $it\in\Omega_{0}$ and leverage it to impute $Y_{it}(0)$
for treated observations $it\in\Omega_{1}$. Then, observation-specific
causal effect estimates can be averaged appropriately. Perhaps surprisingly,
the estimation and imputation steps are identical regardless of the
target estimand. Applying any weights to the imputed causal effects
yields the efficient estimator for the corresponding estimand. We
have:
\begin{thm}[Imputation representation for the efficient estimator]
\label{thm:imputation}With a null \ref{assu:A3prime} (that is,
if $\Gamma=\I_{N_{1}}$), the unique efficient linear unbiased estimator
$\hat{\tau}_{w}^{\ast}$ of $\tau_{w}$ from \ref{thm:OLS-BLUE} can
be obtained via an imputation procedure:
\begin{enumerate}
\item Within the untreated observations only ($it\in\Omega_{0}$), estimate
the $\lambda_{i}$ and $\delta$ (by $\hat{\lambda}_{i}^{*},\hat{\delta}^{*}$)
by OLS in
\begin{equation}
Y_{it}=A_{it}^{\prime}\lambda_{i}+X_{it}^{\prime}\delta+\varepsilon_{it};\label{eq:imputation-regression}
\end{equation}
\item For each treated observation ($it\in\Omega_{1}$) with $w_{it}\ne0$,
set $\hat{Y}_{it}(0)=A_{it}^{\prime}\hat{\lambda}_{i}^{*}+X_{it}^{\prime}\hat{\delta}^{*}$
and $\hat{\tau}_{it}^{*}=Y_{it}-\hat{Y}_{it}(0)$ to obtain the estimate
of $\tau_{it}$;
\item Estimate the target $\tau_{w}$ by a weighted sum $\hat{\tau}_{w}^{\ast}=\sum_{it\in\Omega_{1}}w_{it}\hat{\tau}_{it}^{*}$.
\end{enumerate}
\end{thm}
The imputation representation offers computational and conceptual
benefits. First, it is computationally efficient as it only requires
estimating a simple TWFE model, for which fast algorithms are available
\parencite{Guimaraes2010,Correia2017}. This is in contrast to the
OLS estimator from \ref{thm:OLS-BLUE}, as equation \ref{eq:reduced-model}
has regressors $\Gamma_{it}D_{it}$ in addition to the fixed effects,
which are high-dimensional unless a low-dimensional model of treatment
effect heterogeneity is imposed.

Second, the imputation approach is intuitive and transparently links
the parallel trends and no-anticipation assumptions to the estimator.
Indeed, \textcite{imbens2015causal} write: \emph{``At some level,
all methods for causal inference can be viewed as imputation methods,
although some more explicitly than others'' }(p.~141). We formalize
this statement in the next proposition, which shows that \emph{any}
estimator unbiased for $\tau_{w}$ can be represented in the imputation
way, but the way of imputing the $Y_{it}(0)$ may be less explicit
and no longer efficient.
\begin{prop}[Imputation representation for all unbiased estimators]
 \label{prop:generalimputation}Under \ref{assu:A1prime,assu:A2},
any linear estimator $\hat{\tau}_{w}$ of $\tau_{w}$ that is unbiased
under arbitrary treatment-effect heterogeneity (that is, a null \ref{assu:A3})
can be obtained via imputation:
\begin{enumerate}
\item For every treated observation, estimate expected untreated potential
outcomes $A_{it}'\lambda_{i}+X_{it}'\delta$ by some unbiased linear
estimator $\hat{Y}_{it}(0)$ using data from the untreated observations
only;
\item For each treated observation, set $\hat{\tau}_{it}=Y_{it}-\hat{Y}_{it}(0)$;
\item Estimate the target by a weighted sum $\hat{\tau}_{w}=\sum_{it\in\Omega_{1}}w_{it}\hat{\tau}_{it}$.
\end{enumerate}
\end{prop}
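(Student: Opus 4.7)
The plan is to take any unbiased linear estimator $\hat{\tau}_w = \sum_{it \in \Omega} a_{it}Y_{it}$ and reorganize it explicitly into imputation form, by (i) pinning down the coefficients on treated outcomes, (ii) fixing a reference imputation that uses only the untreated observations, and (iii) absorbing the residual discrepancy into one of the imputed values. The bulk of the argument is linear-algebra bookkeeping, once the right decomposition is in place.

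First, I would exploit the null \ref{assu:A2}. Under \ref{assu:A1prime}, \ref{assu:NoAnticip}, and unrestricted $\tau$,
\[
\E[\hat{\tau}_w] \;=\; \sum_{it \in \Omega} a_{it}\bigl(A_{it}'\lambda_i + X_{it}'\delta\bigr) \;+\; \sum_{it \in \Omega_1} a_{it}\tau_{it},
\]
which must equal $\sum_{it \in \Omega_1} w_{it}\tau_{it}$ for \emph{all} parameter values. Matching the $\tau$-linear part pins down $a_{it} = w_{it}$ for every $it \in \Omega_1$, so $\hat{\tau}_w = \sum_{it \in \Omega_1} w_{it}Y_{it} + R$ with $R := \sum_{js \in \Omega_0} a_{js}Y_{js}$ a linear functional of the untreated outcomes; matching the remaining part forces $\E[R] = 0$ identically in $(\lambda,\delta)$.

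Second, for each treated $it$ with $w_{it}\ne 0$, I would fix any unbiased linear estimator $\hat{Y}^{0}_{it}(0)$ of $A_{it}'\lambda_i + X_{it}'\delta$ constructed from $\Omega_0$ alone (for example, the efficient choice behind \ref{cor:imputation}), and form the baseline imputation estimator $\hat{\tau}^{0}_w := \sum_{it \in \Omega_1} w_{it}\bigl(Y_{it} - \hat{Y}^{0}_{it}(0)\bigr)$. Because both $\hat{\tau}_w$ and $\hat{\tau}^{0}_w$ assign coefficient $w_{it}$ to every treated $Y_{it}$, the difference $\Delta := \hat{\tau}_w - \hat{\tau}^{0}_w$ is a linear functional of $\Omega_0$ outcomes alone with $\E[\Delta] = 0$. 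Picking any $it_0 \in \Omega_1$ with $w_{it_0} \ne 0$ (which exists whenever the estimand is nontrivial), set $\hat{Y}_{it}(0) := \hat{Y}^{0}_{it}(0)$ for $it \ne it_0$ and $\hat{Y}_{it_0}(0) := \hat{Y}^{0}_{it_0}(0) - \Delta/w_{it_0}$. Each $\hat{Y}_{it}(0)$ stays linear in $\Omega_0$ data and remains unbiased for $A_{it}'\lambda_i + X_{it}'\delta$, and a direct calculation yields $\sum_{it \in \Omega_1} w_{it}\bigl(Y_{it} - \hat{Y}_{it}(0)\bigr) = \hat{\tau}^{0}_w + \Delta = \hat{\tau}_w$, as required.

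The delicate step is Step 2: one needs, for each treated $it$ with $w_{it}\ne 0$, \emph{some} unbiased linear estimator of $A_{it}'\lambda_i + X_{it}'\delta$ on $\Omega_0$ — equivalently, individual identifiability of $\tau_{it}$. This has to be deduced from the hypothesized existence of an unbiased estimator of $\tau_w$ under unrestricted heterogeneity (which forces each $w_{it}\ne 0$ coefficient to carry identifying power through $\Omega_0$ alone); once granted, the remainder is the routine algebraic reorganization sketched above, paralleling the construction underlying \ref{cor:imputation}.
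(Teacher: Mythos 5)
Your reorganization is correct and is in substance the same argument as the paper's: the paper likewise (i) notes that unbiasedness under unrestricted $\tau$ forces the coefficients on treated outcomes to equal $w_{it}$ (it phrases this as writing $\hat{C}=Y_{1}-\hat{\tau}=UY_{1}+VY_{0}$ and showing $U=\mathbb{O}$), and (ii) folds the leftover mean-zero functional of untreated data into the imputed values. The only cosmetic difference is where the discrepancy $\Delta$ is parked: the paper spreads it across all treated observations in proportion to $w_{1}$, via the correction $\hat{\tau}=\hat{\tau}^{\text{OLS}}+w_{1}(w_{1}'w_{1})^{-1}(\hat{\tau}_{w}-w_{1}'\hat{\tau}^{\text{OLS}})$ borrowed from the proof of \ref{prop:OLS-BLUE}, whereas you dump all of it on a single observation $it_{0}$; both yield valid unbiased $\hat{Y}_{it}(0)$'s. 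Two caveats. First, your intermediate claim that $\E[R]=0$ ``identically in $(\lambda,\delta)$'' is a slip: matching the nuisance part gives $\E[R]=-\sum_{it\in\Omega_{1}}w_{it}\left(A_{it}'\lambda_{i}+X_{it}'\delta\right)$, not zero. This is harmless because the step you actually use is $\E[\Delta]=0$, which follows directly from both estimators being unbiased for $\tau_{w}$. Second, your closing assertion that individual identifiability of each $A_{it}'\lambda_{i}+X_{it}'\delta$ from $\Omega_{0}$ ``has to be deduced'' from the existence of an unbiased estimator of $\tau_{w}$ is not correct in general: as the footnote to \ref{lem:No-Longrun} notes, a contrast such as $\tau_{A4}-\tau_{B4}$ can be identified even when neither of the corresponding untreated means is identified from $\Omega_{0}$, in which case no unbiased $\hat{Y}_{it}(0)$ exists for those observations. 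The paper's proof carries the same implicit assumption (it invokes an unbiased linear estimator of the full vector $\tau$), so this is a limitation of the proposition's statement rather than a defect specific to your argument---but it cannot be deduced from the hypothesis as you suggest, and should instead be stated as an assumption.
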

This result establishes an imputation representation when treatment
effects can vary arbitrarily. \ref{prop:model-imputation} in the
appendix establishes that the imputation structure applies even when
restrictions $\tau=\Gamma\theta$ are imposed, albeit with an additional
step in which the weights $w_{1}$ defining the estimand are adjusted
in a way that does not change $\tau_{w}$ under the imposed model.\footnote{As a special case, we can still write the efficient estimator from
\ref{thm:OLS-BLUE} as an imputation estimator from \ref{thm:imputation}
with alternative weights $v_{1}^{*}$ on the imputed treatment effects.
\ref{prop:imputation-model} shows that these adjusted weights $v_{1}^{*}$
solve a quadratic variance-minimization problem with a linear constraint
that preserves unbiasedness under \ref{assu:A3}. We also provide
an explicit formula for the resulting weights in \ref{prop:olsweights}.} In this sense, unbiased causal inference is equivalent to imputation
in our framework.

\subsection{Asymptotic Properties\label{subsec:Asymptotic-Properties}}

Having derived the linear unbiased estimator $\hat{\tau}_{w}^{\ast}$
for $\tau_{w}$ in \ref{thm:OLS-BLUE} that is also efficient under
spherical error terms, we now consider its asymptotic properties without
imposing that assumption. We study convergence along a sequence of
panels indexed by the sample size $N$, where randomness stems from
the error terms $\varepsilon_{it}$ only, as in \ref{sec:Setting}.
Our approach applies to asymptotic sequences where both the number
of units and the number of time periods may grow, but the assumptions
are least restrictive when the number of time periods remains constant
or grows slowly, as in short panels.

Instead of assuming that error terms are spherical, we now assume
that error terms are clustered by units $i$.
\begin{assumption}[Clustered error terms]
\label{assu:clustered}Error terms $\varepsilon_{it}$ are independent
across units $i$ and have bounded variance, $\var{\varepsilon_{it}}\leq\bar{\sigma}^{2}$
for $it\in\Omega$ uniformly.
\end{assumption}
The key role in our results is played by the weights that the \ref{thm:OLS-BLUE}
estimator places on each observation. Since the estimator is linear
in the observed outcomes $Y_{it}$, we can write it as $\hat{\tau}_{w}^{*}=\sum_{it\in\Omega}v_{it}^{*}Y_{it}$
with non-stochastic weights $v_{it}^{*}$, derived in \ref{prop:olsweights}
in the appendix.

We now formulate high-level conditions on the sequence of weight vectors
that ensure consistency, asymptotic normality, and will later allow
us to provide valid inference. These results apply to any unbiased
linear estimator $\hat{\tau}_{w}=\sum_{it\in\Omega}v_{it}Y_{it}$
of $\tau_{w}$, not just the efficient estimator $\hat{\tau}_{w}^{*}$
from \ref{thm:OLS-BLUE} \textendash{} that is, if the respective
conditions are fulfilled for the weights $v_{it}$, then consistency,
asymptotic normality, and valid inference follow as stated. For the
specific estimator $\hat{\tau}_{w}^{*}$ introduced above, we then
provide sufficient low-level conditions for short panels.

First, we obtain consistency of $\hat{\tau}_{w}$ under a Herfindahl
condition on the weights $v$ that takes the clustering structure
of error terms into account.

\begin{assumption}[Herfindahl condition]
\label{assu:Herfindahl}Along the asymptotic sequence, $\wnorm^{2}\equiv\sum_{i}\left(\sum_{t;it\in\Omega}\left|v_{it}\right|\right)^{2}\rightarrow0$,
for weights $v_{it}$ in the unbiased linear estimator $\hat{\tau}_{w}=\sum_{it\in\Omega}v_{it}Y_{it}$.
\end{assumption}
The condition on the clustered Herfindahl index $\wnorm^{2}$ states
that the sum of squared weights vanishes, where weights are aggregated
by units. One can think of the inverse of the sum of squared weights,
$n_{H}=\wnorm^{-2}$, as a measure of effective sample size, which
\ref{assu:Herfindahl} requires to grow large along the asymptotic
sequence. If it is satisfied, and variances are uniformly bounded,
we obtain consistency of $\hat{\tau}_{w}$:\footnote{\label{fn:R-criterion}The Herfindahl condition can be restrictive
since it allows for a worst-case correlation of error terms within
units. When such correlations are limited, other sufficient conditions
may be more appropriate instead, such as $R\left(\sum_{it\in\Omega}v_{it}^{2}\right)\rightarrow0$
with $R=\max_{i}\left(\text{largest eigenvalue of \ensuremath{\Sigma_{i}}}\right)/\bar{\sigma}^{2}$,
where $\Sigma_{i}=\left(\cov{\varepsilon_{it},\varepsilon_{is}}\right)_{t,s}$.
Here $R$ is a measure of the maximal joint covariation of all observations
for one unit. If error terms are uncorrelated, then $R\leq1$, since
the maximal eigenvalue of $\Sigma_{i}$ corresponds to the maximal
variance of an error term $\varepsilon_{it}$ in this case, which
is bounded by $\bar{\sigma}^{2}$. An upper bound for $R$ is the
maximal number of periods for which we observe a unit, since the maximal
eigenvalue of $\Sigma_{i}$ is bounded by the sum of the variances
on its diagonal.}
\begin{prop}[Consistency of $\hat{\tau}_{w}$]
\label{prop:consistency}Under \ref{assu:A1prime,assu:A2,assu:A3prime,assu:clustered,assu:Herfindahl},
$\hat{\tau}_{w}-\tau_{w}\stackrel{\mathcal{L}_{2}}{\rightarrow}0$
for an unbiased linear estimator $\hat{\tau}_{w}$ of $\tau_{w}$,
such as $\hat{\tau}_{w}^{*}$ in \ref{thm:OLS-BLUE}.
\end{prop}
We note that the large number of unit-specific parameters $\left\{ \lambda_{i}\right\} _{i}$
cannot generally be estimated consistently in panels with a small
number of time periods, raising a potential incidental-parameters
problem. However, consistency of $\hat{\tau}_{w}$ does not rely on
consistency for unit-specific parameters, since our estimator averages
over many units.

We next consider the asymptotic distribution of the estimator around
the estimand.
\begin{prop}[Asymptotic normality]
\label{prop:asymptoticnormality}If the assumptions of \ref{prop:consistency}
hold, there exists $\kappa>0$ such that $\E\left[\left|\varepsilon_{it}\right|^{2+\kappa}\right]$
is uniformly bounded, the weights are not too concentrated in the
sense that $\sum_{i}\left(\sqrt{n_{H}}\sum_{t;it\in\Omega}\left|v_{it}\right|\right)^{2+\kappa}\rightarrow0$,
and the variance does not vanish, $\liminf n_{H}\sigma_{w}^{2}>0$
for $\sigma_{w}^{2}=\var{\hat{\tau}_{w}}$, then we have that $\sigma_{w}^{-1}(\hat{\tau}_{w}-\tau_{w})\stackrel{d}{\rightarrow}\N(0,1)$.
\end{prop}
This result establishes conditions under which the difference between
estimator and estimand is asymptotically normal. Besides regularity,
this proposition requires that the estimator variance $\sigma_{w}^{2}$
does not decline faster than $1/n_{H}$. It is violated if the clustered
Herfindahl formula is too conservative: for instance, if the number
of periods is growing along the asymptotic sequence while the within-unit
over-time correlation of error terms remains small. Alternative sufficient
conditions for asymptotic normality can be established in such cases,
e.g. along the lines of \ref{fn:R-criterion}.

So far, we have formulated high-level conditions on the weights $v_{it}$
of any linear unbiased estimator of $\tau_{w}$. \ref{subsec:appx-sufficient}
presents low-level sufficient conditions for consistency and asymptotic
normality of the imputation estimator $\hat{\tau}_{w}^{*}$ for the
benchmark case of a panel with unit and period FEs, a fixed or slowly
growing number of periods, and no restrictions on treatment effects.
Unlike \ref{prop:consistency,prop:asymptoticnormality}, these conditions
are imposed directly on the weights $w_{1}$ chosen by the researcher,
and not on the the implied weights $v_{it}^{\ast}$, such that the
researcher can assess more directly whether the asymptotic approximation
is likely to be precise. In particular, the estimator achieves consistency
and asymptotic normality in the common case where the number of time
periods is fixed, the size of all cohorts increases, the weights on
treatment effects do not vary within the same period and cohort, and
the sum of (absolute) weights is bounded. In addition, the sufficient
conditions are also fulfilled when the number of periods grows slowly
and when weights differ across observations within the same cohort
and period, but not by too much. With covariates other than unit and
period FEs, e.g. with unit-specific linear trends, the general weight
conditions in \ref{assu:Herfindahl} and \ref{prop:asymptoticnormality}
can also be used to verify consistency and asymptotic normality. In
those cases, the sufficient conditions are typically fulfilled for
convex combinations of cohort-average treatment effects whenever the
size of cohorts grows sufficiently fast relatively to the number of
periods (see \ref{subsec:appx-sufficient}).

\subsection{\label{subsec:Conservative-Inference}Conservative Inference}

We next estimate the variance of $\hat{\tau}_{w}=\sum_{it\in\Omega}v_{it}Y_{it}$,
which equals $\sigma_{w}^{2}=\expec{\sum_{i}\left(\sum_{t;it\in\Omega}v_{it}\varepsilon_{it}\right)^{2}}$
with clustered error terms (\ref{assu:clustered}). We start with
the case where treatment effect heterogeneity is unrestricted (i.e.
$\Gamma=\I$). \textit{\emph{As in \textcite{DeChaisemartin2018},
exact inference becomes infeasible when treatment effects are heterogeneous}},
but conservative inference is possible. Following \ref{subsec:Asymptotic-Properties},
the inference tools we propose apply to a generic linear unbiased
estimator but we use them for the efficient estimator $\hat{\tau}_{w}^{\ast}$.
Our strategy is to estimate individual error terms by some $\tilde{\varepsilon}_{it}$
and then use a plug-in estimator,
\begin{align}
\hat{\sigma}_{w}^{2} & =\sum_{i}\left(\sum_{t;it\in\Omega}v_{it}\tilde{\varepsilon}_{it}\right)^{2}.\label{eq:plugin}
\end{align}

Estimating the error terms presents two challenges, which become apparent
when we consider the benchmark choice $\tilde{\varepsilon}_{it}=\hat{\varepsilon}_{it}$
based on the regression residuals $\hat{\varepsilon}_{it}=Y_{it}-A_{it}^{\prime}\hat{\lambda}_{i}^{*}-X_{it}^{\prime}\hat{\delta}^{*}-D_{it}\hat{\tau}_{it}^{*}$
in the regression \ref{eq:reduced-model}. The first challenge is
the incidental-parameters problem in estimating $\lambda_{i}$. However,
by using cluster-robust variance estimates, our inference does not
suffer from this problem since the variance estimator $\hat{\sigma}_{w}^{2}$
does not rely on the consistent estimation of $\lambda_{i}$ any more,
similar to the insight of \textcite{Stock2008a}.

A second challenge arises from unrestricted treatment-effect heterogeneity.
In \ref{thm:imputation}, treatment effects are estimated by fitting
the corresponding outcomes $Y_{it}$ perfectly, with residuals $\hat{\varepsilon}_{it}\equiv0$
for all treated observations. This issue is not specific to our estimation
procedure: one generally cannot distinguish between $\tau_{it}$ and
$\varepsilon_{it}$ from observations of $Y_{it}=A_{i}^{\prime}\lambda_{i}+X_{it}^{\prime}\delta+\tau_{it}+\varepsilon_{it}$
for treated observations, making it impossible to produce unbiased
estimates of $\sigma_{w}^{2}$ (see Lemma 1 in \textcite{Kline2020}
for a similar impossibility result).

While unbiased estimation of $\sigma_{w}^{2}$ is not possible, we
show that this variance can be estimated conservatively. Our variance
estimator is based on an auxiliary parsimonious model of treatment
effects. We do not require this model to be correct, in the sense
that inference is weakly asymptotically conservative under misspecification.
However, auxiliary models which better approximate $\tau_{it}$ will
make confidence intervals tighter and closer to asymptotically exact.
In the computation of $\hat{\sigma}_{w}^{2}$ we set $\tilde{\varepsilon}_{it}$
for the treated observations equal to the residuals of the auxiliary
model. We require the model to be parsimonious, such that it does
not overfit and the residuals include $\varepsilon_{it}$. When the
model is incorrect, $\tilde{\varepsilon}_{it}$ also include a component
due to the misspecification of $\tau_{it}$, leading to conservative
inference.

We formalize the auxiliary model by considering estimators $\tilde{\tau}_{it}$
for each $it\in\Omega_{1}$ which satisfy two properties: (1) $\tilde{\tau}_{it}$
converges to \emph{some }non-stochastic limit $\bar{\tau}_{it}$ and
(2) if the auxiliary model is correct, $\bar{\tau}_{it}=\tau_{it}$.
The following theorem presents conditions under which our construction
yields asymptotically conservative inference:
\begin{thm}[Conservative clustered standard error estimates]
\label{thm:se}Assume that the assumptions of \ref{prop:consistency}
hold, that the model of treatment effects is trivial ($\Gamma=\I$),
that the estimates $\tilde{\tau}_{it}$ converge to some non-random
$\bar{\tau}_{it}$ in the sense that $\wnorm^{-2}\sum_{i}\left(\sum_{t;it\in\Omega_{1}}v_{it}(\tilde{\tau}_{it}-\bar{\tau}_{it})\right)^{2}\stackrel{p}{\rightarrow}0$,
that $\hat{\delta}^{*}$ from \ref{thm:OLS-BLUE} is sufficiently
close to $\delta$ in the sense that $\wnorm^{-2}\sum_{i}\left(\sum_{t;it\in\Omega}v_{it}X_{it}'(\hat{\delta}^{\ast}-\delta)\right)^{2}\stackrel{p}{\rightarrow}0$,
and that $\left|\tau_{it}\right|,$ $\left|\bar{\tau}_{it}\right|$
and $\expec{\varepsilon_{it}^{4}}$ are uniformly bounded and the
weights are not too concentrated in the sense that $\sum_{i}\left(\frac{\sum_{t;it\in\Omega}\lvert v_{it}\rvert}{\wnorm}\right)^{4}\rightarrow0$.
Then the variance estimate
\begin{align}
\hat{\sigma}_{w}^{2} & =\sum_{i}\left(\sum_{t;it\in\Omega}v_{it}\tilde{\varepsilon}_{it}\right)^{2}, & \tilde{\varepsilon}_{it} & =Y_{it}-A_{it}^{\prime}\hat{\lambda}_{i}^{*}-X_{it}^{\prime}\hat{\delta}^{*}-D_{it}\tilde{\tau}_{it}\label{eq:SE}
\end{align}
is asymptotically conservative: $\wnorm^{-2}(\hat{\sigma}_{w}^{2}-\sigma_{w}^{2}-\sigma_{\tau}^{2})\stackrel{p}{\rightarrow}0$
where $\sigma_{\tau}^{2}=\sum_{i}\left(\sum_{t;D_{it}=1}v_{it}(\tau_{it}-\bar{\tau}_{it})\right)^{2}\geq0$.
If $\bar{\tau}_{it}=\tau_{it}$ for all $it\in\Omega_{1}$, $\sigma_{\tau}^{2}=0$,
meaning that the variance estimate is asymptotically exact.
\end{thm}
The theorem shows that the proposed variance estimate addresses the
two challenges laid out above. First, the estimates remain valid even
though we may not be able to estimate the unit-specific parameters
$\lambda_{i}$ consistently. This is because unit-specific parameter
estimates drop out when summing over all observations of one unit
in \ref{eq:SE}, as shown in the proof. Second, by using estimates
$\tilde{\tau}_{it}$ that fulfill the convergence condition of the
theorem, we avoid the issue of obtaining trivial residuals for the
treated observations. The resulting variance estimates are asymptotically
conservative. From these estimates we can also obtain conservative
confidence intervals (that asymptotically have coverage that is at
least nominal) if the estimator is also asymptotically normal, such
as under the sufficient conditions of \ref{prop:asymptoticnormality}.

It remains to choose the estimates $\tilde{\tau}_{it}$. We focus
on auxiliary models that impose the equality of treatment effects
across large groups of treated observations: for a partition $\Omega_{1}=\bigcup_{g}G_{g}$,
$\tau_{it}\equiv\tau_{g}$ for all $it\in G_{g}$. The $\tau_{g}$
can then be estimated by some weighted average of $\hat{\tau}_{it}^{*}$
among $it\in G_{g}$. Specifically, we propose averages of the form
\begin{equation}
\tilde{\tau}_{g}=\frac{\sum_{i}\left(\sum_{t;it\in G_{g}}v_{it}\right)\left(\sum_{t;it\in G_{g}}v_{it}\hat{\tau}_{it}^{*}\right)}{\sum_{i}\left(\sum_{t;it\in G_{g}}v_{it}\right)^{2}}.\label{eq:taubar-G}
\end{equation}
In \ref{subsec:appx-SE-Weights}, we show that this choice of weights
leads to minimal excess variance $\sigma_{\tau}^{2}$ in the case
where there is only a single group $g$, corresponding to a conservative
auxiliary model which requires all treatment effects to be the same.
The choice of the partition aims to maintain a balance between avoiding
overly conservative variance estimates and ensuring consistency. If
the sample is large enough, one may want to partition $\Omega_{1}$
into multiple groups of observations such that treatment effect heterogeneity
is expected to be smaller within them than across. For instance, with
many units, a group may consist of observations corresponding to the
same horizon relative to treatment onset. If cohorts are large, one
can further partition observations into groups defined by cohort and
period, which we use as the default in our Stata command.

While sufficiently large groups in \ref{eq:taubar-G} avoid overfitting
asymptotically (under appropriate conditions), in finite samples these
$\tilde{\tau}_{it}$ still use $\hat{\tau}_{it}^{*}$ and thus partially
overfit to $\varepsilon_{it}$. In \ref{subsec:appx-Leave-Out} we
therefore also consider leave-out versions of these $\tilde{\tau}_{it}$.

We make four final remarks on \ref{thm:se}. First, our strategy for
estimating the variance extends directly to conservative estimation
of variance-covariance matrices for vector-valued estimands, e.g.
for average treatment effects at multiple horizons $h$. Second, the
result applies in short panels under the low-level conditions of \ref{subsec:appx-sufficient}
(see \ref{prop:se-short-sufficient}). Third, while we have focused
here on the case of unrestricted heterogeneity ($\Gamma=\I$), \ref{thm:se}
can be extended to the case with a non-trivial treatment-effect model
imposed in \ref{assu:A3}.\footnote{By \ref{prop:model-imputation}, the general efficient estimator can
be represented as an imputation estimator for a modified estimand,
i.e. by changing $w_{1}$ to some $v_{1}$. \ref{thm:se} then yields
a conservative variance estimate for it. We note that under sufficiently
strong restrictions on treatment effects, asymptotically exact inference
may be possible, as the residuals $\hat{\varepsilon}_{it}$ in \ref{eq:reduced-model}
may be estimated consistently even for treated observations (except
for the inconsequential noise in $\hat{\lambda}_{i}$), alleviating
the need for an additional auxiliary model.} Finally, computation of $\hat{\sigma}_{w}^{2}$ for the estimator
$\hat{\tau}_{w}^{*}$ from \ref{thm:OLS-BLUE} involves the implied
weights $v_{it}^{*}$, which becomes computationally challenging with
multiple sets of high-dimensional FEs. In \ref{subsec:appx-Weights-Imputation}
we develop a computationally efficient algorithm for computing $v_{it}^{*}$
based on the iterative least squares algorithm for conventional regression
coefficients \parencite{Guimaraes2010}.

\subsection{Testing for Parallel Trends\label{subsec:Testing-PTA}}

In this section, we discuss testing the (generalized) parallel-trend
and no-anticipation assumptions \ref{assu:A1prime,assu:A2}. We propose
a testing procedure based on OLS regressions with untreated observations
only, departing from both traditional regression-based tests and
more recent placebo tests. This procedure is robust to treatment effect
heterogeneity and, under spherical errors, has attractive power properties
and avoids the problem of inference after pre-testing explained by
\textcite{Roth2018a}. We propose:

\begin{test}\label{test:pretrends}\emph{(Robust OLS-based pre-trend
test)}
\begin{enumerate}
\item Choose an alternative model for $Y_{it}$ for untreated observations
$it\in\Omega_{0}$ that is richer than that imposed by \ref{assu:A1prime,assu:A2}:
for an observable vector $W_{it}$ (which we consider non-stochastic,
like $A_{it}$ and $X_{it}$),
\begin{equation}
Y_{it}=A_{it}^{\prime}\lambda_{i}+X_{it}^{\prime}\delta+W_{it}^{\prime}\gamma+\varepsilon_{it};\label{eq:richer-alt}
\end{equation}
\item Estimate $\gamma$ by $\hat{\gamma}$ in \ref{eq:richer-alt} using
OLS on untreated observations only;
\item Test $\gamma=0$ using the heteroskedasticity- and cluster-robust
Wald\emph{ }test.
\end{enumerate}
\end{test}

\noindent This test is valid because equation \ref{eq:richer-alt}
is implied by \ref{assu:A1prime,assu:A2} if the null $\gamma=0$
holds.\footnote{There is a natural alternative test of the null $\gamma=0$ in the
model \ref{eq:richer-alt}, namely the Hausman test based on the difference
between the imputation estimator $\hat{\tau}_{w}^{W}$ based on the
model in \ref{eq:richer-alt} and the efficient imputation estimator
$\hat{\tau}_{w}^{*}$ that is only valid when $\gamma=0$. Like our
test, this test only uses untreated observations and avoids the \textcite{Roth2018a}
pre-testing problem for spherical errors. The Hausman approach has
the advantage of quantifying the magnitude of bias from omitting $W_{it}$,
while \ref{test:pretrends} has the advantage that it is also informative
about violations that cancel out in the Hausman test. The two tests
are equivalent for scalar $W_{it}$.}

The test requires choosing $W_{it}$ to parametrize the possible violation
of \ref{assu:A1prime,assu:A2}. A natural choice for $W_{it}$, which
parallels conventional pre-trend tests, is a set of indicators for
observations $1,\dots,k$ periods before the onset of treatment for
some $k$, with periods before $E_{i}-k$ serving as the reference
group.\footnote{The optimal choice of $k$ is a challenging question. As usual with
Wald tests, choosing a $k$ that is too large can lead to low power
against many alternatives, in particular those that generate large
biases in treatment effect estimates that impose invalid \ref{assu:A1prime}.} This choice is appropriate, for instance, if the researcher's main
worry is the possible effects of treatment anticipation, i.e. violations
of \ref{assu:A2}. This choice of $W_{it}$ also lends itself to making
``event study plots,'' which combine the ATT estimates by horizon
$h\ge0$ with a series of pre-trend coefficients; we supply the \texttt{event\_plot}
Stata command for this goal. Alternatively, the researcher may focus
on possible violations of \ref{assu:A1prime}. For instance, with
data spanning many years one could test for the presence of a structural
break in unit FEs.

\ref{test:pretrends} can be contrasted with two existing strategies
to test parallel trends. Traditionally, researchers estimated a dynamic
specification including lags and leads of treatment onset, and tested
\textemdash{} visually or statistically \textemdash{} that the coefficients
on leads are equal to zero. More recent papers \parencite[e.g.][]{DeChaisemartin2018,Liu2020a}
replace it with a placebo strategy: pretend that treatment happened
$k$ periods earlier for all eventually treated units, and estimate
the average effects $h=0,\dots,k-1$ periods after the placebo treatment
using the same estimator as for actual estimation. 

Both of these alternatives strategies have drawbacks. Because the
traditional regression-based test uses the full sample, including
treated observations, and imposes restrictions on treatment effects
(which are assumed homogeneous within each horizon), it is \emph{not
}a test for \ref{assu:A1prime,assu:A2} only. Rather, it is a joint
test that is sensitive to violations of the implicit \ref{assu:A3}
\parencite{Abraham2018}. Even if a researcher has reasons to impose
a non-trivial \ref{assu:A3} in estimation, a robust test for parallel
trends and no anticipation \emph{per se }should avoid those restrictions
on treatment effect heterogeneity. With a null \ref{assu:A3}, treated
observations are not useful for testing, and our test only uses the
untreated ones.\footnote{\textcite{Wooldridge2021} shows that as long as treatment effects
are allowed to vary flexibly, tests based on specifications estimated
on the full sample do not use treated observations. Therefore, such
tests are also not contaminated by treatment effect heterogeneity.}

Tests based on placebo estimates appropriately use untreated observations
only and may have intuitive appeal. However, mimicking the estimator
does not generally correspond to an efficient test of a class of plausible
alternatives. In contrast, \ref{test:pretrends} possesses well-known
asymptotic efficiency properties when $W_{it}$ is correctly specified.
For example, when $\varepsilon_{it}$ are spherical and normal, it
is asymptotically equivalent to the homoskedastic $F$-test, which
is a uniformly most powerful invariant test \parencite[ch. 7.6]{lehmann2006testing}.

Finally, we show an additional advantage of \ref{test:pretrends}:
if the researcher conditions on the test passing (i.e. does not report
the results otherwise), inference on $\hat{\tau}_{w}^{\ast}$ is still
asymptotically valid under the null of no violations of \ref{assu:A1prime,assu:A2}
and under spherical errors. This avoids the issue pointed out by \textcite[Proposition 4]{Roth2018a}
in the context of restrictive dynamic event study regressions: that
variance estimates which do not take pre-testing into account are
inflated, leading to unnecessarily conservative inference.\footnote{\textcite{Roth2018a} points out another issue, which we also avoid
under the assumptions of \ref{prop:pretesting}: when pre-trend and
treatment effect estimators are correlated, the bias arising from
violations of \ref{assu:A1prime,assu:A2} is affected by pre-testing;
it is exacerbated in specific cases \parencite[Proposition 2]{Roth2018a}.}
\begin{prop}[Pre-test robustness]
\label{prop:pretesting}Suppose the model in \ref{eq:richer-alt}
and \ref{assu:homosk} hold. Then $\hat{\tau}_{w}^{\ast}$ constructed
as in \ref{thm:OLS-BLUE} is uncorrelated with any vector $\hat{\gamma}$
constructed as in \ref{test:pretrends}. If the error terms are also
normally distributed,\footnote{The normality assumption is not essential; in the proof, we show that
a similar, asymptotic result holds generally under regularity conditions.} then $\hat{\tau}_{w}^{\ast}$ and $\hat{\gamma}$ are independent,
and inference on $\tau_{w}$ based on $\hat{\tau}_{w}^{\ast}$ is
unaffected by pre-tests based on $\hat{\gamma}$.\footnote{An early version of \textcite{Roth2018a} shows how to construct
an adjustment that removes the dependence when it exists, provided
the covariance matrix between $\hat{\tau}^{\ast}$ and $\hat{\gamma}$
can be estimated. By \ref{prop:pretesting}, this adjustment is not
needed for the \ref{thm:OLS-BLUE} estimator under spherical errors.}
\end{prop}

\section{Application\label{sec:Application}}

Having derived the attractive theoretical properties of the imputation
estimator, we now illustrate their practical relevance by revisiting
the estimation of the marginal propensity to spend in the event study
of \textcite{Broda2014}. We also use this empirical setting to verify
the properties of the imputation estimator in a simulation study.

\subsection{Setting}

The marginal propensity to spend out of tax rebates is a crucial parameter
for economic policy. In the US, the Economic Stimulus Act of 2008
consisted primarily of a 100 billion dollar program that sent tax
rebates to approximately 130 million tax filers. \textcite{Parker2013}
and \textcite{Broda2014} estimate the marginal propensity to spend
(MPX) out of the 2008 tax rebates. The rebate was disbursed using
two methods: either via direct deposit to a bank account, if known
by the IRS, or with a mailed paper check. For each method, the week
in which the funds were disbursed depended on the second-to-last digit
of the taxpayer's Social Security number (SSN). This number provides
a source of quasi-experimental variation because the last four digits
of a SSN are assigned sequentially to applicants within geographic
areas.

\textcite[henceforth BP]{Broda2014} use an event study design to
examine the response of nondurable spending to tax rebate receipt,
leveraging the quasi-experimental variation in the \emph{timing} of
the receipt. The quasi-random assignment of the last digits of the
SSN makes the parallel-trends assumption for expenditures \emph{a
priori} plausible.\footnote{\textcite{Thakral2020} point out that for the paper check group pre-rebate
household characteristics (in \emph{levels}) are not balanced with
respect to the timing of the receipt. While this is problematic for
randomization-based approaches to DiD (e.g. \textcite{Arkhangelsky2019}
and \textcite{Roth2021}), parallel \emph{trends }in expenditures
may still hold. Indeed, we fail to reject them with pre-trend tests
below.} The no-anticipation assumption may also be expected to hold: although
the disbursement schedule was known in advance, households were directly
notified by mail only several days before disbursement.

We estimate the performance of various estimators at estimating the
impulse response function of nondurable spending to tax rebate receipt
using the same data as BP. While earlier work by \textcite{Parker2013}
estimates the impulse responses using quarterly spending data from
the Consumer Expenditure Survey, BP leverage more detailed data from
the Nielsen Homescan Consumer Panel. The Nielsen dataset tracks transactions
at a much higher (in principle, daily) frequency, which is why we
choose it for our analysis. The Nielsen data cover expenditures on
consumer packaged goods (food, beverages, beauty and health products,
household supplies, and general merchandise), representing around
15\% of total household expenditures. Our dataset, identical to that
of BP, is a complete panel of 21,760 households (including 21,690
with non-missing disbursement method information) observed over 52
weeks of year 2008.

\subsection{Comparison between Robust and Conventional Estimates\label{subsec:BP-robust-and-OLS}}

We show how BP's estimates of the MPX suffer from an upward bias
in the short-run due to the choice of a binned specification (\ref{subsec:Biased-Weighting-with-Binning})
and how they may be spurious in the long-run (\ref{subsec:Spurious-Identification-Application}).
In \ref{subsec:Implications} we present our preferred robust estimates
and discuss implications for the macroeconomics literature.

\subsubsection{Negative Weighting and Upward Bias with Binning\label{subsec:Biased-Weighting-with-Binning}}

We replicate BP's estimates, focusing on the first three months since
the receipt, while leaving longer-run effects to \ref{subsec:Spurious-Identification-Application}.
BP estimate conventional dynamic specifications of the form:
\begin{equation}
Y_{it}=\alpha_{i}+\beta_{t}+\sum_{h=-a}^{b}\tau_{h}\one\left[K_{it}=h\right]+\varepsilon_{it},\label{eq:BP_spec}
\end{equation}
where $Y_{it}$ is the dollar amount of spending in calendar week
$t$ for household $i$, $\alpha_{i}$ are household FEs, and $\beta_{t}$
are week FEs. In some specifications, week FEs are interacted with
the disbursement method $m(i)$ (i.e., $\beta_{m(i)t}$ is included
instead of $\beta_{t}$ in \ref{eq:BP_spec}) to leverage the variation
in timing only within each disbursement method; we refer to those
specifications as ``with disbursement method FEs.'' The set of $\one\left[K_{it}=h\right]$
are the lead/lag indicator variables tracking the number of weeks
$K_{it}=t-E_{i}$ since the week of the tax rebate receipt for the
household, $E_{i}$; $b$ is chosen such that all possible lags in
the sample are covered; $a$ varies as discussed below. MPXs for each
horizon, as well as pre-trend coefficients, are captured by $\tau_{h}$.
Regressions are weighted by the Nielsen projection weights.

BP's preferred specification is a binned version of \ref{eq:BP_spec}
which constrains $\tau_{h}$ to be constant across four-week periods
\textemdash{} ``months'' \textemdash{} around the event, starting
with the week of tax rebate receipt: e.g., $\tau_{0}=\dots=\tau_{3}$.
This specification also includes one monthly pre-trend coefficient,
i.e. $a=4$ with $\tau_{-1}=\dots=\tau_{-4}.$ These estimates, without
and with disbursement method FEs, are replicated in \ref{tab:replication_BP},
columns 1 and 2, suggesting that tax rebate receipt led to an increase
in spending in the contemporaneous month of \$42.6 (s.e. 7.2) in col.
1 to \$47.6 (s.e. 9.2) in col. 2, and a cumulative increase over three
months of \$60.5 (s.e. 25.7) in col. 1 to \$94.4 (s.e. 33.5) in col.
2. As we will discuss in \ref{subsec:Implications}, extrapolating
these estimates from Nielsen products to all consumption implies very
large total MPX.

\begin{table}
\caption{Estimates of the Monthly and Quarterly MPX out of Tax Rebates{\small{}\label{tab:replication_BP}}}

\medskip{}

\begin{centering}
{\small{}}%
\begin{tabular}{lcccccccc}
\toprule 
 & \multicolumn{8}{c}{{\small{}Dollars spent after tax rebate receipt}}\tabularnewline
\cmidrule{2-9} \cmidrule{3-9} \cmidrule{4-9} \cmidrule{5-9} \cmidrule{6-9} \cmidrule{7-9} \cmidrule{8-9} \cmidrule{9-9} 
 & \multicolumn{2}{c}{{\small{}OLS}} &  & \multicolumn{2}{c}{{\small{}OLS}} &  & \multicolumn{2}{c}{{\small{}Imputation}}\tabularnewline
 & \multicolumn{2}{c}{{\small{}Monthly binned}} &  & \multicolumn{2}{c}{{\small{}No binning}} &  & \multicolumn{2}{c}{{\small{}Estimator}}\tabularnewline
 & {\small{}(1)} & {\small{}(2)} &  & {\small{}(3)} & {\small{}(4)} &  & {\small{}(5)} & {\small{}(6)}\tabularnewline
\midrule
{\footnotesize{}Contemporaneous month} & {\footnotesize{}42.59} & {\footnotesize{}47.57} &  & {\footnotesize{}35.02} & {\footnotesize{}27.88} &  & {\footnotesize{}38.13} & {\footnotesize{}30.54}\tabularnewline
 & {\footnotesize{}(7.19)} & {\footnotesize{}(9.15)} &  & {\footnotesize{}(5.75)} & {\footnotesize{}(7.75)} &  & {\footnotesize{}(5.68)} & {\footnotesize{}(9.08)}\tabularnewline
{\footnotesize{}First month after} & {\footnotesize{}9.31} & {\footnotesize{}26.26} &  & {\footnotesize{}$-2.28$} & {\footnotesize{}$-4.48$} &  & {\footnotesize{}\textendash 2.47} & {\footnotesize{}7.43}\tabularnewline
 & {\footnotesize{}(9.00)} & {\footnotesize{}(11.95)} &  & {\footnotesize{}(7.59)} & {\footnotesize{}(12.48)} &  & {\footnotesize{}(7.81)} & {\footnotesize{}(16.17)}\tabularnewline
{\footnotesize{}Second month after} & {\footnotesize{}8.63} & {\footnotesize{}20.52} &  & {\footnotesize{}$-5.96$} & {\footnotesize{}$-13.82$} &  & {\footnotesize{}13.08} & {\footnotesize{}4.01}\tabularnewline
 & {\footnotesize{}(11.17)} & {\footnotesize{}(14.57)} &  & {\footnotesize{}(10.06)} & {\footnotesize{}(16.38)} &  & {\footnotesize{}(22.51)} & {\footnotesize{}(29.89)}\tabularnewline
 &  &  &  &  &  &  &  & \tabularnewline
{\footnotesize{}Three-month total} & {\footnotesize{}60.53} & {\footnotesize{}94.35} &  & {\footnotesize{}26.79} & {\footnotesize{}9.58} &  & {\footnotesize{}48.75} & {\footnotesize{}41.97}\tabularnewline
 & {\footnotesize{}(25.73)} & {\footnotesize{}(33.54)} &  & {\footnotesize{}(21.43)} & {\footnotesize{}(34.42)} &  & {\footnotesize{}(30.97)} & {\footnotesize{}(46.56)}\tabularnewline
 &  &  &  &  &  &  &  & \tabularnewline
{\footnotesize{}Disbursement method FE} & {\footnotesize{}No} & {\footnotesize{}Yes} &  & {\footnotesize{}No} & {\footnotesize{}Yes} &  & {\footnotesize{}No} & {\footnotesize{}Yes}\tabularnewline
\emph{\footnotesize{}N}{\footnotesize{} observations} & {\footnotesize{}1,131,520} & {\footnotesize{}1,127,880} &  & {\footnotesize{}1,131,520} & {\footnotesize{}1,127,880} &  & {\footnotesize{}631,040} & {\footnotesize{}536,553}\tabularnewline
\emph{\footnotesize{}N}{\footnotesize{} households} & {\footnotesize{}21,760} & {\footnotesize{}21,690} &  & {\footnotesize{}21,760} & {\footnotesize{}21,690} &  & {\footnotesize{}21,760} & {\footnotesize{}21,690}\tabularnewline
\bottomrule
\end{tabular}{\small\par}
\par\end{centering}
\medskip{}

\emph{\footnotesize{}Notes}{\footnotesize{}: Columns 1 and 2 estimate
the binned version of specification \ref{eq:BP_spec} with $a=4$
and imposing that the coefficients are the same in each month, i.e.
four weeks since the rebate receipt. Columns 3 and 4 estimate the
same specification without binning, with $a=1$. These specifications
are identical to \textcite{Broda2014}, Tables 3 and 4, columns 1
and 4. Columns 5 and 6 report the efficient imputation estimator.
All columns aggregate coefficients by month for the first three months
after the rebate receipt and suppress the other coefficients. Columns
1, 3, and 5 use household and week FEs, while columns 2, 4, and 6
additionally interact week FEs with disbursement method dummies. The
estimates in column 6 exclude the last week of the quarter ($h=11$)
due to insufficient sample size. All estimates use projection weights
from the Nielsen Consumer Panel, and standard errors are clustered
by household.}{\footnotesize\par}
\end{table}

Next, we show that the MPX estimates are much smaller without binning.
In columns 3 and 4 of \ref{tab:replication_BP} we report the estimates
from the conventional specification \ref{eq:BP_spec} without binning
and with one weekly lead ($a=1$), as in BP's Table 3. We report the
coefficients aggregated to the monthly level. Compared to columns
1 and 2, there is a large fall in the cumulative three-month MPX,
from \$60.5 (s.e. 25.7) to \$26.8 (s.e. 21.4) without disbursement
method FEs and from \$94.4 (s.e. 33.5) to \$9.6 (s.e. 34.4) with these
FEs. In columns 5 and 6, we use the robust and efficient imputation
estimator to estimate weekly average responses and aggregate them
to the monthly level. The point estimates are similar to columns 3
and 4 for the contemporaneous month response, while for the quarterly
MPX they are in between the results obtained with a binned specification
and without binning.\footnote{\ref{tab:Coef-differences} reports the differences between the estimates
from OLS with no binning or imputation and the binned OLS specification,
with standard errors and p-values. All binned estimates are significantly
different from those without binning at the 10\% (5\%) significance
level without (with) disbursement method FEs. The difference between
the binned and imputation estimates is only significant at the 10\%
level for the contemporaneous month with disbursement method FEs.
We explain below that the difference in estimates is due to the difference
in estimands, rather than statistical noise.}

Could the difference between binned and other estimates indicate a
violation of the DiD assumptions? \ref{fig:OLS vs. robust imputation estimator}
provides evidence against this possibility, showing that there is
no sign of pre-trends.\footnote{This figure reports the imputation estimates for 8 weeks since the
rebate along with the pre-trend coefficients from the \ref{subsec:Testing-PTA}
test that allows for 8 weeks of anticipation effects. The figure also
reports conventional specifications without binning augmented to include
$a=8$ weeks of pre-trends, dropping observations more than 8 weeks
since the rebate.} The Wald test confirms this finding: the p-value for the null of
no pre-trends is 0.185 (0.403) without (with) disbursement method
FEs.

\begin{figure}
\caption{Dynamic Specifications and Pre-Trends\label{fig:OLS vs. robust imputation estimator}}

\medskip{}

\begin{centering}
\begin{tabular}{ccc}
A: Without disbursement method FEs &  & B: With disbursement method FEs\tabularnewline
\includegraphics[width=0.35\textwidth]{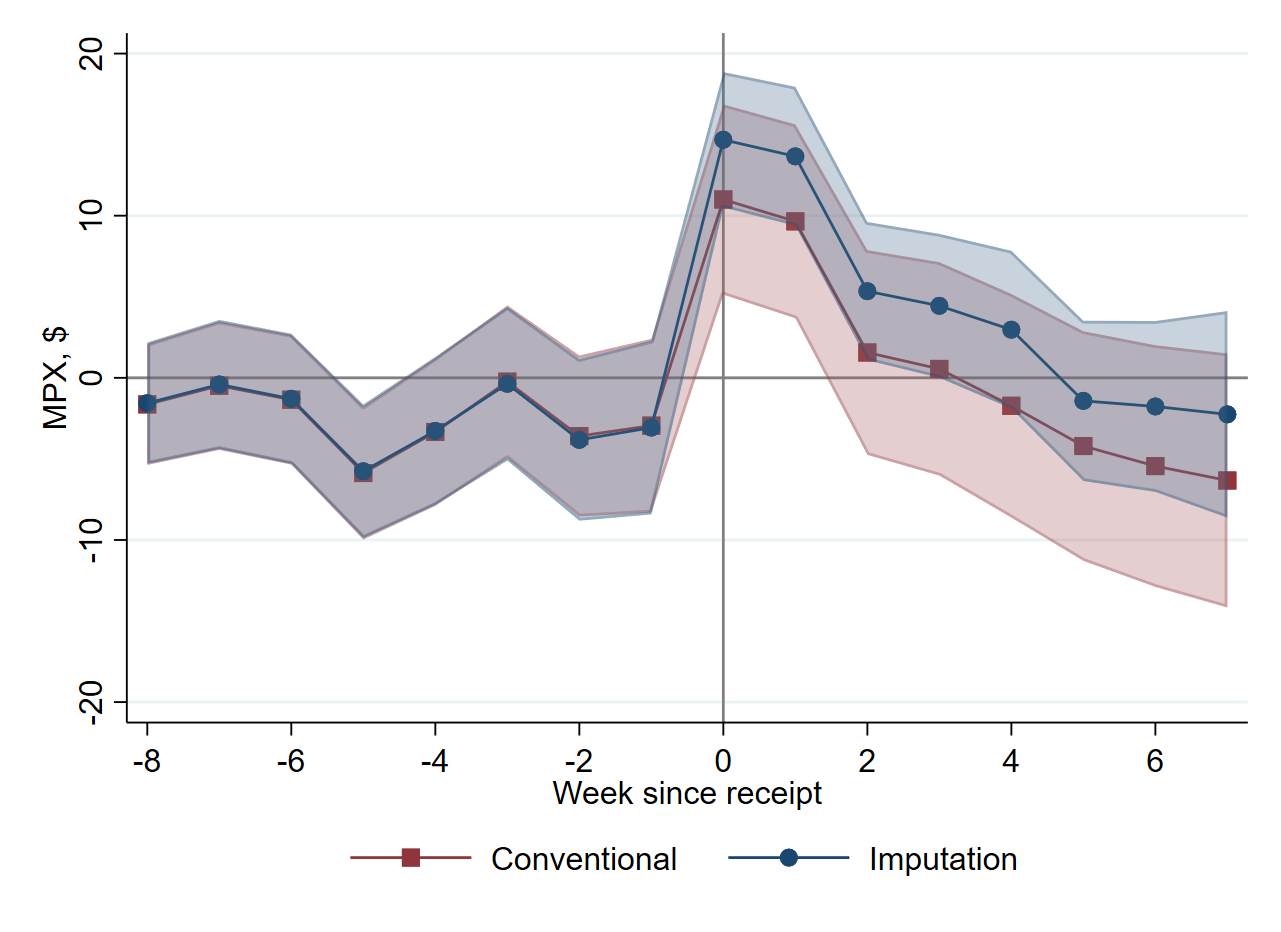} &  & \includegraphics[width=0.35\textwidth]{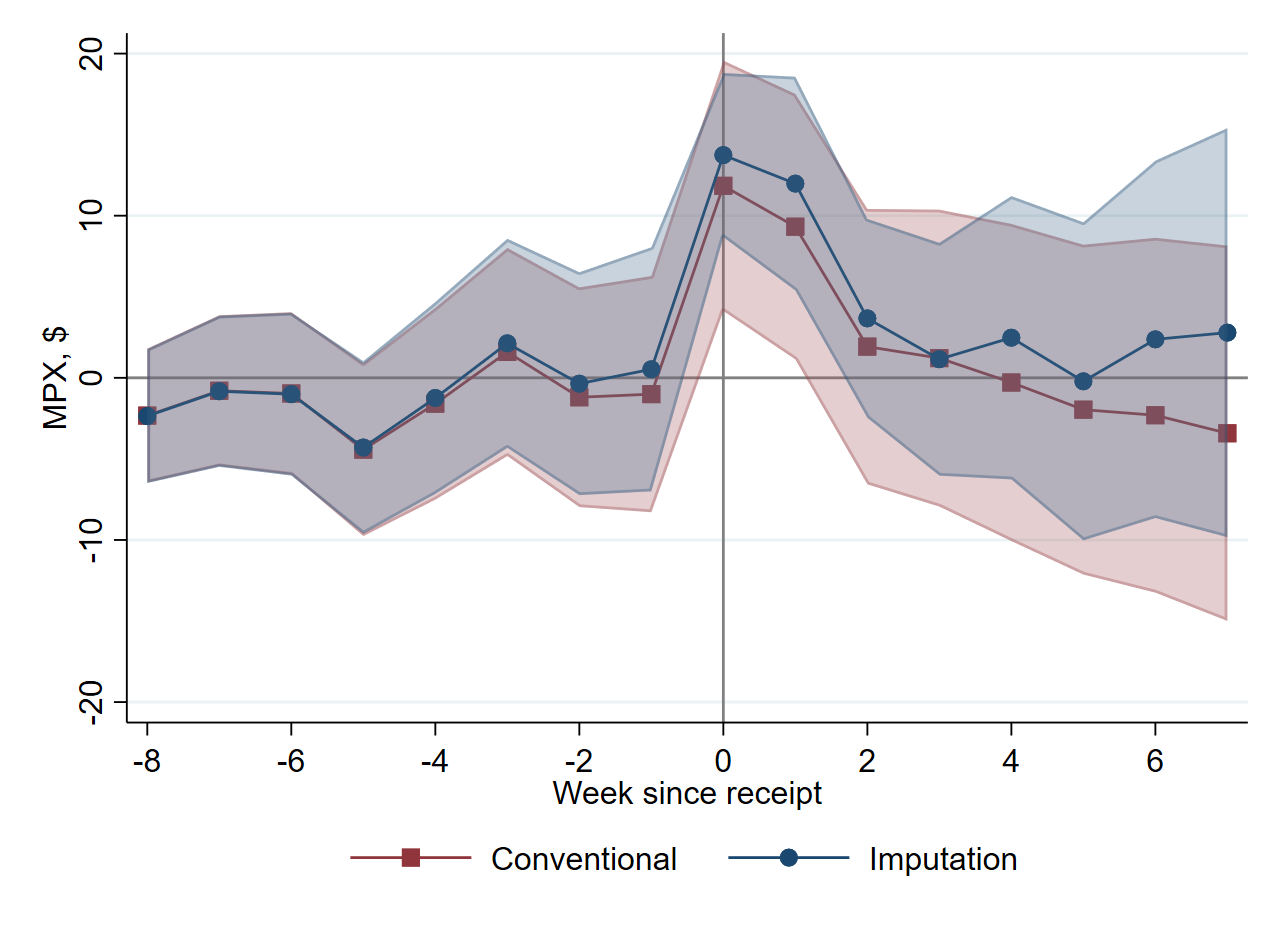}\tabularnewline
\end{tabular}
\par\end{centering}
\emph{\footnotesize{}Notes}{\footnotesize{}: Panel A reports estimates
of the response of spending to tax rebate receipts and pre-trend coefficients,
using specification \ref{eq:BP_spec} with $a=8$ and without binning
(``Conventional'') and with the efficient imputation estimator and
the pre-trend test from \ref{subsec:Testing-PTA} (``Imputation'').
Panel B additionally interacts week FEs with the disbursement method.
Observations 8 or more weeks since the rebate receipt are excluded.
Estimation is weighted by the projection weights from the Nielsen
Consumer Panel. 95\% confidence bands are shown, using standard errors
clustered by household.}{\footnotesize\par}
\end{figure}

We find instead that the higher estimates from binned specifications
are explained by the estimand they implicitly choose. Specifically,
this estimand places a very large weight on the first weeks after
the rebate, when the effects are the largest, and negative weights
on other weeks. \ref{fig:OLS vs. robust imputation estimator} shows
that the increase in spending after the receipt is concentrated in
the first weeks since the rebate. \ref{fig:binned_short_term_bias}
in turn shows the weights with which the quarterly MPX estimated from
the monthly binned specification of \ref{tab:replication_BP} aggregates
the MPXs at each weekly horizon. These weights show how the estimand
of the binned specification diverges from the true quarterly MPX,
which is a simple sum of the effects at each horizon $h=0,\dots,11$
weeks, i.e. with constant weights of one on each week.\footnote{The binned specification's estimand also diverges from the true MPX
in how it weights different households for the same weekly horizon
(similar to the issues studied theoretically by \textcite{Abraham2018}
for dynamic specifications without binning). We focus on the variation
across horizons here because MPXs have a very strong dynamic pattern.} In the specification without disbursement method FEs, the weight
placed on the first-week response is three times larger than it would
be with an equally weighted sum; it is five times larger with the
FEs. Furthermore, within each month the weights become negative for
the last weeks of the month. Applying the weights of the binned specification
across weeks from \ref{fig:OLS vs. robust imputation estimator} to
the estimates without binning (underlying col. 3 of \ref{tab:replication_BP}),
we obtain a point estimate of \$42.6 for the contemporaneous month
\textemdash{} indistinguishable from col. 1, instead of \$35.0 in
col. 3. Similarly, we get \$60.4 for the quarter, nearly identical
to \$60.5 in col. 1, instead of \$26.8 in col. 3. Thus, the short-run
biased weighting scheme due to binning explains nearly all the difference
between columns 1 and 3 of \ref{tab:replication_BP}.\footnote{Short-run biased weighting also explains the majority, although not
all, of the difference between the specification with disbursement
method FEs in columns 2 and 4 of \ref{tab:replication_BP}. Applying
the binned specification weights to the specifications without binning,
we get an estimate of \$40.0 for the contemporaneous month and \$69.0
for the quarter, thus reducing the discrepancy between columns 2 and
4 by 62\% and 70\% for the month and quarter, respectively.}

\begin{figure}
\caption{Short-term Bias in Weights for Binned Specifications\label{fig:binned_short_term_bias}}

\begin{centering}
\includegraphics[width=0.35\textwidth]{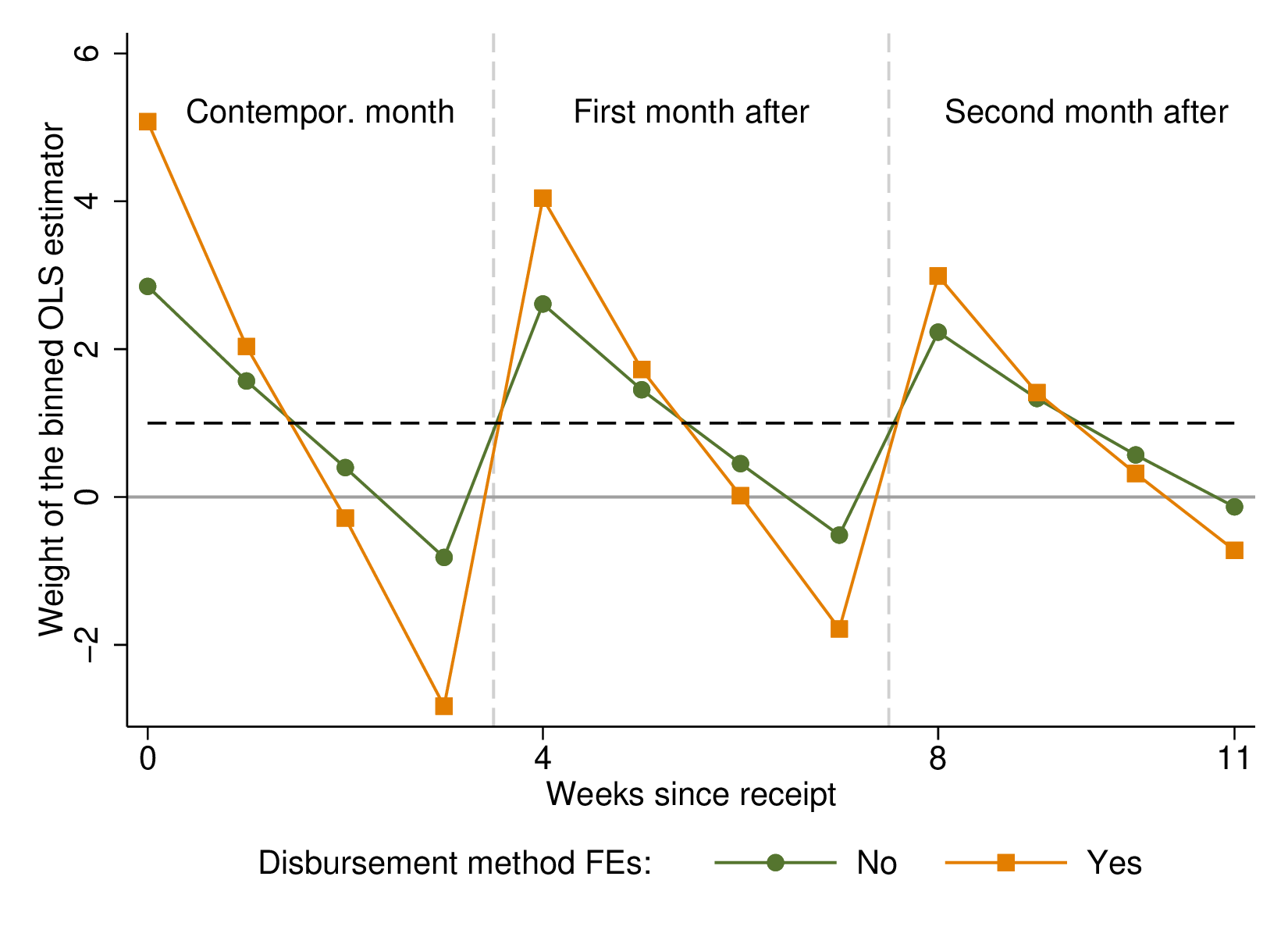}
\par\end{centering}
\emph{\footnotesize{}Notes}{\footnotesize{}: This figure reports the
cumulative weight that the monthly binned OLS estimator of the quarterly
MPX from \ref{tab:replication_BP}, with or without disbursement methods
FEs, places on the true effects at each horizon $h=0,\dots,11$ weeks
since the rebate receipt. These weights are computed using the Frisch\textendash Waugh\textendash Lovell
theorem, analogously to equation \ref{eq:weights-FWL}, and aggregated
across the first three months since the rebate receipt. The black
dashed line indicates the weight corresponding to the true quarterly
MPX, i.e. a simple sum of the effects at each horizon.}{\footnotesize\par}
\end{figure}

\subsubsection{Spurious Identification of Long-Run Causal Effects\label{subsec:Spurious-Identification-Application}}

We now examine the long-run dynamics of MPXs obtained with conventional
specifications and the imputation estimator. The timing of the tax
rebate is such that we simultaneously observe treated and untreated
households for at most 13 weeks.\footnote{The first treated households received the rebate during week 17 of
2008 (week ending April 26), while the last treated households received
it during week 30 (week ending July 26).} Per \ref{prop:No-Longrun}, without restrictive assumptions on treatment
effect heterogeneity it is not possible to estimate causal effects
beyond 12 weeks. Yet conventional dynamic specifications produce estimates
for longer horizons via extrapolation. We examine whether the estimates
obtained in this way could paint a misleading picture of the long-run
dynamics of MPXs.

\begin{figure}
\caption{Long-run MPXs, Conventional Specifications vs. Robust Imputation Estimator\label{fig:extrapolation}}

\medskip{}

\begin{centering}
\begin{tabular}{ccc}
{\small{}A: Conventional} & {\small{}B: Conventional} & {\small{}C: Robust}\tabularnewline
{\small{}Binned Estimates} & {\small{}Dynamic Estimates} & {\small{}Imputation Estimates}\tabularnewline
\includegraphics[width=0.3\textwidth]{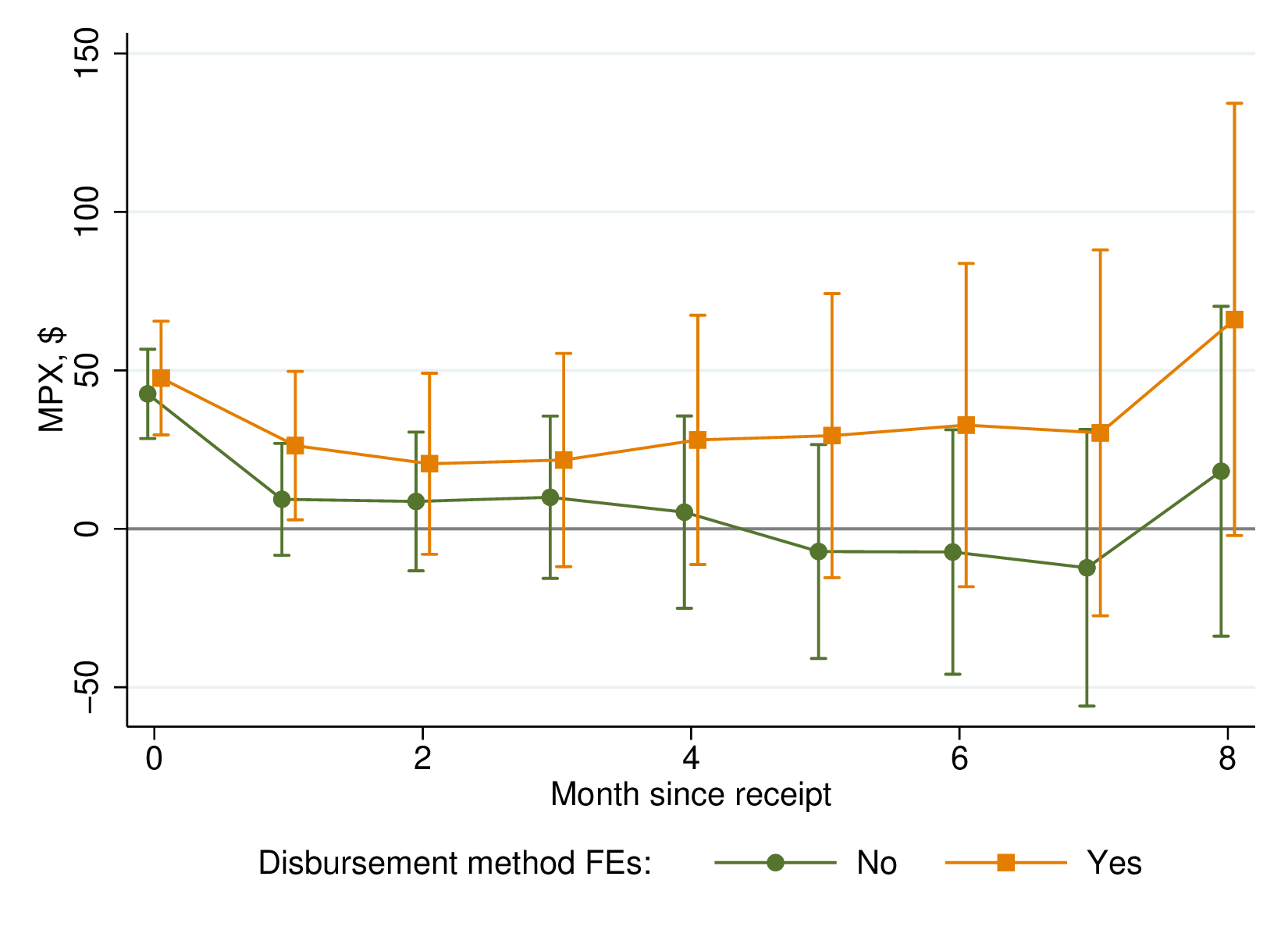} & \includegraphics[width=0.3\textwidth]{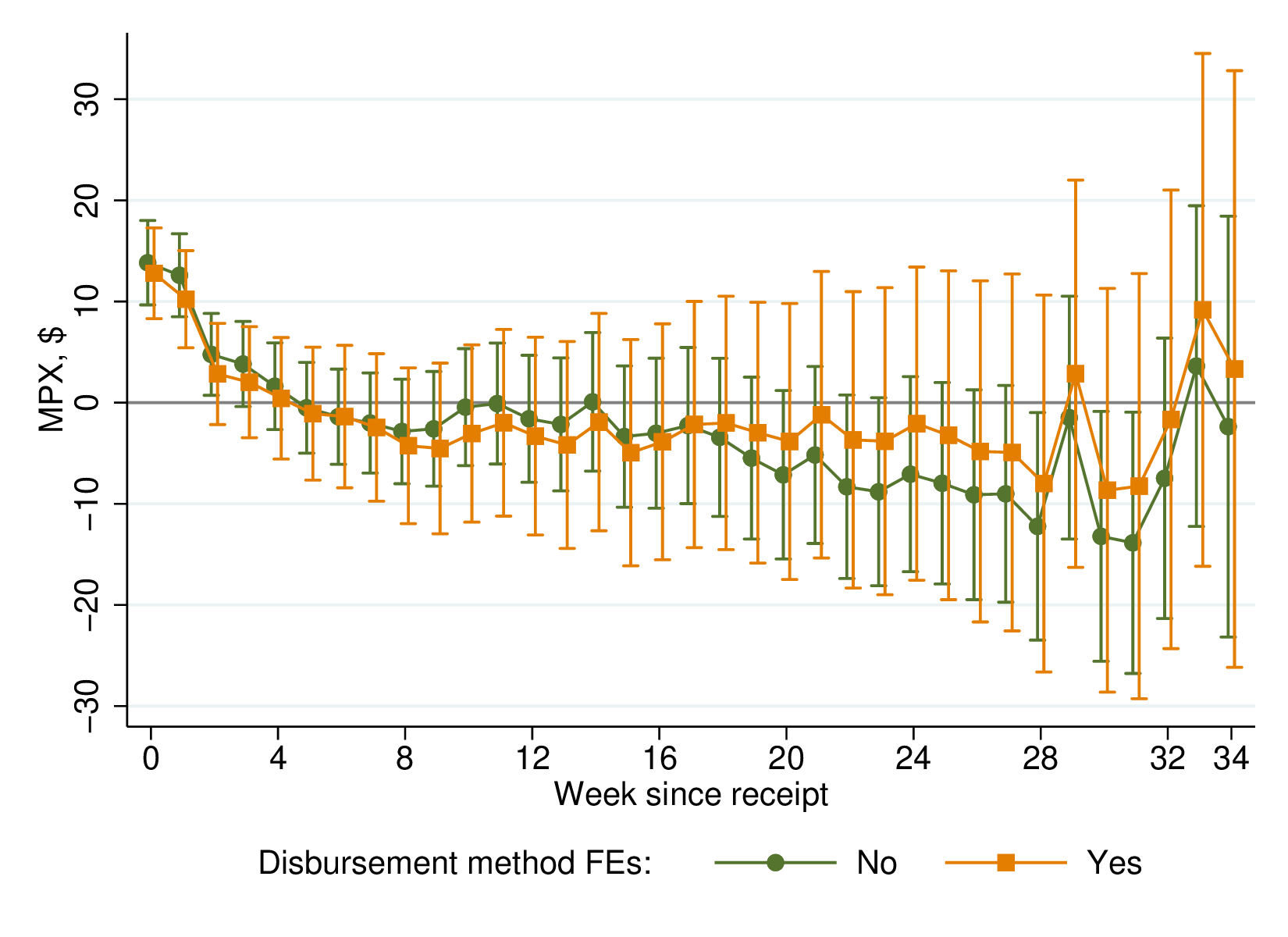} & \includegraphics[width=0.3\textwidth]{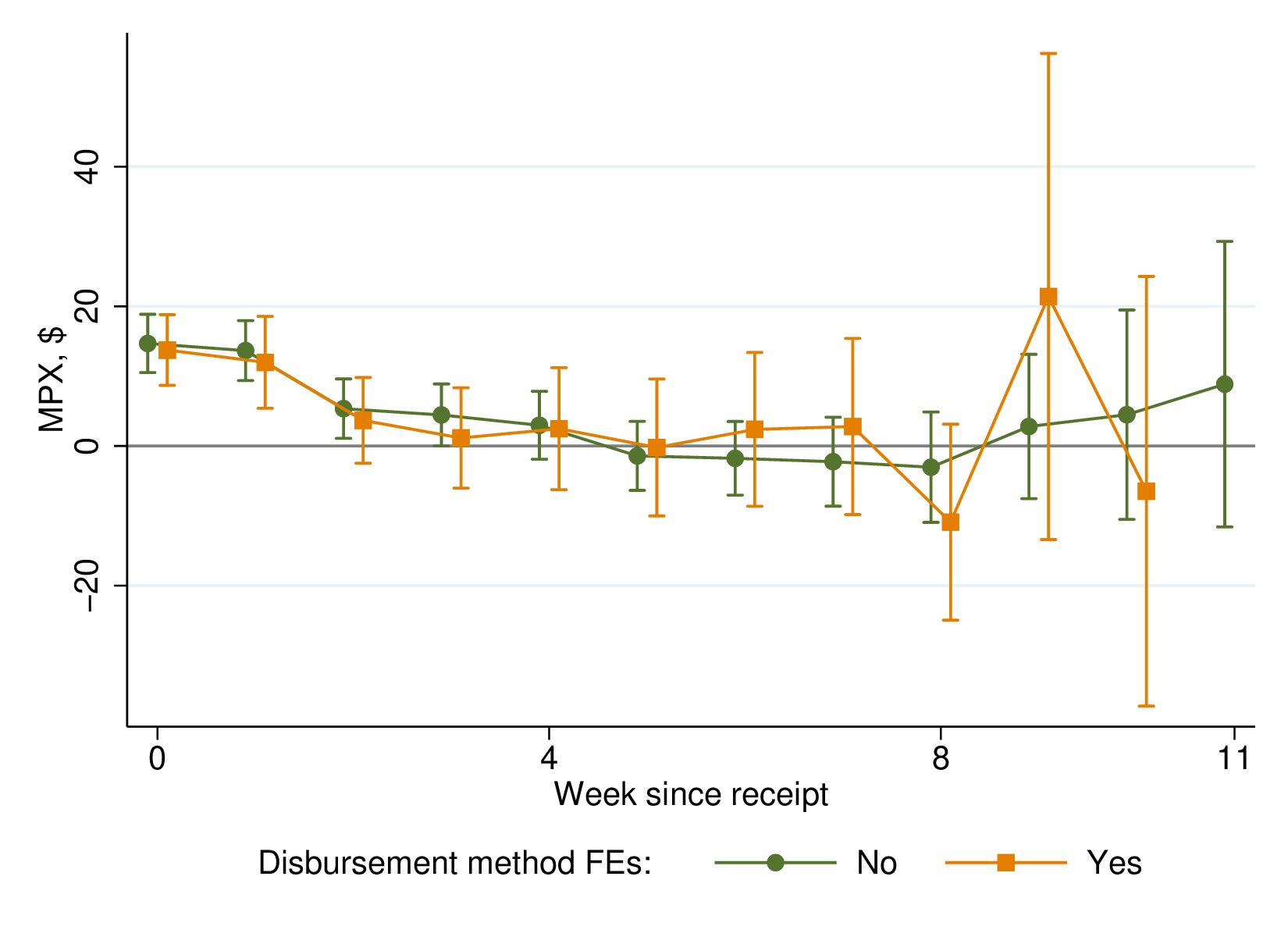}\tabularnewline
\end{tabular}
\par\end{centering}
\emph{\footnotesize{}Notes}{\footnotesize{}: Panels A\textendash C
plot MPX coefficients and 95\% confidence bands using the same specifications
as in \ref{tab:replication_BP}. Coefficients on the leads of treatment
are not shown. The last horizon in Panel B ($h=35$ weeks) and Panel
C ($h=12$ weeks without disbursement methods FEs or $h=11$ with
FEs) are suppressed because of the very large standard errors, due
to a limited sample size. Standard errors are clustered by household.}{\footnotesize\par}
\end{figure}

In \ref{fig:extrapolation}, we use the same specifications as in
\ref{tab:replication_BP} but we report the full set of dynamic estimates
for the treatment effects. Panel A reports the estimates from the
binned specification. With disbursement method FEs, the point estimates
are large and positive for all nine months following the receipt of
the tax rebate. Thus, due to the extrapolation resulting from binning,
this specification could be mistakenly interpreted as evidence for
a very large and persistent increase in spending. Without these FEs,
the estimates tend to hover around zero.

In Panel B, we show estimates with the conventional dynamic specification
without binning. Both specifications with and without disbursement
method FEs yield point estimates that are almost all negative in the
long run. Taken at face value, these estimates could misleadingly
suggest that households intertemporally substitute consumption by
making purchases at the time of tax rebate that they would have made
20 to 30 weeks later. As in Panel A, these point estimates are noisy
but could lend themselves to some economic interpretation.

In contrast, Panel C describes the results from the robust imputation
estimator, which does not allow extrapolation in the absence of an
explicit control group. This panel shows that, for the horizons for
which imputation is possible, there is no evidence of any impact on
spending beyond two to four weeks after tax rebate receipt. The patterns
are the same both with and without disbursement FEs. These results
highlight the practical relevance of the insights from \ref{subsec:Spurious}:
the imputation estimators avoid extrapolation, thus eliminating seemingly
unstable patterns found across conventional specifications.

\begin{figure}
\caption{Underidentification of the Fully-Dynamic Specification\label{fig:OLS underidentification}}

\begin{centering}
\medskip{}
\par\end{centering}
\begin{centering}
\begin{tabular}{ccc}
A: Without disbursement method FEs &  & Panel B: With disbursement method FEs\tabularnewline
\includegraphics[width=0.35\textwidth]{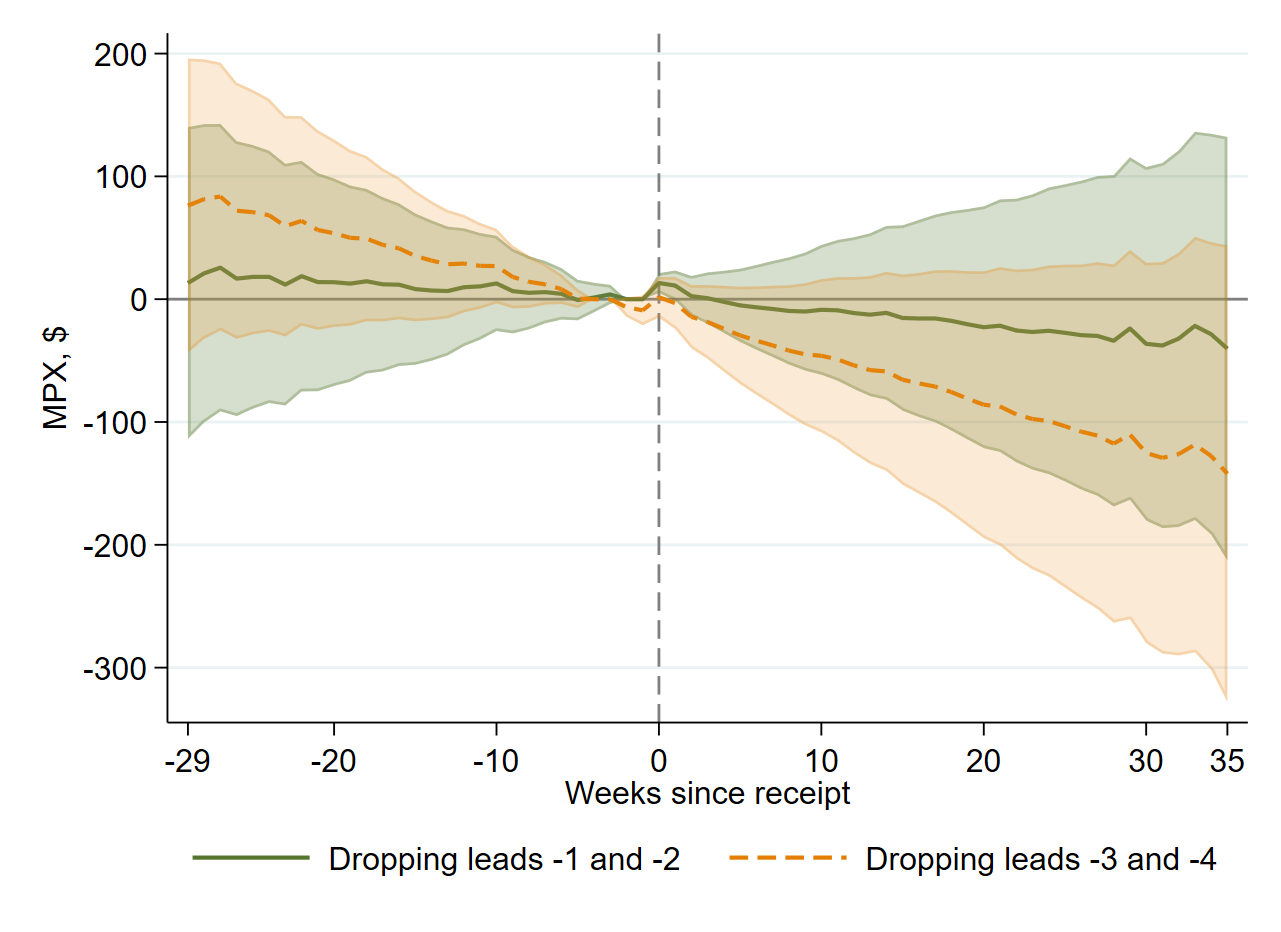} &  & \includegraphics[width=0.35\textwidth]{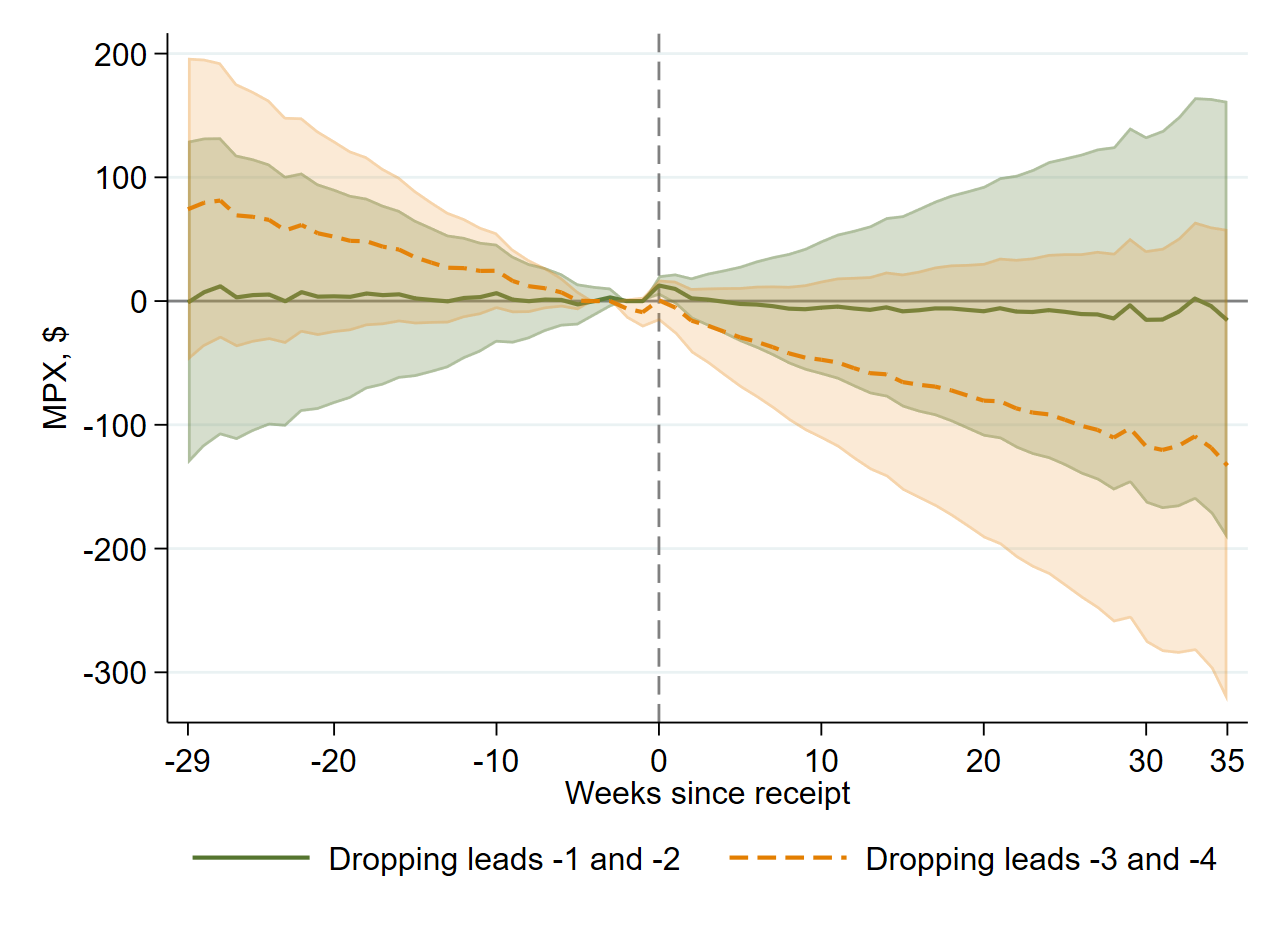}\tabularnewline
\end{tabular}
\par\end{centering}
\emph{\footnotesize{}Notes}{\footnotesize{}: This figure reports MPX
and pre-trend estimates and 95\% confidence bands for specification
\ref{eq:BP_spec} with all leads and lags of the tax receipt included,
except for two chosen as indicated. Standard errors are clustered
by household.}{\footnotesize\par}
\end{figure}

Finally, in \ref{fig:OLS underidentification} we illustrate the importance
of the insights on the underidentification of fully-dynamic specifications
from \ref{subsec:Underidentification}. Unlike earlier specifications,
which only included a small number of treatment leads, here we run
the specification \ref{eq:BP_spec} with a full set of weekly leads
and lags around tax rebate receipt. We drop two leads since the set
of lead and lag coefficients is only identified up to a linear trend,
as discussed in \ref{subsec:Underidentification}. We find that the
fully-dynamic estimates change drastically depending on which two
leads are dropped. We illustrate this by comparing the MPXs when dropping
leads $-1$ and $-2$ or $-3$ and $-4$. This shows another source
of instability in conventional practice, which the imputation estimator
directly avoids.

\subsubsection{Preferred Robust Estimates and Macroeconomic Implications \label{subsec:Implications}}

We now discuss the implications of our findings for the macroeconomics
literature. We proceed in two steps: selecting our preferred MPX estimate
from \ref{subsec:Biased-Weighting-with-Binning} for the Nielsen products
and then extrapolating it to broader consumption baskets, following
the strategy of BP.

Our preferred estimate for the average cumulative MPX out of the tax
rebate for the Nielsen products is \$30.5, corresponding to the imputation
estimator with disbursement method FEs (\ref{tab:replication_BP},
column 6) in the first month since the rebate. This constitutes 3.4\%
of the average rebate amount. We choose the specification with disbursement
method FEs because the variation in timing is more plausibly exogenous
within disbursement methods. We focus on the first (i.e., contemporaneous)
month and impose zero effects for the following months based on the
evidence from \ref{fig:OLS vs. robust imputation estimator,fig:extrapolation}
that the MPXs rapidly decay to zero, while estimation noise increases.\footnote{Our preferred estimate is robust to the choice of the time window:
the cumulative MPX would have been similar (at \$25.7 instead of \$30.5)
if we focused on the first two weeks only.} Finally, we choose the imputation estimator over conventional specifications
for its robustness properties. In contrast to columns 1\textendash 2
of \ref{tab:replication_BP}, it avoids the short-term bias due to
binning. Moreover, in contrast to column 3\textendash 4 it avoids
extrapolation of long-run effects (the estimates are similar for the
contemporaneous month). Robustness to treatment effect heterogeneity
is gained without an efficiency loss in this application: the standard
errors are similar across columns of \ref{tab:replication_BP}.

To obtain MPX estimates covering the full consumption basket, BP propose
to rescale the estimates obtained with the Nielsen data. This scaling
is done in three different ways: (i) by the ratio of spending per
capita in the National Income and Product Account (NIPA) and Nielsen
data; (ii) by the ratio of the self-reported change in spending on
all goods after the rebate relative to that on Nielsen goods alone;
(iii) by a factor based on the relative shares of spending and relative
responsiveness to the rebate across subcategories of goods as measured
in Consumer Expenditure Survey (CE). Using these three approaches
and BP's preferred MPX estimate (reproduced in our \ref{tab:replication_BP},
col. 1), they estimate that the tax rebate raised the annualized expenditure
growth rate by 1.3\textendash 1.9 percentage points (p.p.) in 2008Q2
and by 0.6\textendash 0.9 p.p. in 2008Q3, depending on the choice
of rescaling.

Applying the same scaling methods to our preferred MPX estimate for
the contemporaneous month and assuming zero response in the following
months paints a very different picture, with an increase in annualized
expenditure growth of only 0.8\textendash 1.1p.p. in 2008Q2 and 0.15\textendash 0.22p.p.
in 2008Q3. Our estimate implies a 40\% smaller response of consumption
expenditures in 2008Q2, and 75\% smaller in 2008Q3. Correspondingly,
while BP conclude that the propensity to spend at the individual level
from a tax rebate over three months since the rebate is between 51
and 75 percent, our preferred estimates are half as large, between
25 and 37 percent.\footnote{We obtain these estimates by replicating the first row of Panel A
of BP's Table 5 and using our preferred estimates.}

In \ref{tab:macro_models}, we summarize the MPX estimates for the
first quarter after tax rebate obtained with BP's and our preferred
specification. The first row reports the observed marginal propensity
to spend on products included in the Nielsen sample during that quarter,
as a fraction of the average rebate amount. The next rows rescale
these estimates to extrapolate the marginal propensity to spend to
broader samples, i.e. the full consumption basket (second row) and
nondurables (third row). For the full consumption basket, we implement
the three scaling procedures from BP and report the lower and upper
bounds; for nondurables we leverage the scaling method of \textcite{Laibson2022}.
The fourth row reports the model-consistent, or ``notional,'' marginal
propensity to consume (MPC) that can be used as a target for macroeconomic
models, also following the methodology of \textcite{Laibson2022}.\footnote{Standard macroeconomic models assume a notional consumption flow that
does not distinguish between nondurable and durable consumption. Prior
to \textcite{Laibson2022} showing that the notional MPC should be
the relevant target, state-of-the-art macroeconomic models targeted
nondurable MPX estimates. For instance, \textcite{Kaplan2014} targeted
the estimates from \textcite{Johnson2006a}, which are quantitatively
similar to those from BP when rescaled as in \ref{tab:macro_models},
despite using more aggregated data and a different rebate episode.} The estimates based on BP in column 1 are closely in line with the
literature: typical estimates of the quarterly MPX for all expenditures
range from 50-90\%, while estimates of the quarterly MPX for nondurable
expenditure range from 15-25\%.\footnote{\textcite{Laibson2022} provide a recent review of the literature.
\textcite{kaplan2020marginal} review nondurable MPX, and \textcite{di2020stock}
review total MPX.} In contrast, the imputation estimator in column 2 of \ref{tab:macro_models}
delivers estimates that are about half as large in all rows. These
smaller MPC estimates imply a lower effectiveness of fiscal stimulus.

{\small{}}
\begin{table}[t]
{\small{}\caption{{\small{}First-quarter MPX and MPC Estimates for Calibration of Macroeconomic
Models\label{tab:macro_models}}}
\medskip{}
}{\small\par}
\begin{centering}
{\small{}}%
\begin{tabular}{lcc}
\toprule 
\multirow{2}{*}{{\small{}Statistic}} & {\small{}Replication of} & \multirow{2}{*}{{\small{}Imputation Estimator}}\tabularnewline
 & {\small{}\textcite{Broda2014}} & \tabularnewline
 & {\small{}(1)} & {\small{}(2)}\tabularnewline
\midrule
{\small{}Nielsen MPX} & {\small{}6.7\%} & {\small{}3.4\%}\tabularnewline
{\small{}Total MPX} & {\small{}50.8\% to 74.8\%} & {\small{}24.8\% to 36.6\%}\tabularnewline
{\small{}Nondurable MPX} & {\small{}14.1\% to 20.8\%} & {\small{}6.9\% to 10.2\%}\tabularnewline
{\small{}Notional MPC} & {\small{}15.9\% to 23.4\%} & {\small{}7.8\% to 11.4\%}\tabularnewline
\bottomrule
\end{tabular}{\small{}\medskip{}
}{\small\par}
\par\end{centering}
\emph{\footnotesize{}Notes:}{\footnotesize{} This table reports the
first-quarter MPX and MPC using the preferred binned specification
of \textcite{Broda2014} and our preferred specification based on
the imputation estimator. The first row reports the marginal propensity
to spend on products included in the Nielsen sample, as a fraction
of the average rebate amount. The second row rescales these estimates
to extrapolate them to the marginal propensity to spend on all goods
using the three rescaling methods from \textcite{Broda2014}. The
ranges correspond to the lowest and highest values among the three
rescaling methods. To obtain the estimate for the nondurables MPX
in the third row, we use the scaling factor of \textcite{Laibson2022},
who show that the total MPX is equal to 3.6 times the nondurables
MPX. The fourth row also follows the methodology of \textcite{Laibson2022}
and reports the model-consistent (``notional'') MPC that can be
used as a target for macroeconomic models, equal to the total MPX
divided by 3.2.}{\footnotesize\par}
\end{table}
{\small\par}

Thus, our new estimates for the impact of the 2008 fiscal stimulus
on the U.S. economy yield two lessons for the calibration of macroeconomic
models: (1) that the targeted MPC should be significantly smaller
\textemdash{} about half as large \textemdash{} and (2) that it is
best to calibrate the model using weekly-level estimates of the MPC,
as we report in \ref{fig:OLS vs. robust imputation estimator}, rather
than monthly or, especially, quarterly MPC estimates, which are much
noisier. Indeed, models should reflect that most of the spending response
occurs in the very short run, in the first two to four weeks after
tax rebate receipt. 

\subsection{Efficiency Gains Relative to Alternative Robust Estimators \label{subsec:Efficiency-Gains-MPC}}

Finally, we compare the efficiency of the imputation estimator to
the alternative robust estimators of \textcite{DeChaisemartin2020}
and \textcite{Abraham2018}, abbreviated dCDH and SA. We document
the in-sample efficiency gains in \ref{fig:efficiency_real} by showing
the point estimates and confidence intervals for weekly average MPXs
based on the imputation estimator and the two alternatives in Panel
A. We use the specification without disbursement method FEs.\footnote{We implement the dCDH method using the\texttt{ csdid} Stata command
developed for the \textcite{Callaway2018} estimator: the two estimators
are identical absent additional controls, and \texttt{csdid} allows
for projection weights.} The point estimates are very similar for dCDH and the imputation
estimator, but they differ from those of SA, because this estimator
uses a much smaller control group (only the households who received
the rebate in the latest possible week) and is therefore much noisier.

Panel B zooms in on the efficiency comparison by reporting the lengths
of the confidence intervals for SA and dCDH relative to that of the
imputation estimator. The differences are large: the confidence interval
from dCDH is about 50\% longer for all periods, and 2\textendash 3.5
times longer for SA.

In \ref{subsec:Monte-Carlo-BP-Appx-NEW} we confirm these efficiency
gains, obtained from a single sample, in a Monte Carlo study based
on the BP data, for several data-generating processes. We find that
the imputation estimator has sizable efficiency advantages over alternative
robust estimators not only with spherical errors but also in presence
of heteroskedasticity, serial correlation, or both. Moreover, these
gains do not come at a cost of systematically higher sensitivity to
parallel trend violations. We also confirm that our analytical standard
errors have correct coverage.

\begin{figure}
\caption{Alternative Robust MPX Estimates and In-Sample Efficiency\label{fig:efficiency_real}}

{\small{}\medskip{}
}{\small\par}
\begin{centering}
\begin{tabular}{ccc}
{\small{}A: Point Estimates} &  & {\small{}B: Confidence Interval Lengths,}\tabularnewline
{\small{}and Confidence Intervals} &  & {\small{}Relative to the Imputation Estimator}\tabularnewline
\includegraphics[width=0.35\textwidth]{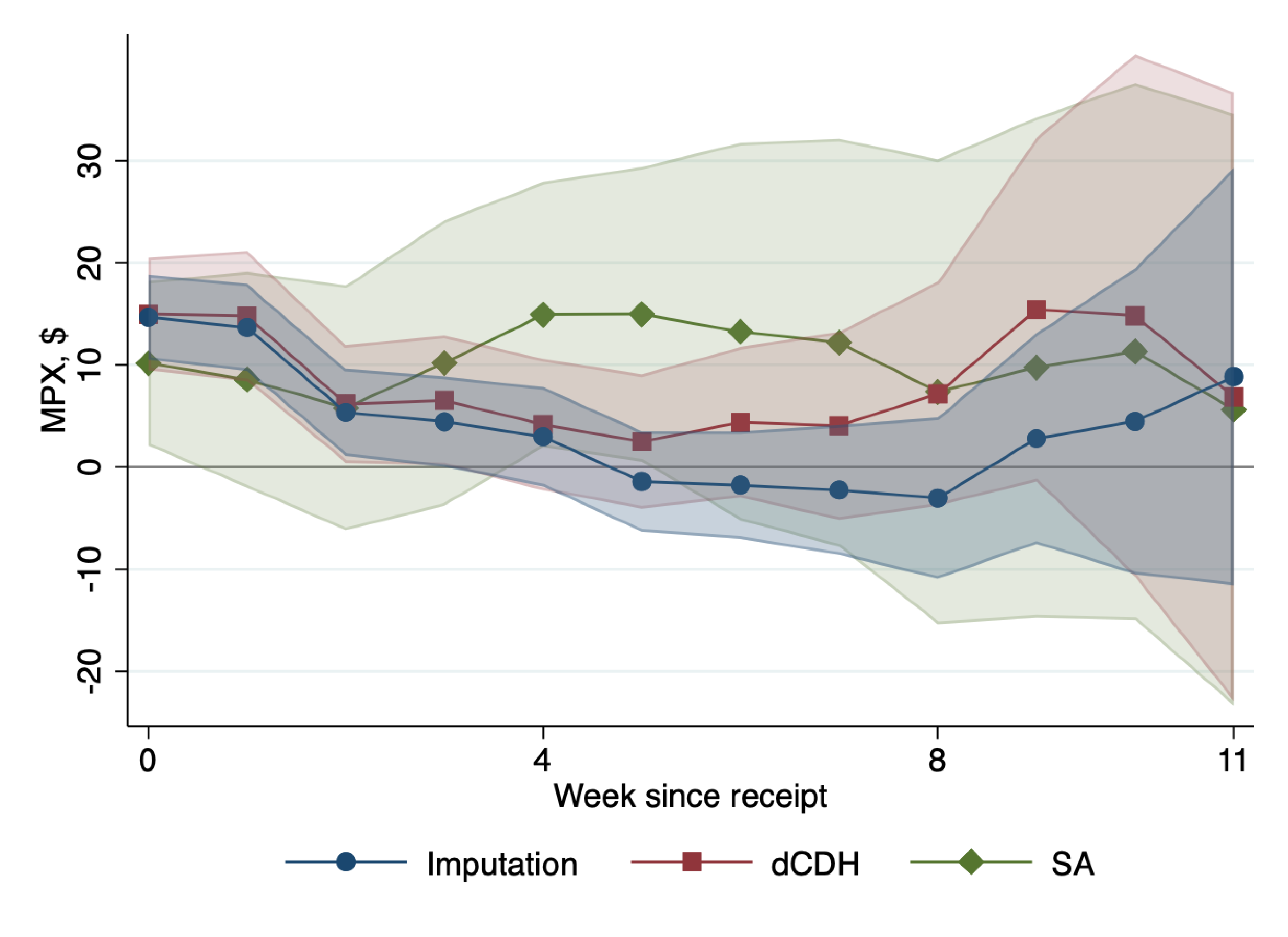} &  & \includegraphics[width=0.35\textwidth]{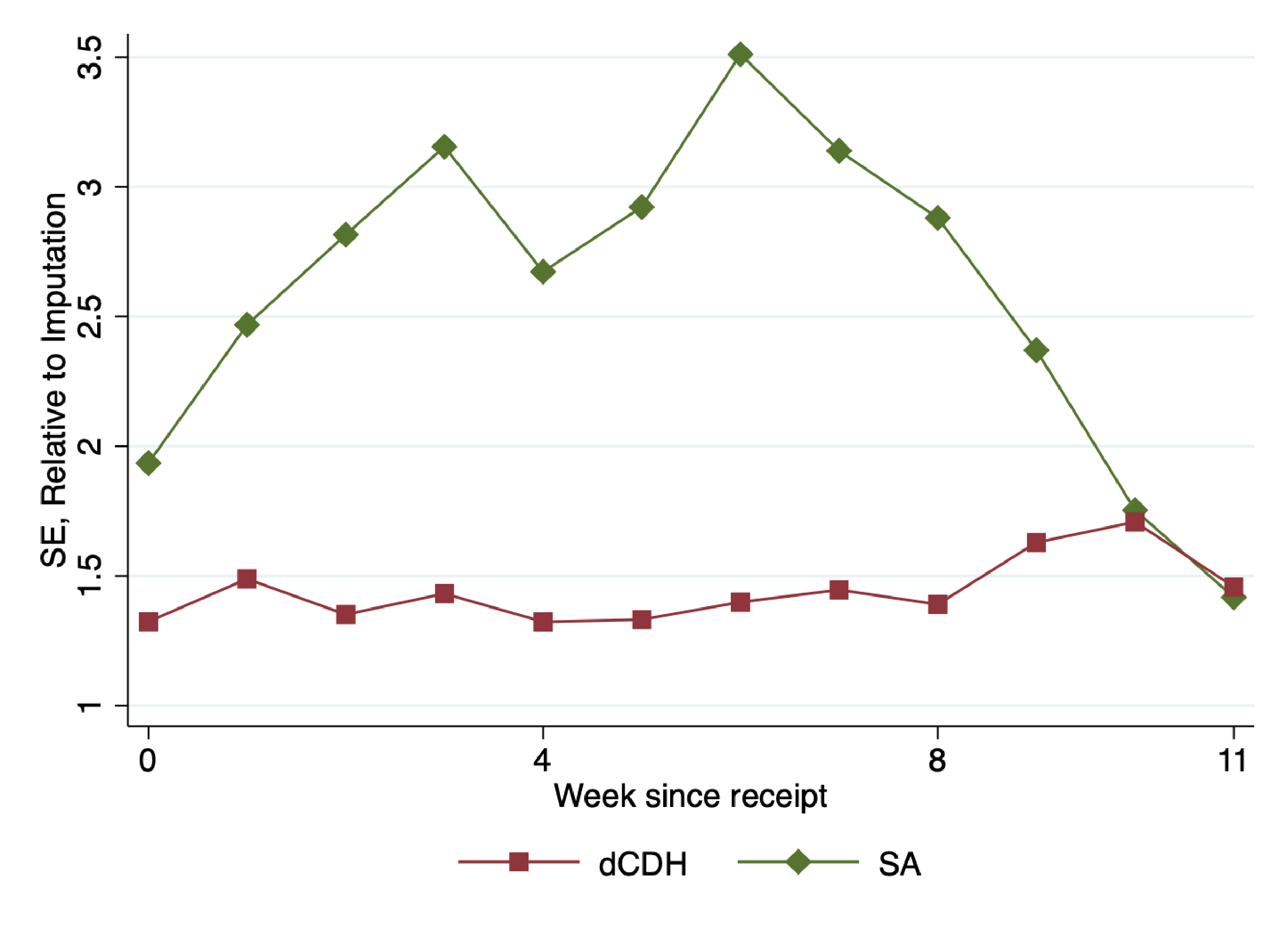}\tabularnewline
\end{tabular}
\par\end{centering}
\emph{\footnotesize{}Notes:}{\footnotesize{} Panel A shows the estimates
and 95\% confidence bands for the average MPXs by week since rebate
using three robust estimators: the imputation estimator, \textcite{DeChaisemartin2020}
(dCDH) and \textcite{Abraham2018} (SA). The specifications do not
include disbursement method fixed effects. Panel B reports the ratios
of the lengths of confidence intervals for dCDH and SA relative to
the imputation estimator. Standard errors are clustered by household.}{\footnotesize\par}
\end{figure}

\section{Conclusion}

In this paper, we provided a unified framework that formalizes an
explicit set of goals and assumptions underlying event study designs,
reveals and explains challenges with conventional practice, and yields
an efficient estimator. In a benchmark case where treatment-effect
heterogeneity remains unrestricted, this robust and efficient estimator
takes a particularly simple ``imputation'' form that estimates fixed
effects among the untreated observations only, imputes untreated outcomes
for treated observations, and then forms treatment-effect estimates
as weighted averages over the differences between actual and imputed
outcomes. We developed results for asymptotic inference and testing
and compared our approach to other estimators. We also highlighted
the importance of separating testing of identification assumptions
from estimation, which increases estimation efficiency and helps address
inference biases due to pre-testing. We demonstrated the practical
relevance of these insights in an empirical application documenting
that the notional marginal propensity to consume is between 8 and
11 percent in the first quarter, about half as large as benchmark
estimates.

\paragraph{Data Availability Statement}

The data underlying this article are publicly available on Zenodo, at \href{https://doi.org/10.5281/zenodo.10037585}{doi.org/10.5281/zenodo.10037585}.

\printbibliography

\appendix
\newpage\counterwithin{assumption}{section}\renewcommand{\theassumption}{A\arabic{assumption}} \counterwithin{prop}{section}\renewcommand{\theprop}{A\arabic{prop}}\counterwithin{cor}{section}\renewcommand{\thecor}{A\arabic{cor}}\pagenumbering{arabic}\renewcommand*{\thepage}{A\arabic{page}} \newrefsection

\part*{Online Appendix to \textquotedblleft Revisiting Event Study Designs:
Robust and Efficient Estimation\textquotedblright}

Kirill Borusyak, Xavier Jaravel, and Jann Spiess, September 2023

\section{Details and Additional Results\label{sec:appx}}

\subsection{Stochastic Regressors, Exogeneity, and Efficiency\label{subsec:Stochastic-Regressors}}

In the main article, we assume that event times, which periods are
observed for each unit, and the covariates are non-stochastic. In
this section, we show how the generalized fixed-effect model in \ref{sec:Setting}
can be obtained from a model where unit-level information is stochastic.
This connection allows us to (1) express assumptions about the exogeneity
of covariates in a more natural way and (2) expand our efficiency
statements beyond the fixed-regressor case. This discussion complements
the fixed-regressor approach in the main article, which we choose
because it allows for a direct connection to the Gauss\textendash Markov
theorem and its efficiency implications.

We now denote by $(Y_{i},Y_{i}(0),A_{i},A_{i}(0),X_{i},X_{i}(0),E_{i},\Omega_{i},\lambda_{i})_{i=1}^{I}$
the relevant variables for a set of $I$ units that are independently
distributed across $i$, and where each component of $Y_{i},Y_{i}(0),A_{i},A_{i}(0),\allowbreak X_{i},X_{i}(0)$
collects information for unit $i$ across all time periods, $X_{it}(0)$
and $A_{i}(0)$ are the potential values of covariates in the absence
of treatment, and $\Omega_{i}$ collects all periods $t$ for which
the outcomes for unit $i$ is observed. We write $\mathcal{I}_{i}=\left(A_{i}(0),X_{i}(0),E_{i},\Omega_{i},\lambda_{i}\right)$
for baseline information about unit $i$, and consider treatment effects
$\tau_{it}(\mathcal{I}_{i})=\expec{Y_{it}-Y_{it}(0)\mid\mathcal{I}_{i}}$
for all $t\geq E_{i}$. We now express assumptions on the distribution
of the units that allow us to perform causal inference on treatment
effects $\tau_{it}(\mathcal{I}_{i})$ based on observations $((Y_{it},A_{it},X_{it})_{t\in\Omega_{i}},E_{i})_{i=1}^{I}$,
which expresses that only realized covariates and outcomes for times
$t\in\Omega_{i}$ are observed. These assumptions are:
\begin{enumerate}
\item The generalized fixed-effects model holds for potential outcomes with
strictly exogenous baseline information: $Y_{it}(0)=A_{it}'(0)\lambda_{i}+X_{it}'(0)\delta+\varepsilon_{it}$
and $\expec{\varepsilon_{it}\mid\mathcal{I}_{i}}=0$ for all $i$
and $t\in\Omega_{i}$;
\item No anticipation: $Y_{it}=Y_{it}(0)$ for all $i$ and $t\in\Omega_{i}$
with $t<E_{i}$;
\item Covariates are causally unaffected by treatment: $A_{it}{=}A_{it}(0)$
and $X_{it}{=}X_{it}(0)$ for all $i$, $t\in\Omega_{i}$.
\end{enumerate}
Here, the strict exogeneity assumption in 1. implies that the relationship
of $Y_{it}(0)$ to the control covariates $X_{is}$ for $s\neq t$
and the treatment time $E_{i}$ is fully captured by concurrent covariates
$X_{it}$ and the random effect $\lambda_{i}$. The fixed-regressor
model is then obtained by conditioning on $\mathcal{I}=(\mathcal{I}_{i})_{i=1}^{I}$,
and fulfills \ref{assu:A1prime,assu:A2,assu:clustered} (provided
that the outcomes in the population distribution have bounded second
moments).

This formulation also allows us to extend the efficiency results
from \ref{sec:Imputation-Solution} to the case of stochastic regressors.
If we consider estimands $\tau_{w}(\mathcal{I})=\sum_{it\in\Omega}w_{it}(\mathcal{I})\:\tau_{it}(\mathcal{I}_{i})$
with weights that are $\mathcal{I}$-measurable and estimators $\hat{\tau}$
that are, conditional on baseline information $\mathcal{I}$, linear
in outcomes $Y_{it}$ and unbiased for $\tau_{w}(\mathcal{I})$, then
the efficiency results in our fixed-regressor setting still apply
since $\var{\hat{\tau}}=\var{\expec{\hat{\tau}\mid\mathcal{I}}}+\expec{\var{\hat{\tau}\mid\mathcal{I}}}$
by the law of total variance, where the first part is pinned down
by unbiasedness, and the second part is minimized by an estimator
$\hat{\tau}$ that is efficient conditional on baseline information.
The conditionally unbiased efficient estimator is therefore the imputation
estimator we describe in \ref{sec:Imputation-Solution}. To achieve
lower variance than the imputation estimator one would have to forgo
unbiasedness conditional on the baseline information $\mathcal{I}$.

\subsection{Sampling-Based Parallel Trend Assumptions\label{subsec:appx-Sampling}}

In this section, we show that the simple TWFE model in \ref{assu:A1}
can be obtained from parallel-trend assumptions formulated in terms
of group-wise averages in a population model with randomly sampled
units. Our model in \ref{sec:Setting} is therefore more general;
it captures common sampling and parallel-trend assumptions, while
it allows for more flexible modeling of heterogeneity and may be preferable
when the panel is incomplete.

To relate our approach to random sampling with parallel trends, assume
now that for a given number $T$ of periods we observe a complete
panel of $I$ units with outcomes $Y_{i}=(Y_{i1},\ldots,Y_{iT})$
and treatment time $E_{i}$. We write $Y_{i}(0)=(Y_{i1}(0),\ldots,Y_{iT}(0))$
for the corresponding vector of potential outcomes if the unit was
never to be treated. We assume that $(Y_{i,},Y_{i}(0),E_{i})$ are
\emph{iid }across units. We can then impose assumptions on the distribution
of $(Y_{i,},Y_{i}(0),E_{i})$:
\begin{enumerate}
\item Parallel trends: $\expec{Y_{i,t+1}(0)-Y_{it}(0)\mid E_{i}}$ does
not vary with $E_{i}$;
\item No anticipation: $Y_{it}=Y_{it}(0)$ for $t<E_{i}$, a.s.
\end{enumerate}
This formulation is similar to those in e.g. \textcite{DeChaisemartin2018,Abraham2018,Callaway2018},
although details vary about the cohorts and periods for which parallel
trends are assumed. In this model, we can define unit fixed effects
$\alpha_{i}=\expec{Y_{i1}(0)\mid E_{i}}$, period fixed effects $\beta_{t}=\expec{Y_{it}(0)-Y_{i1}(0)}$,
and treatment effects $\tau_{it}=\expec{Y_{it}-Y_{it}(0)\mid E_{i}}$,
where unit FEs and treatment effects do not vary within cohorts. We
then obtain the fixed-regressor model in \ref{sec:Setting} by conditioning
on event timing $\left\{ E_{i}\right\} _{i}$. Thus, \emph{iid} sampling,
a parallel-trend assumption defined on the cohort level, and a complete
panel jointly imply our assumptions in \ref{sec:Setting}.

Relative to a sampling-based approach, we see three main advantages
of our more general conditional fixed-effects model. First, considering
units individually allows us to estimate their treatment effects separately,
which permits the estimation of weighted treatment effect sums that
put non-constant weights on units with the same relative event time
(e.g. when estimating the gap between average treatment effects for
men and women two periods since the treatment onset). Second, we can
more naturally capture settings in which the convenience assumption
of sampling from a population is unrealistic, such as when we observe
all US states. Third, our unit fixed-effects model can handle missing
observations even when the composition of units changes over time,
while cohort-based parallel trend assumptions (e.g. that $\expec{Y_{i,t+1}(0)-Y_{it}(0)\mid E_{i}}$
does not vary with $E_{i}$) may be unattractive; similarly, estimation
with cohort, rather than individual, fixed effects can lead to biases
with incomplete panels.

\subsection{The Challenges Persist when Trimming around Event Times\label{subsec:Trimming}}

In this section, we show that the challenges described in \ref{sec:Conventional-Practice}
apply even if the sample is ``trimmed'' to a fixed window around
the event time. By trimming, we mean the relatively common practice
(sometimes called ``balancing'') of dropping observations more than
$a$ periods before or $b$ periods after the event, for some $a>0$
and $b\ge0$ (e.g. \textcite{Miller2017}, \textcite{Bartik2019a}).

One may think that the estimands from static and dynamic conventional
specifications may be close to their desirable targets in trimmed
samples because by construction the composition of units is unchanged
across horizons. However, we show that, with staggered adoption, the
weights implied by TWFE regressions can remain complex and highly
skewed. Intuitively, this follows because trimmed panels are necessarily
unbalanced in terms of units and periods; for instance, the composition
of units varies across periods by construction.\footnote{The issues are also not resolved by dropping unit FEs. Since the composition
of units is mechanically identical across horizons in a trimmed sample,
one may think that excluding unit FEs is innocuous and would make
the estimands of conventional specifications closer to its desirable
target. In fact, in the presence of period FEs (and residualizing
on them), unit indicators are no longer orthogonal to the lag and
lead indicators in trimmed samples. Thus, regressions without unit
FEs do not estimate weighted sums of treatment effects under \ref{assu:A1,assu:A2}.}

To illustrate how the limitations of conventional specifications persist
with trimming, and can be made worse, we present a numerical example.
In the example, negative weighting in the static TWFE regression is
even more severe after trimming. Similarly, weights implied by the
dynamic specification with trimming show an even larger skew than
without trimming. Finally, trimming can exacerbate the issue of spurious
identification of long-run effects. Since trimming also generally
reduces the sample size and thus estimation efficiency, it appears
difficult to justify.\footnote{One scenario in which trimming is justified within our framework is
when \ref{assu:A1,assu:A2} are imposed only on observations within
a certain window of the event.} Our numerical example considers five equal-sized cohorts treated
in periods $E_{i}=5,\dots,9$ and observed in periods $t=1,\dots,12$,
with \ref{assu:A1,assu:A2} satisfied in the complete panel. We suppose
the researcher decides to trim the sample, keeping four untreated
($K_{it}=-4,\dots,-1$) and four treated ($K_{it}=0,\dots,3$) periods
for each unit.

We show that the short-term bias and negative weighting of the static
regression persist with trimming. Using the Frisch\textendash Waugh\textendash Lowell
theorem, we compute that cumulative weights that the static specification
puts on horizons $0,\dots,3$ are 0.875, 0.425, 0.025, and $\lyxmathsym{\textendash}0.325$,
respectively (with the penultimate weight combining positive and negative
weights on different cohorts). There is more\emph{ }negative weighting
with trimming than in the complete panel: the total of all negative
weights is $-0.367$ with trimming, compared to $-0.316$ without.

The challenges pertaining to dynamic specifications persist, too.
We consider the semi-dynamic specification which includes all lags
and no leads of the event, with or without trimming.\footnote{Underidentification of fully-dynamic TWFE specifications in the absence
of never-treated units also applies directly with trimmed samples,
and is only more relevant as trimming may involve dropping never-treated
units.} In our example, we find that the estimands $\tau_{0},\dots,\tau_{3}$
are less homogeneous in trimmed samples in terms of the composition
of cohorts underlying them. \ref{fig:Trimming} reports the weights
that $\tau_{h}$ places on the observations $h$ periods after treatment
across various cohorts, for $h=0,\dots,3$. Panel A, which corresponds
to the trimmed sample, shows that all estimands place higher weights
on earlier-treated cohorts, but much more so for larger $h$. This
is in contrast to the complete panel (Panel B), where the differences
both across cohorts and across $h$ are much smaller.

Spurious identification of long-run effects can also be reinforced
by trimming, as observations for late-treated units in early periods
are dropped. In our example, there are no admissible DiD comparisons
for any unit observed three periods after treatment in the trimmed
sample, making $\tau_{3}$ identified through extrapolation of treatment
effects only. In contrast, admissible comparisons are available in
the complete panel: the cohort treated at $E_{i}=5$ and observed
at $t=8$ can be compared to that treated at $E_{i}=9$ and to any
period $t=1,\dots,4$ when both cohorts are not yet treated.\footnote{A less extreme version of this problem is that, like in complete panels,
$\tau_{0}$, $\tau_{1}$ and $\tau_{2}$ are confounded by the heterogeneity
of treatment effects at other horizons \parencite{Abraham2018}. The
argument here focuses on the horizons which are present in the trimmed
sample; naturally, trimming eliminates some horizons, such as $h=4,\dots,7$
in our example, for which causal effects are not identified in the
complete panel.}

\subsection{Sufficient Conditions for Identification}

In this section, we provide sufficient conditions for the identification
of $\tau_{w}$. For brevity of notation, we write $A_{it}^{\prime}\lambda_{i}+X_{it}^{\prime}\delta\equiv Z_{it}^{\prime}\pi$,
where all parameters $\lambda_{i}$ and $\delta$ are collected into
a single column vector $\pi$, with the corresponding covariates collected
in $Z_{it}$. We further let $Z$ be the matrix with $N$ rows $Z_{it}^{\prime}$,
and $Z_{1}$ and $Z_{0}$ its restrictions to observations $\Omega_{1}$
and $\Omega_{0}$, respectively. Under \ref{assu:A1prime,assu:A2},
we can therefore write $Y_{it}=Z_{it}^{\prime}\pi+D_{it}\tau_{it}+\varepsilon_{it}$.
\begin{prop}[Sufficient high- and low-level conditions for identification]
\label{prop:identification}Under \ref{assu:A1prime,assu:A2} and
null \ref{assu:A3prime}:
\end{prop}
\begin{enumerate}
\item In the general case $Y_{it}=Z_{it}'\pi_{it}+D_{it}\tau_{it}+\varepsilon_{it}$,
write $Z_{1}^{*}=(Z_{it}')_{it\in\Omega_{1},w_{it}\ne0}$ for the
matrix with rows $Z_{it}'$ corresponding to treated observations
with non-zero weights. Then $\tau_{w}$ is identified if $\text{rank}(Z_{0})=\text{rank}\left(\begin{psmallmatrix}Z_{1}^{*} \\ Z_{0}\end{psmallmatrix}\right)$.
\item In the specific case $Y_{it}=A_{t}^{\prime}\lambda_{i}+\beta_{t}+D_{it}\tau_{it}+\varepsilon_{it}$
with simple period FEs and unit-specific exposures $\lambda_{i}$
to factors $A_{t}$ that only vary by period (such as unit FEs or
unit-specific trends), the following two assumptions are sufficient
to ensure identification of $\tau_{w}$:
\begin{enumerate}
\item For every unit\textit{ $i$} with $\sum_{t;it\in\Omega_{1}}\left|w_{it}\right|>0$,
we have that $(A_{t}')_{t;it\in\Omega_{0}}$ is of full column rank.
(In the TWFE case $A_{t}=1$ this means that for every such unit there
is at least one untreated observation. Adding unit-specific trends,
two are required.)
\item There is at least one unit for which untreated outcomes are observed
for all time periods up to $T(w)=\max_{t}\left\{ \exists i:w_{it}\ne0\right\} $.
\end{enumerate}
\end{enumerate}
Here, by identification we mean that the parameter is recoverable
from the \emph{distributions} of all outcomes $Y_{it}$.\footnote{Recall that the set of observations and the realizations of treatments
and included covariates are non-stochastic.} Under these assumptions, two stronger implications are true: first,
knowledge of the expectations $\expec{Y_{it}}$ (rather than the full
distributions of $Y_{it}$) is sufficient to recover $\tau_{w}$;
and second, an unbiased estimator of $\tau_{w}$ exists. Additional
conditions, like those we impose in \ref{sec:Imputation-Solution},
are required for such an unbiased estimators to also be consistent
(cf. \cite{Goldsmith-Pinkham2013-eq} for a discussion of identification
and consistency in those separate steps). While \ref{prop:identification}
applies in the case of unrestricted treatment-effect heterogeneity,
it extends directly to non-trivial \ref{assu:A3prime} since adding
structure on the treatment effects only makes identification easier.

\subsection{Efficient Estimation beyond Spherical Errors\label{subsec:appx-GLS}}

In this section, we consider an extension of our efficient estimator
to the case of non-spherical errors, connecting to the idea of feasible
GLS estimation from \textcite{Wooldridge2021}. The estimator we propose
is robust to unrestricted treatment effect heterogeneity, and is efficient
if variance\textendash covariance matrices of error terms can be estimated
consistently. We illustrate these points in our general model $Y_{it}=Z_{it}'\pi+D_{it}\tau_{it}+\varepsilon_{it}$
with unrestricted treatment-effect heterogeneity. We assume observations
are independent across units $i$, but unit-level error vectors $\varepsilon_{i}=(\varepsilon_{it})_{t;it\in\Omega}$
can have non-trivial (and unit-specific) variance\textendash covariance
matrices $\Sigma_{i}=\textnormal{Var}(\varepsilon_{i})$, which we
assume are non-singular. We consider the class of linear estimators
$\hat{\tau}_{w}=\sum_{it\in\Omega}v_{it}Y_{it}$ that are unbiased
for $\tau_{w}=\sum_{it\in\Omega_{1}}w_{it}\tau_{it}$ under unrestricted
treatment-effect heterogeneity, which implies that (and indeed is
equivalent to) $v_{it}=w_{it}$ for all $it\in\Omega_{1}$ and $\sum_{it\in\Omega}Z_{it}v_{it}=0$.

We start by expressing the problem of finding the ``oracle'' efficient
unbiased estimator under non-spherical errors. The variance of the
linear estimator $\hat{\tau}_{w}=\sum_{it\in\Omega}v_{it}Y_{it}$
is $\var{\hat{\tau}_{w}}=\sum_{i}v_{i}'\Sigma_{i}v_{i}$ (where we
write $v_{i}=(v_{it})_{t;it\in\Omega}$). The efficient oracle estimator
is the GLS estimator that minimizes this variance subject to the unbiasedness
constraint $\expec{\hat{\tau}_{w}}=\tau_{w}$. Writing $\Sigma_{i}^{1}=\var{\varepsilon_{i}^{0}}$
and $\Sigma_{i}^{01}=\cov{\varepsilon_{i}^{0},\varepsilon_{i}^{1}}$
for the components of $\Sigma_{i}$ corresponding to the variance
of $\varepsilon_{i}^{0}=(\varepsilon_{it})_{t;it\in\Omega_{0}}$ and
its covariance with $\varepsilon_{i}^{1}=(\varepsilon_{it})_{t;it\in\Omega_{1}}$,
this optimal unbiased estimator has weights $v=(v^{1},v^{0})$ with
$v^{1}=w_{1}$ and $v^{0}$ that solve
\begin{align}
\min_{v^{0}} & \sum_{i}v_{i}^{0\prime}\Sigma_{i}^{0}v_{i}^{0}+2v_{i}^{0\prime}\Sigma_{i}^{01}w_{i} & \text{s.t. } & w_{1}'Z_{1}+v^{0\prime}Z_{0}=0\label{eq:optimization}
\end{align}
If we know the variances $\Sigma_{i}$, then we can find the efficient
estimator by solving this straightforward quadratic minimization problem
subject to linear constraints. We note that the imputation step, i.e.
the solution for $v^{0}$, now depends on the estimand via $w_{it}$,
in contrast to the case of spherical errors, which implies $\Sigma_{i}^{01}=\Null$.

If we do \textit{not} have ex-ante knowledge about the variances,
then the efficient GLS estimator is not generally feasible since $\Sigma_{i}$
may vary across $i$ with non-\emph{iid} data. At the same time, we
may still be able to obtain a feasible estimator that is asymptotically
efficient (in the sense that its asymptotic variance converges to
the minimal asymptotic variance among unbiased linear estimators)
given a simpler auxiliary model of the variances. Assume that $\Sigma_{i}=\bar{\Sigma}_{i}(\mu)$
with some low-dimensional parameter $\mu$. If required for estimating
the variance model consistently, we may further assume that treatment
effects themselves follow a model $\tau_{it}=\bar{\tau}_{it}(\mu)$
(where the sets of parameters may or may not overlap). We then consider
the following procedure: First, estimate $\mu$ by some $\hat{\mu}$.
Then, plug in $\hat{\mu}$ to obtain estimates of the variance matrices
$\Sigma_{i}$, the resulting plug-in estimate of $v$ from \ref{eq:optimization},
and the plug-in imputation estimator of $\tau_{w}$.

Consistency and unbiasedness of this estimator do not rely on correctness
of the auxiliary model of either treatment effects or variances. Under
regularity conditions similar to those in \ref{prop:asymptoticnormality,prop:normality-sufficient},
and when variance estimates are consistent, we obtain an efficient
and robust estimator in the following sense. First, if the model is
correct, then the estimator is asymptotically efficient. Second, even
if the model is not correct, the estimator is still consistent and
asymptotically unbiased for $\tau_{w}$ under unrestricted treatment-effect
heterogeneity. We label it the imputation GLS estimator.

We give two examples. First, one can show that our construction yields
as a special case the feasible GLS estimator proposed in \textcite{Wooldridge2021},
which we obtain in the case with two-way FEs, no other covariates,
and a complete panel, with an auxiliary model $\Sigma_{i}=\bar{\Sigma}(E_{i})$,
$\tau_{it}=\bar{\tau}_{t}(E_{i})$ that allows for arbitrary variation
across cohorts, but requires homogeneity within.

Second, one may want to leverage a more restricted model of variances
to improve small-sample performance. A model which allows the variances
to vary arbitrarily across cohort\textendash periods may have too
many parameters to allow for an effective estimation of the variance
as soon as $T$ is moderately large, in which case the theoretical
efficiency gain may not materialize in smaller samples (similar to
the findings of \textcite{Marcus2020} on the practical problems with
estimating TWFE models using efficient GMM). In such cases, the approach
outlined above allows one to instead assume a variance model with
fewer parameters. In particular, one could model the errors as an
autoregressive process with homoskedastic innovations (e.g., AR(1)),
estimate the autoregression parameters, and achieve an efficiency
improvement.\footnote{The estimation of the autoregression parameters can be done in two
ways, either using untreated observations only without an auxiliary
model of treatment effect homogeneity, or in the entire sample with
such a model.} Using an AR(1) process is enough to bridge the gap between two benchmarks:
spherical errors, as in the main text of our paper, and errors following
a random walk, as in \textcite{harmon_DiD}. In the latter case, \textcite{harmon_DiD}
has recently derived the Stepwise Difference-in-Differences estimator
as the efficient one and shown that it coincides with the \textcite{DeChaisemartin2018}
estimator for horizon $h=0$.

\subsection{Imputation and Weight Representations for Efficient Unbiased Estimators\label{subsec:appx-weights}}

In this section, we provide several representations of unbiased estimators,
both efficient and not, including for the case when a non-trivial
treatment-effect model is imposed.

We first show that even when a non-trivial model $\tau=\Gamma\theta$
is imposed, the imputation result for unbiased estimators from \ref{prop:generalimputation}
applies with respect to an adjusted estimand.
\begin{prop}[Imputation representation of unbiased estimators with a non-trivial
treatment effect model]
\label{prop:model-imputation}Under \ref{assu:A1prime,assu:A2},
any linear estimator $\hat{\tau}_{w}$ of $\tau_{w}$ that is unbiased
when the model $\tau=\Gamma\theta$ is imposed can be written as a
linear estimator of some alternatively weighted estimand $\tau_{v}=\sum_{it\in\Omega_{1}}v_{it}\tau_{it}$
that is unbiased without restrictions on the treatment effects. In
particular, the imputation representation in \ref{prop:generalimputation}
still applies with $\hat{\tau}_{w}=\sum_{it\in\Omega_{1}}v_{it}\hat{\tau}_{it}$
in the third step. The weights $v_{1}=\left(v_{it}\right)_{it\in\Omega_{1}}$
satisfy $\Gamma'w_{1}=\Gamma'v_{1}$, such that $\tau_{v}=\tau_{w}$
when the model $\tau=\Gamma\theta$ is correct.
\end{prop}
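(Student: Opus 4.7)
The plan is to reduce the statement to \ref{prop:generalimputation} by constructing the weights $\tilde{w}_1$ directly from the unbiasedness conditions of $\hat{\tau}_w$ in the restricted model, and then verifying that these same conditions render $\hat{\tau}_w$ unbiased for $\tau_{\tilde{w}}$ in the unrestricted model.

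First, I would write the linear estimator as $\hat{\tau}_w = c^{\prime}Y$ for some non-stochastic vector $c=(c_{it})_{it\in\Omega}$. Under \ref{assu:A1prime}, \ref{assu:NoAnticip}, and the model $\tau = \Gamma\theta$, taking expectations gives
\[
\E\bigl[\hat{\tau}_w\bigr] = \sum_{it\in\Omega} c_{it}\bigl(A_{it}^{\prime}\lambda_i + X_{it}^{\prime}\delta\bigr) + \sum_{it\in\Omega_1} c_{it}\Gamma_{it}^{\prime}\theta.
\]
Unbiasedness for $\tau_w = w_1^{\prime}\Gamma\theta$ for every admissible triple $(\{\lambda_i\},\delta,\theta)$ then forces three moment conditions: (a) $\sum_{t:\, it\in\Omega} c_{it}A_{it} = 0$ for each unit $i$; (b) $\sum_{it\in\Omega} c_{it}X_{it} = 0$; and (c) $\sum_{it\in\Omega_1} c_{it}\Gamma_{it} = \Gamma^{\prime}w_1$.

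Second, I would set $\tilde{w}_{it} = c_{it}$ for $it\in\Omega_1$, so that condition (c) reads exactly $\Gamma^{\prime}\tilde{w}_1 = \Gamma^{\prime}w_1$. Under the unrestricted model (null \ref{assu:A2}), treatment effects enter additively as $D_{it}\tau_{it}$ with arbitrary $\tau_{it}$, so
\[
\E\bigl[\hat{\tau}_w\bigr] = \sum_{it\in\Omega}c_{it}\bigl(A_{it}^{\prime}\lambda_i + X_{it}^{\prime}\delta\bigr) + \sum_{it\in\Omega_1}c_{it}\tau_{it};
\]
conditions (a) and (b) make the nuisance-parameter contributions vanish, yielding $\E[\hat{\tau}_w] = \sum_{it\in\Omega_1}\tilde{w}_{it}\tau_{it} = \tau_{\tilde{w}}$. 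Hence $\hat{\tau}_w$ is unbiased for $\tau_{\tilde{w}}$ without any restriction on the $\tau_{it}$, and whenever the model $\tau=\Gamma\theta$ does hold, $\tau_{\tilde{w}} = \tilde{w}_1^{\prime}\Gamma\theta = w_1^{\prime}\Gamma\theta = \tau_w$ by (c).

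Finally, having shown that $\hat{\tau}_w$ is a linear estimator unbiased for $\tau_{\tilde{w}}$ under \ref{assu:A1prime} and \ref{assu:NoAnticip} with no restriction on treatment effects, I would invoke \ref{prop:generalimputation} on $\hat{\tau}_w$ viewed as an estimator of $\tau_{\tilde{w}}$; this delivers the three-step imputation representation, which with the new weights reads $\hat{\tau}_w = \sum_{it\in\Omega_1}\tilde{w}_{it}\hat{\tau}_{it}$. The bulk of the work is the linear-algebra bookkeeping in going from unbiasedness in the restricted model to conditions (a)--(c); the only subtle point is the parametrization redundancy in $A_{it}^{\prime}\lambda_i + X_{it}^{\prime}\delta$ noted after \ref{assu:A1}, which one handles by reading (a) and (b) as conditions on the identified part of $\E[Y_{it}(0)]$ rather than on one particular representation of the nuisance parameters.
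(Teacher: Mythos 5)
Your proposal is correct and follows essentially the same route as the paper's proof: write $\hat{\tau}_{w}$ as a linear combination of outcomes, extract from unbiasedness under the restricted model the conditions that the nuisance-parameter coefficients vanish and that $\Gamma'\tilde{w}_{1}=\Gamma'w_{1}$, take $\tilde{w}_{1}$ to be the coefficients on the treated outcomes, and conclude unbiasedness for $\tau_{\tilde{w}}$ with arbitrary $\tau$ before invoking the general imputation representation. The only cosmetic difference is that you split the nuisance condition into unit-specific and common parts (with an appropriate caveat about parametrization redundancy), whereas the paper states it as the single condition $\tilde{w}_{1}'Z_{1}+\tilde{w}_{0}'Z_{0}=0'$.
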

We now provide explicit expressions for the weights implied by the
the efficient estimator $\hat{\tau}_{w}^{*}$, both with and without
\ref{assu:A3prime}.

\begin{prop}[Weight representation of efficient estimator]
\label{prop:olsweights}The efficient estimator from \ref{thm:OLS-BLUE}
can be represented as $\hat{\tau}_{w}^{*}=v^{*\prime}Y$ with the
weight vector $v^{\ast}=\left(v_{1}^{\ast\prime},v_{0}^{\ast\prime}\right)^{\prime}$
that satisfies \[v^{*}=\begin{psmallmatrix}\I-Z_{1}(Z'Z)^{-1}Z_{1}^{\prime}\\-Z_{0}(Z'Z)^{-1}Z_{1}^{\prime}\end{psmallmatrix}\Gamma(\Gamma'(\I-Z_{1}(Z'Z)^{-1}Z_{1}^{\prime})\Gamma)^{-1}\Gamma'w_{1}\]
and that does not depend on the realization of the $Y_{it}$. In the
special case of $\Gamma=\I_{N_{1}}$, $v_{1}^{*}=w_{1}$ and $v_{0}^{*}=-Z_{0}(Z_{0}'Z_{0})^{-1}Z_{1}'w_{1}$.
\end{prop}
With a non-trivial \ref{assu:A3prime}, we can characterize these
weights by a combination of variance minimization for the treated
observations and imputation for the untreated observations.
\begin{prop}[Characterization of weights in terms of imputation and variance minimization]
\label{prop:imputation-model}With a non-trivial model $\tau=\Gamma\theta$
for the treatment effects, the efficient estimator from \ref{thm:OLS-BLUE}
can be written as the efficient imputation estimator from \ref{thm:imputation}
under unrestricted heterogeneity with alternative weights $v_{1}^{*}$
on the treatment effects, which solve the variance-minimization problem
\begin{equation}
\min_{v_{1}}\:v_{1}^{\prime}\Phi^{-1}v_{1}\qquad\text{subject to }\Gamma'v_{1}=\Gamma'w_{1},\label{eq:weight-optimization}
\end{equation}
where $\Phi=\I_{N_{1}}-Z_{1}(Z'Z)^{-1}Z'_{1}$ is the variance of
the OLS estimator of $\tau$ in the case of spherical errors with
unit variance and unrestricted treatment-effect heterogeneity.
\end{prop}

\subsection{Low-Level Sufficient Asymptotic Conditions \label{subsec:appx-sufficient}}

In this section we develop low-level sufficient conditions for consistency,
asymptotic normality, and valid inference that directly restrict the
weights $w_{1}$ on treated observations and cohort sizes. We focus
on the case of a (possibly incomplete) panel with $I$ units and $T$
time periods with respective FEs and no other covariates. We first
state sufficient conditions for consistency in a panel where the number
of periods $T$ is allowed to grow slowly.
\begin{assumption}[Low-level sufficient conditions for consistency]
\label{assu:lowlevel-consistent}Assume that in the first period
every unit is observed and not treated, and that
\begin{enumerate}
\item $\sum_{i=1}^{I}\left(\sum_{t;D_{it}=1}|w_{it}|\right)^{2}\rightarrow0$,
i.e., the weights on treatment effects fulfill a (clustered) Herfindahl
condition;
\item $T\sum_{i=1}^{I}\left(\sum_{t;D_{it}=1}w_{it}\right)^{2}\rightarrow0$,
i.e. the concentration of unit net weights decays fast enough;
\item $T^{2}\sum_{t=2}^{T}\frac{(\sum_{i;D_{it}=1}w_{it})^{2}}{\sum_{i;D_{it}=0}1}\rightarrow0$,
i.e., the sum of squared total weight on observations treated at $t$
relative to the number of untreated observations in $t$ vanishes
sufficiently quickly.
\end{enumerate}
\end{assumption}
The first two conditions express that the weights do not concentrate
on too few units. They are similar, but not redundant unless $T$
is fixed; when some weights within a unit are negative and some positive,
the weights may cancel out within units, yielding the second condition
even when the first one is not fulfilled. The three conditions address
different sources of variation of the efficient estimator $\hat{\tau}_{w}^{*}$:
the first condition bounds the variation from the treated observations
themselves; the second, from estimating unit FEs from the untreated
observations; and the third, from estimating period FEs from the untreated
observations. Together with the conditions in the main text, \ref{assu:lowlevel-consistent}
yields consistency.
\begin{prop}[Consistency under low-level sufficient conditions]
\label{prop:consistency-sufficient} Suppose \ref{assu:A1,assu:A2,assu:clustered,assu:lowlevel-consistent}
hold and treatment effects are allowed to vary arbitrarily (trivial
\ref{assu:A3}). Then $\hat{\tau}_{w}^{*}$ from \ref{thm:OLS-BLUE}
is consistent for $\tau_{w}$.
\end{prop}
Next, we develop sufficient conditions that will imply asymptotic
normality and valid inference in the special case of a complete panel
with a fixed $T$.
\begin{assumption}[Low-level sufficient conditions for asymptotic normality and inference]
\label{assu:lowlevel-inference} The panel is complete, $T$ is fixed,
$\left|\tau_{it}\right|,$ $\left|\bar{\tau}_{it}\right|$ and $\expec{\varepsilon_{it}^{4}}$
are uniformly bounded, and
\begin{enumerate}
\item There is some constant $C$ such that for all $t$ and $i,j$ with
$E_{i}=E_{j}$, $D_{it}=1=D_{jt}$ and $w_{it}\neq0$, we have $w_{jt}\ne0$
and $\frac{|w_{it}|}{|w_{jt}|}\leq C$; that is, weights do not vary
too much within cohort\textendash periods;
\item $\sum_{i;E_{i}=e}1\rightarrow\infty$ for $e=2,\dots,T,\infty$; that
is, the size of all cohorts grows.\footnote{This assumption can be relaxed in the context of \ref{prop:normality-sufficient}
for some of the cohorts on which the estimand does not put any weight,
as long as we retain enough data to estimate period FEs consistently.}
\end{enumerate}
\end{assumption}
If we add the latter conditions, we also achieve asymptotic normality.
\begin{prop}[Asymptotic normality in short panels]
\label{prop:normality-sufficient} Suppose \ref{assu:A1,assu:A2,assu:clustered,assu:lowlevel-consistent,assu:lowlevel-inference}
hold, treatment effects are allowed to vary arbitrarily (trivial \ref{assu:A3}),
and the variance of $\hat{\tau}_{w}^{*}$ does not vanish too quickly,
$\liminf n_{H}\sigma_{w}^{2}>0$. Then $\hat{\tau}_{w}^{*}$ from
\ref{thm:OLS-BLUE} fulfills the conditions of \ref{prop:asymptoticnormality},
and therefore $\sigma_{w}^{-1}(\hat{\tau}_{w}-\tau_{w})\stackrel{d}{\rightarrow}\N(0,1).$
\end{prop}
Finally, these conditions are also sufficient for obtaining consistent
variance estimates.
\begin{prop}[Consistent variance estimation in short panels]
\label{prop:se-short-sufficient} Consider the estimator $\hat{\sigma}_{w}^{2}$
from \ref{thm:se}, and suppose \ref{assu:A1,assu:A2,assu:clustered,assu:lowlevel-consistent,assu:lowlevel-inference}
hold and treatment effects are allowed to vary arbitrarily (trivial
\ref{assu:A3}). Then $\hat{\tau}_{w}^{*}$ from \ref{thm:OLS-BLUE}
fulfills the conditions of \ref{thm:se} when $\tilde{\tau}_{it}$
are equal to an overall average or to cohort\textendash period averages
calculated as in equation \ref{eq:taubar-G}.
\end{prop}
We note that \ref{assu:lowlevel-consistent,assu:lowlevel-inference}
are fulfilled in particular in the common case of fixed $T$, growing
cohort sizes, and weights that are bounded in the sense that $\sum_{it\in\Omega_{1}}\left|w_{it}\right|<C$
and are constant within cohort\textendash period cells, $w_{it}=w_{jt}$
for $t\geq E_{i}=E_{j}$.

The above conditions apply to the case of unit and time fixed effects.
The weight conditions for consistency (\ref{prop:consistency}) and
asymptotic normality (\ref{prop:asymptoticnormality}) in the main
text can also be verified for more complicated models, such as when
there are unit-specific linear trends in addition to fixed effects.
As a concrete example, we consider the case of a complete panel of
$T=T_{0}+T_{1}$ periods with a single treated cohort of size $I_{1}$,
first treated at time $T_{0}+1$, along with a cohort of size $I_{0}$
of never-treated units. For now, suppose we are interested in the
average treatment effect in a specific period $t>T_{0}$ for the treated
cohort. With some unreported algebra we can calculate weights underlying
the imputation estimator with unit and period fixed effects and unit-specific
linear trends:
\[
v_{is}^{t}=\left(\one\left[s{=}t\right]-\frac{1{+}12\left(s{-}\frac{T_{0}{+}1}{2}\right)\left(t{-}\frac{T_{0}{+}1}{2}\right)/\left(T_{0}^{2}{-}1\right)}{T_{0}}\one\left[s{\le}T_{0}\right]\right)\cdot\left(\frac{1}{I_{1}}\one\left[E_{i}{=}T_{0}{+}1\right]-\frac{1}{I_{0}}\one\left[E_{i}{=}\infty\right]\right)
\]

One can check that these weights satisfy the bound $\sum_{i=1}^{I}\left(\sum_{s=1}^{T}|v_{is}^{t}|\right)^{2}\leq\left(\frac{6t}{T_{0}}\right)^{2}\left(\frac{1}{I_{0}}+\frac{1}{I_{1}}\right)$.
The weights therefore fulfill the Herfindahl condition of \ref{assu:Herfindahl}
(for which \ref{prop:consistency} establishes consistency) provided
that $I_{0},I_{1}\rightarrow\infty$ and $\frac{T}{T_{0}\sqrt{\min\{I_{0},I_{1}\}}}\rightarrow0$.
The same condition extends to \ref{prop:asymptoticnormality} (asymptotic
normality) and the consistency of the imputation estimator of convex
averages of the cohort treatment effects.

\subsection{Optimal Choices for Treatment Averages in Variance Estimation\label{subsec:appx-SE-Weights}}

The variance estimator in \ref{thm:se} is asymptotically conservative,
since it includes the variation $\sigma_{\tau}^{2}=\sum_{i}\left(\sum_{t;D_{it}=1}v_{it}(\tau_{it}-\bar{\tau}_{it})\right)^{2}\geq0$
of the treatment effects around their averages $\bar{\tau}_{it}$.
Here, we consider reasonable choices for the $\bar{\tau}_{it}$. As
discussed in \ref{subsec:Conservative-Inference}, a natural conservative
choice is to estimate a single average $\bar{\tau}_{it}\equiv\bar{\tau}$.
The$\bar{\tau}$ that minimizes $\sigma_{\tau}^{2}(\bar{\tau})=\sum_{i}\left(\sum_{t;D_{it}=1}v_{it}(\tau_{it}-\bar{\tau})\right)^{2}$
is

\begin{align}
\bar{\tau} & =\frac{\sum_{i}\left(\sum_{t;D_{it}=1}v_{it}\right)\left(\sum_{t;D_{it}=1}v_{it}\tau_{it}\right)}{\sum_{i}\left(\sum_{t;D_{it}=1}v_{it}\right)^{2}}.\label{eq:taubar-overall}
\end{align}
Indeed, $\sigma_{\tau}^{2}(\bar{\tau})$ is convex in $\bar{\tau}$,
and the first-order condition $0=\frac{\partial}{\partial\bar{\tau}}\sigma_{\tau}^{2}(\bar{\tau})$
locates the above solution. A natural estimator is its sample analog.\footnote{Also note that the denominator of \ref{eq:taubar-overall} is zero
if and only if the estimand makes only within-unit comparisons of
treatment effects over time; in that case the choice of $\bar{\tau}$
is inconsequential, as it cancels out in \ref{eq:SE}.}

When multiple group-wise averages are estimated, \ref{eq:taubar-G}
should be viewed as a heuristic extension of \ref{eq:taubar-overall}.
The optimal solution for $\tilde{\tau}_{g}$ is generally more complex,
as it may depend on treatment effect estimates and weights outside
the group. For concreteness, consider averages $\bar{\tau}_{et}$
that vary by cohort $E_{i}=e$ and period $t$, yielding excess variance
$\sigma_{\tau}^{2}=\sum_{i}\left(\sum_{t;D_{it}=1}v_{it}(\tau_{it}-\bar{\tau}_{E_{i}t})\right)^{2}$.
Then the first-order conditions $0=\frac{\partial}{\partial\bar{\tau}_{es}}\sum_{i}\left(\sum_{t;D_{it}=1}v_{it}(\tau_{it}-\bar{\tau}_{E_{i}t})\right)^{2}=-2\sum_{i;E_{i}=e}v_{is}\left(\sum_{t;D_{it}=1}v_{it}(\tau_{it}-\bar{\tau}_{et})\right)$
have to be solved for each cohort simultaneously across $t$, provided
that the estimator puts non-zero weight on multiple periods within
the same cohort. The exception is when only one period for every cohort
receives non-zero weight, as when estimating the ATT for a given number
of periods since treatment. In that situation the optimal solution
$\bar{\tau}_{et}=\frac{\sum_{i;E_{i}=e}v_{it}^{2}\tau_{it}}{\sum_{i;E_{i}=e}v_{it}^{2}}$
coincides with equation \ref{eq:taubar-overall}.

\subsection{Leave-Out Conservative Variance Estimation\label{subsec:appx-Leave-Out}}

Here we formalize the leave-out conservative variance estimator for
$\tau_{w}$, contrast it to leave-out variance estimators from prior
work, and provide a computationally efficient way of obtaining them.

As in equation \ref{eq:taubar-G}, suppose $\Omega_{1}$ is partitioned
into groups of treated observations given by $G_{g}$. Let $v_{ig}=\sum_{t;it\in G_{g}}v_{it}$
and $\hat{T}_{ig}=\left(\sum_{t;it\in G_{g}}v_{it}\hat{\tau}_{it}^{*}\right)/v_{ig}$
(with an arbitrary value if $v_{ig}=0$). Then our \emph{non}-leave-out
variance estimator is based on $\tilde{\tau}_{it}\equiv\tilde{\tau}_{g}=\frac{\sum_{j}v_{jg}^{2}\hat{T}_{jg}}{\sum_{j}v_{jg}^{2}}$for
$it\in G_{g}$. The leave-out version is defined as $\tilde{\tau}_{it}^{LO}=\frac{\sum_{j\ne i}v_{jg}^{2}\hat{T}_{jg}}{\sum_{j\ne i}v_{jg}^{2}}$.\footnote{It is well-defined whenever there are no groups in which only one
unit receives a non-zero total weight.}

Our leave-out strategy differs from leave-out variance estimation
procedures of \textcite{MacKinnonWhite1985} and \textcite{Kline2020}.
Those papers assume that the OLS parameter vector is still identified
when dropping individual units. In fact, Lemma 1 in \textcite{Kline2020}
shows that unbiased variance estimation is impossible outside that
case. Our results, in contrast, provide conservative inference in
models where that condition is violated. Indeed, the imputation estimator
of \ref{thm:imputation} is a special case of \ref{thm:OLS-BLUE}
for the model in which each treated observation gets its own treatment
effect parameter $\tau_{it}$. Naturally, $\tau_{it}$ cannot be unbiasedly
estimated without unit $i$ in the data, and thus the results from
\textcite{Kline2020} do not apply.

While computing $\tilde{\tau}_{it}^{LO}$ directly may be computationally
intensive, a more efficient procedure is available based on a simple
rescaling of residuals $\tilde{\varepsilon}_{it}$ in equation \ref{eq:SE}.
For $it\in G_{g}$ consider $\tilde{\varepsilon}_{it}^{LO}=\tilde{\varepsilon}_{it}\cdot\frac{1}{1-\left(v_{ig}^{2}/\sum_{j}v_{jg}^{2}\right)}$.
Then replacing $\tilde{\varepsilon}_{it}$ with $\tilde{\varepsilon}_{it}^{LO}$
in equation \ref{eq:SE} implements the leave-out adjustment (the
proof is based on straightforward algebra and available by request).
The leave-out variance estimator based on $\tilde{\varepsilon}_{it}^{LO}$
is available as an option in our \texttt{did\_imputation} command.

When all residuals use only out-of-cluster observations for estimating
$\delta$, too, the resulting variance estimator is exactly unbiased
for an upper bound on the true variance:
\begin{prop}
\label{prop:leaveout}Assume that $\hat{\tau}_{w}$ is unbiased for
$\tau_{w}$ and that the variance of $\hat{\tau}_{w}$ is estimated
by $\hat{\sigma}_{w}^{2}=\sum_{i}\left(\sum_{t;it\in\Omega}v_{it}\tilde{\varepsilon}_{it}^{LO}\right)^{2}$
with $\tilde{\varepsilon}_{it}^{LO}=Y_{it}-A_{i}^{\prime}\hat{\lambda}_{i}-X_{it}^{\prime}\hat{\delta}^{-i}-D_{it}\tilde{\tau}_{it}^{-i}$,
where $\hat{\delta}^{-i}$ and $\tilde{\tau}_{it}^{-i}$ are estimated
based on outcomes $Y_{js}$ with $j\neq i$ only, and $\hat{\delta}^{-i}$
is an unbiased estimator of $\delta$. Then $\expec{\hat{\sigma}_{w}^{2}}\geq\sigma_{w}^{2}$.
\end{prop}

\subsection{Computationally Efficient Calculation of $v_{it}^{\ast}$ Weights\label{subsec:appx-Weights-Imputation}}

In this section we provide a computationally efficient algorithm for
computing the weights $v_{it}^{\ast}$ corresponding to the \ref{thm:OLS-BLUE}
estimator. We first establish a general result about the weights underlying
any linear combination of estimates in any regression: that the weights
can themselves be represented as a linear combination of the regressors,
with certain coefficients. We then characterize a system of equations
for those coefficients and modify the iterative procedure for computing
OLS estimators with high-dimensional FEs to obtain them.
\begin{prop}
\textup{\label{lem:linear-est-weights}}Consider some scalar estimator
$\hat{\psi}_{w}=w^{\prime}\hat{\psi}$ obtained from an arbitrary
point-identified OLS regression $y_{j}=\psi^{\prime}z_{j}+\varepsilon_{j}$.
Like every linear estimator, it can be uniquely represented as $\hat{\psi}_{w}=v^{\prime}y$,
with $y$ collecting $y_{j}$ and with implied weights $v=\left(v_{j}\right)_{j}$
that do not depend on the outcome realizations. Then weights $v_{j}$
can be represented as a linear combination of $z_{j}$ in the sample,
i.e. $v_{j}=z_{j}^{\prime}\check{\psi}$ for some vector $\check{\psi}$,
the same for all $j$.
\end{prop}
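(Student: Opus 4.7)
The plan is to read off $\check{\psi}$ directly from the closed-form expression for OLS. Since the regression is point-identified, the design matrix $Z$ with rows $z_j'$ has full column rank, and $\hat{\psi}$ is obtained by ordinary least squares as $\hat{\psi}=(Z'Z)^{-1}Z'y$. Substituting this into $\hat{\psi}_w = w'\hat{\psi}$ gives
\begin{equation*}
\hat{\psi}_w \;=\; w'(Z'Z)^{-1}Z'y \;=\; \bigl(Z(Z'Z)^{-1}w\bigr)'y.
\end{equation*}
The $j$-th entry of the bracketed vector is $z_j'(Z'Z)^{-1}w$, so setting $\check{\psi}=(Z'Z)^{-1}w$ (which does not depend on $j$ and is pinned down entirely by the regressors and the combination weights $w$) yields $v_j=z_j'\check{\psi}$, as required.

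The second thing I would write down is the uniqueness claim in the statement, which is almost trivial but worth mentioning for completeness: since $v'y=\tilde{v}'y$ for all $y$ forces $v=\tilde{v}$, the vector of weights $v=(v_j)_j$ in the linear-in-outcomes representation $\hat{\psi}_w=v'y$ is unique, so the $v_j$ just constructed are indeed \emph{the} implied weights. Note that $\check{\psi}$ itself, as a vector in the parameter space, is unique because $Z'Z$ is invertible.

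There is essentially no obstacle here; the content of the proposition is a rewriting of the OLS formula that highlights which piece of it depends on the observation index $j$ and which piece does not. The only place where one has to be a little careful is in noting that ``point-identified regression'' is used to justify that $Z$ has full column rank, so that $(Z'Z)^{-1}$ exists and the OLS estimator is well-defined. This result is then invoked elsewhere (in the derivation of the efficient weights $v^*$ in \ref{cor:olsweights} and in the computationally efficient algorithm of \ref{subsec:appx-Weights-Imputation}) precisely because it tells us that the implied observation weights of any linear functional of the OLS coefficients lie in the row span of the regressors, which can be exploited to compute them via an auxiliary regression rather than by explicitly inverting $Z'Z$.
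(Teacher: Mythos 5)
Your proof is correct and is essentially identical to the paper's own argument: both substitute the OLS formula $\hat{\psi}=(Z'Z)^{-1}Z'y$ into $w'\hat{\psi}$ and read off $v=Z(Z'Z)^{-1}w$, i.e.\ $\check{\psi}=(Z'Z)^{-1}w$. The additional remarks on uniqueness and on where the result is used are fine but not needed.
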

We now apply this proposition to \ref{thm:OLS-BLUE}, with the general
model of $Y_{it}(0)=Z_{it}^{\prime}\pi+\varepsilon$. Then $\hat{\tau}_{w}^{\ast}=v^{\ast\prime}Y$,
where weights $v^{\ast}$ can be represented as $v_{it}^{\ast}=Z_{it}^{\prime}\check{\pi}+D_{it}\Gamma_{it}^{\prime}\check{\theta}.$
It remains to find the unknown $\check{\pi}$ and $\check{\theta}$
to obtain the $v^{\ast}$ weights. To do so, we use the properties
of $\hat{\tau}_{w}^{\ast}$. First, it equals to zero if $Y_{it}$
is linear in $Z_{it}$. Second, letting $\mu=\Gamma^{\prime}w_{1}$,
$\hat{\tau}_{w}^{\ast}=\mu_{j}$ if $Y_{it}=\Gamma_{it,j}D_{it}$
for all $it\in\Omega$, as in that case $\hat{\theta}_{j}=1$ and
$\hat{\theta}_{-j}=0$. Thus we have a system of equations which determine
$\check{\pi}$ and $\check{\theta}$:
\begin{align}
\sum_{it\in\Omega}Z_{it}\left(Z_{it}^{\prime}\check{\pi}+D_{it}\Gamma_{it}^{\prime}\check{\theta}\right) & =0; & \sum_{it\in\Omega_{1}}\Gamma_{it}\left(Z_{it}^{\prime}\check{\pi}+D_{it}\Gamma_{it}^{\prime}\check{\theta}\right) & =\Gamma^{\prime}w_{1}.\label{eq:iterative}
\end{align}

When $Z_{it}$ has a block structure in which some covariates are
FEs, solving this system iteratively is most convenient and computationally
efficient. For instance, suppose $Z_{it}^{\prime}\pi\equiv\alpha_{i}+X_{it}^{\prime}\delta$
and $\Gamma=\I_{N_{1}}$ (i.e. \ref{assu:A3prime} is trivial). Then
\ref{lem:linear-est-weights} implies $v_{it}^{\ast}=\check{\alpha}_{i}+X_{it}^{\prime}\check{\delta}+D_{it}^{\prime}\check{\theta}_{it}$
for all $it\in\Omega$, and the second part of \ref{eq:iterative}
simplifies to $v_{it}^{\ast}=w_{it}$ for all $it\in\Omega_{1}$.
Using this and the structure of $Z_{it}$, we rewrite the first part
of \ref{eq:iterative} as a system
\begin{align}
\sum_{t,\ it\in\Omega_{0}}\left(\check{\alpha}_{i}+X_{it}^{\prime}\check{\delta}\right) & =-\sum_{t,\ it\in\Omega_{1}}w_{it},\quad\text{for all }i;\label{eq:it-simple1}\\
\sum_{it\in\Omega_{0}}X_{it}\left(\check{\alpha}_{i}+X_{it}^{\prime}\check{\delta}\right) & =-\sum_{it\in\Omega_{1}}X_{it}w_{it}.\label{eq:it-simple2}
\end{align}
This system suggests an algorithm similar to iterative OLS \parencite[e.g.][]{Guimaraes2010}:
\begin{enumerate}
\item Given a guess of $\check{\delta}$, set $\check{\alpha}_{i}$ for
each unit to satisfy \ref{eq:it-simple1};
\item Given $\check{\alpha}_{i}$, set $\check{\delta}$ to satisfy \ref{eq:it-simple2};
\item Repeat until convergence.
\end{enumerate}

\subsection{Monte-Carlo Simulation\label{subsec:Monte-Carlo-BP-Appx-NEW}}

We now quantify the efficiency properties of the imputation estimator
in a simulation based on the BP application, both under homoskedastic,
serially uncorrelated error terms and without those assumptions. We
compare the imputation estimator to the alternative robust estimators
of \textcite{DeChaisemartin2020} and \textcite{Abraham2018}, as
well as to conventional dynamic specifications in terms of efficiency
and bias. We also verify correct coverage of our inference procedure
and check sensitivity of different estimators to anticipation effects.

\paragraph{Setting.}

Our baseline simulation uses the sample and treatment timing from
the application of \ref{sec:Application}; in line with \ref{sec:Setting},
we view them as non-stochastic. Our target estimands are the ATTs
$\tau_{h}$ for each horizon $h=0,\dots,11$ (in dollars); while $\tau_{0}$
is an average of the short-run effects on 21,545 units, $\tau_{11}$
corresponds to 1,498 units only. 

In order to simulate data, we obtain estimates of treatment effects
and residuals from the data. We first apply the imputation estimator
to the observed data (without disbursement method FEs) to obtain $\hat{\tau}_{it}$
and, for the untreated observations, residuals $\tilde{\varepsilon}_{it}$,
as in \ref{eq:SE}. We then compute $\tilde{\tau}_{it}$ as the average
of $\hat{\tau}_{it}$ within each cohort-period cell (using projection
weights) for horizons $h=0,\dots,3$. For $h>3$ we set $\tilde{\tau}_{it}=0$,
based on the main draft's findings that the MPXs decay quickly. We
set $\tilde{\varepsilon}_{it}=\hat{\tau}_{it}-\tilde{\tau}_{it}$
for treated observations, attributing the within-cell variation in
treatment effects to the residual.\footnote{This procedure almost coincides with that of \ref{subsec:Conservative-Inference},
except that we average $\hat{\tau}_{it}$ with projection weights,
rather than $v_{it}^{2}$ weights as in \ref{eq:taubar-G}. This is
to ensure that the true ATT in the simulation equals the imputation
estimate in real data. To analyze conventional estimators we also
need to choose $\tilde{\tau}_{it}$ and $\tilde{\varepsilon}_{it}$
for periods $t\ge30$ when all households are already treated and
$\hat{\beta}_{t}$ cannot be estimated, although $\hat{\alpha}_{i}$
can. There we set $\tilde{\tau}_{it}$ equal to the horizon-specific
imputation estimate for horizons $h=0,\dots,3$ and $\tilde{\tau}_{it}=0$
for $h>3$. Then we choose $\hat{\beta}_{t}$ such that $\tilde{\varepsilon}_{it}=Y_{it}-\hat{\alpha}_{i}-\hat{\beta}_{t}-\tilde{\tau}_{it}$
average to zero for each $t\ge30$.} In the simulations we use $\tilde{\tau}_{it}$ as the true treatment
effects and consider several DGPs for the error terms.\footnote{Note that the choice of $\tilde{\tau}_{it}$ does not affect the comparison
among estimators robust to treatment effect heterogeneity since they
are unbiased for the same estimand. The homogeneity of treatment effects
within cohort\textendash period cells also makes the standard errors
for the imputation estimator comparable to those of alternative robust
estimators (see also \ref{fn:comparable estimators}).} In the baseline version, the error terms are spherical with the variance
equal to $\sigma_{\tilde{\varepsilon}}^{2}$, the sample variance
of $\tilde{\varepsilon}_{it}$.

We compare the efficiency of the imputation estimator to those of
the \textcite{DeChaisemartin2020} and \textcite{Abraham2018} estimators
(denoted dCDH and SA, respectively). Two versions of the \textcite{Callaway2018}
estimator are equivalent to dCDH and SA, respectively, in this setting
with no additional covariates. Importantly, the estimands are exactly
the same for all three robust estimators we consider. We further consider
two versions of OLS estimators: semi-dynamic, which includes all lags
but no leads of treatment, and fully-dynamic, which further includes
all leads except $h=-1$ and $h=-5$.\footnote{We also considered dropping $h{=}{-}1$ and $h{=}{-}2$, as in \ref{fig:OLS underidentification},
but that was \emph{very} inefficient and is not reported.} For each estimator, we compute the underlying weights $v_{it}$ and
use them to calculate the exact properties of the estimators, such
as their finite-sample variance. For inference on the imputation estimator,
we use the results from \ref{subsec:Conservative-Inference}. In the
absence of treatment effect heterogeneity within cohort-period cells,
inference is asymptotically exact rather than conservative.

\paragraph{Results.}

We first compare the imputation estimator to its robust counterparts.
Panel 1a of \ref{fig:Monte-Carlo-Simulation} reports the exact standard
deviation of each estimator, for each horizon-specific estimand, and
Panel 2a shows it relative to the imputation estimator. In line with
\ref{thm:OLS-BLUE}, the imputation estimator is most efficient among
the robust estimators for all horizons, but the simulation is useful
in quantifying the magnitude of the efficiency gain. Under homoskedasticity,
the standard deviations (SDs) of the dCDH estimator are 35\textendash 39\%
higher than the SD of the imputation estimator, implying that 82\textendash 92\%
more units would be needed to obtain confidence intervals of a similar
length if these estimators are used. SDs of the SA estimator, which
uses a smaller reference group, are much higher yet: 39\textendash 290\%
higher than with imputation.\footnote{The earliest treatment happens in period 14, and dCDH and SA estimators
ignore all data prior to period 13, while the imputation estimator
uses all pre-periods. Yet, efficiency gains persist even if periods
1\textendash 12 are dropped. In unreported simulations, the SDs of
dCDH (SA) are still 19\textendash 35\% (32\textendash 245\%) higher
than for imputation.}

In Panels 1b\textendash 1d and 2b\textendash 2d of \ref{fig:Monte-Carlo-Simulation},
we report estimator SDs under deviations from spherical errors, such
that the relative efficiency of the imputation estimator is no longer
guaranteed by \ref{thm:OLS-BLUE}. In Panels 1b and 2b we make the
error terms heteroskedastic in a way implied by the data (while still
mutually independent): $\varepsilon_{it}\sim\mathcal{N}(0,\sigma_{it}^{2})$
for $\sigma_{it}^{2}=\tilde{\varepsilon}_{it}$. In Panels 1c and
2c we instead suppose that $\varepsilon_{it}$ follow a stationary
AR(1) process with $\var{\varepsilon_{it}}=\sigma_{\tilde{\varepsilon}}^{2}$
and $\cov{\varepsilon_{it},\varepsilon_{it'}}=0.5^{\lvert t-t'\rvert}$,
with $\varepsilon_{it}$ still normally distributed and independent
across units. In Panels 1d and 2d we finally learn the patterns of
both heteroskedasticity and serial correlation from the data by using
wild clustered bootstrap, i.e. setting $\varepsilon_{it}=z_{i}^{\ast}\tilde{\varepsilon}_{it}$
for $z_{i}^{\ast}$ drawn independently across households and taking
values $\pm1$ with equal probabilities. The imputation estimator
remains the most efficient of the three, with SDs of dCDH (SA) higher
by 32\textendash 56\% (48\textendash 264\%) in Panel 2b, 13\textendash 38\%
(43\textendash 343\%) in Panel 2c, and 34\textendash 66\% (42\textendash 249\%)
in Panel 2d. The only exception is for $h=0$ with AR(1) errors, where
the dCDH estimator has a 2\% lower SD.

Panels 1\textendash 2 of \ref{fig:Monte-Carlo-Simulation} also compare
the imputation estimator to conventional OLS estimators, showing that
the robustness to treatment effect heterogeneity associated with the
imputation estimator comes at a nearly zero efficiency cost for the
shorter horizons, $h=0,1,2$. Across the four DGPs, the SD of the
semi-dynamic OLS estimator is at most 13\% lower than that of the
imputation estimator. Moreover, the SD of fully-dynamic OLS is in
most cases \emph{higher }than that of the imputation estimator, as
the former does not fully impose \ref{assu:A2}. For longer horizons,
the efficiency advantage of the semi-dynamic estimator is larger.
These efficiency gains come at a cost of a small bias, reported in
Panel 3; the bias is guaranteed to be zero for the imputation estimator.
Because of the bias, the advantage of the semi-dynamic estimator shrinks
slightly, when measured by the root mean squared error in Panel 4.

In Panel 5 we consider the sensitivity of different estimators to
anticipation effects. We add an anticipation effect of \$10 to the
outcomes of each household in the period right before treatment, $t=E_{i}-1$,
and report the exact bias of each estimator due to this. We find the
imputation estimator to be uniformly \emph{less} sensitive than its
robust alternatives, although the semi-dynamic OLS specification has
an even smaller bias.\footnote{\label{fn:no_ranking}We do not, however, view the lower sensitivity
of the imputation estimator as its general feature. In unreported
results we find the imputation estimator to be more sensitive for
most horizons than dCDH and SA to anticipation effects at $t=E_{i}-2$.
\ref{lem:Equal-sensitivity} proves that there cannot be a clear ranking
between robust estimators in terms of sensitivity to various types
of anticipation effects.}

Finally, we verify that the inference procedure for the imputation
estimator proposed in \ref{subsec:Conservative-Inference} performs
well. Panel 6 reports its simulated coverage: the fraction of the
1,000 simulations in which a $t$-test does not reject the null of
$\tau_{h}$ taking its ``true'' value at the 5\% significance level
(as implemented via our Stata command \texttt{did\_imputation}), across
the four DGPs for the errors. The rejection rate is close to 5\% across
all horizons and DGPs.

Taken together, these results suggest that the imputation estimator
has sizable efficiency advantages over alternative robust estimators,
extending to heteroskedasticity and serial correlation of error terms.
The analysis further highlights that the efficiency gains do not come
at a cost of systematically higher sensitivity to parallel-trend violations.
Moreover, our analytical inference tools perform well in finite samples,
and the efficiency costs relative to dynamic OLS estimators are low,
except for long-run effects. Naturally, these results may be specific
to the data-generating processes we considered, and we recommend that
researchers perform similar simulations based on their data.

\subsection{Equal Sensitivity of Robust Estimators to Linear Pre-Trends\label{subsec:appx-Equal-Sensitivity}}
\begin{prop}
\label{lem:Equal-sensitivity}Suppose there are no never-treated units.
Then all linear estimators $\hat{\tau}_{w}$ of $\tau_{w}$ that are
unbiased under \ref{assu:A1,assu:A2} with a trivial \ref{assu:A3},
and thus robust to arbitrary treatment effect heterogeneity, have
the same sensitivity to linear anticipation trends. Specifically,
if $Y_{it}=\left(\kappa_{0}+\kappa_{1}K_{it}\right)\cdot\one\left[D_{it}=0\right]$
for some $\kappa_{0},\kappa_{1}\in\mathbb{R}$, then $\expec{\hat{\tau}_{w}}=-\sum_{it\in\Omega_{1}}w_{it}\left(\kappa_{0}+\kappa_{1}K_{it}\right)$.
\end{prop}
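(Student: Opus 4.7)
The plan is to show that the three linear constraints imposed by unbiasedness under Assumptions~\ref{assu:A1} and~\ref{assu:NoAnticip} with trivial treatment-effect model pin down exactly the aggregates of the estimator's weights that are activated when the outcome is a linear function of relative time. First, I would write any linear estimator in the form $\hat{\tau}_w = \sum_{it\in\Omega} v_{it} Y_{it}$, where $v_{it}$ does not depend on outcome realizations. Unbiasedness under the stated assumptions (for every choice of $\alpha_i$, $\beta_t$, and unrestricted $\tau_{it}$ on $\Omega_1$) then forces
\[
v_{it} = w_{it} \text{ for } it\in\Omega_1, \qquad \sum_{t:\,it\in\Omega} v_{it}=0 \text{ for each } i, \qquad \sum_{i:\,it\in\Omega} v_{it}=0 \text{ for each } t,
\]
where the first condition matches the coefficient of each $\tau_{it}$ to $w_{it}$, the second annihilates $\alpha_i$, and the third annihilates $\beta_t$ (with the $t$ chosen as the reference period following by summing the $\alpha_i$-constraints and subtracting).

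Next, I would substitute the given outcome process into $\E[\hat{\tau}_w]$. Since $Y_{it}=0$ on $\Omega_1$, we get $\E[\hat{\tau}_w] = \kappa_0 \sum_{it\in\Omega_0} v_{it} + \kappa_1 \sum_{it\in\Omega_0} v_{it} K_{it}$. Using the unit-sum constraint together with $v_{it}=w_{it}$ on $\Omega_1$,
\[
\sum_{it\in\Omega_0} v_{it} = \sum_{i}\sum_{t:\,it\in\Omega_0} v_{it} = -\sum_{i}\sum_{t:\,it\in\Omega_1} v_{it} = -\sum_{it\in\Omega_1} w_{it}.
\]
This handles the $\kappa_0$ term.

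For the $\kappa_1$ term, I would split $K_{it}=t-E_i$ and apply the time- and unit-level summation constraints separately:
\[
\sum_{it\in\Omega_0} v_{it}\, t = \sum_t t \sum_{i:\,it\in\Omega_0} v_{it} = -\sum_t t \sum_{i:\,it\in\Omega_1} w_{it} = -\sum_{it\in\Omega_1} w_{it}\, t,
\]
and analogously $\sum_{it\in\Omega_0} v_{it}\, E_i = -\sum_{it\in\Omega_1} w_{it}\, E_i$. Subtracting gives $\sum_{it\in\Omega_0} v_{it} K_{it} = -\sum_{it\in\Omega_1} w_{it} K_{it}$, so $\E[\hat{\tau}_w] = -\sum_{it\in\Omega_1} w_{it}(\kappa_0+\kappa_1 K_{it})$, as claimed. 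Since the right-hand side depends only on $w_1$, not on any feature of the estimator beyond the three unbiasedness constraints, all such estimators share this sensitivity.

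The argument is essentially just constraint bookkeeping, so there is no real obstacle; the one subtlety worth highlighting in the write-up is that the period-sum constraint $\sum_i v_{it}=0$ also holds for the reference period (where $\beta_t$ is normalized to zero), which follows by summing the unit-level constraints and invoking the period-sum constraints on all other periods. The assumption of no never-treated units enters only to ensure that $K_{it}=t-E_i$ is well-defined for every observation, so that the alternative outcome process is meaningful.
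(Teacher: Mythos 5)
Your proof is correct, but it takes a more computational route than the paper. The paper's proof is a two-line invariance argument: any estimator unbiased under \ref{assu:A1,assu:NoAnticip} must be numerically invariant to adding unit and period fixed effects to the outcome, so one may replace $Y_{it}$ by $\tilde{Y}_{it}=Y_{it}-\kappa_{0}-\kappa_{1}E_{i}+\kappa_{1}t=-\left(\kappa_{0}+\kappa_{1}K_{it}\right)D_{it}$; the transformed data satisfy the identifying assumptions with $\tau_{it}=-(\kappa_{0}+\kappa_{1}K_{it})$, and unbiasedness delivers the claim immediately. You instead unpack what that invariance means at the level of the weights $v_{it}$ \textemdash{} namely $v_{it}=w_{it}$ on $\Omega_{1}$, $\sum_{t}v_{it}=0$ per unit, $\sum_{i}v_{it}=0$ per period \textemdash{} and then verify the identity by splitting $K_{it}=t-E_{i}$ and applying the constraints dimension by dimension. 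The two arguments rest on the same fact (a linear function of relative time on $\Omega_{0}$ is absorbed by the two-way fixed effects, which is exactly why no never-treated units are needed: $\kappa_{0}+\kappa_{1}E_{i}$ must be a finite unit effect), but your version makes the mechanics explicit and correctly handles the reference-period normalization, while the paper's is shorter and generalizes more cleanly to any violation expressible as a combination of unit and period effects. Both are valid.
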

This result implies that there cannot be a general ranking in the
sensitivity of different estimators to anticipation effects. We formalize
this intuition by:
\begin{cor}
\label{lem:no_sensitivity_ranking}For an estimator $\hat{\tau}_{w}$
of $\tau_{w}$ that is unbiased under unrestricted treatment effect
heterogeneity and $y_{0}\in\mathbb{R}^{\left|\Omega_{0}\right|}$,
let $B_{\hat{\tau}_{w}}(y_{0})$ be its bias when $\expec{Y_{0}}=y_{0}$,
potentially violating \ref{assu:A1,assu:A2}. Consider two such linear
estimators, $\hat{\tau}_{w}^{A}$ and $\hat{\tau}_{w}^{B}$, and suppose
$\hat{\tau}_{w}^{A}$ is more biased for some $y_{0}$, $\lvert B_{\hat{\tau}_{w}^{A}}(y_{0})\rvert>\lvert B_{\hat{\tau}_{w}^{B}}(y_{0})\rvert$.
Then there exists $\tilde{y}_{0}\in\mathbb{R}^{\left|\Omega_{0}\right|}$
such that the comparison is reversed, $\lvert B_{\hat{\tau}_{w}^{B}}(\tilde{y}_{0})\rvert>\lvert B_{\hat{\tau}_{w}^{A}}(\tilde{y}_{0})\rvert$.
\end{cor}

\section{Proofs\label{sec:appx-proofs}}

In this appendix, we collect proofs for the results in the main text
and in the appendix. We first restate our general matrix notation
for convenience. Specifically, we stack the vectors $\lambda_{i}$
into a single vector $\lambda=(\lambda_{i})_{i}$. We set $Z_{it}=\begin{psmallmatrix}(\one\left[i=j\right]A_{jt})_{j}\\X_{it}\end{psmallmatrix},\pi=\begin{psmallmatrix}\lambda\\\delta\end{psmallmatrix}$
to summarize the nuisance component of the model. In matrix-vector
notation, we write $Y$ for the vector of outcomes, $Z=(A,X)$ for
the covariate matrix, $D$ for the matrix of indicators for treated
units, $\varepsilon$ for the vector of error terms, and $\Sigma=\var{\varepsilon}$
for their variance. We write $Y_{1},Z_{1}=(A_{1},X_{1}),D_{1},\varepsilon_{1}$
for the rows corresponding to treated observations ($it\in\Omega_{1}$);
in particular, $D_{1}=\I$. Analogously, we write $Y_{0},Z_{0}=(A_{0},X_{0}),D_{0},\varepsilon_{0}$
for the rows corresponding to untreated observations ($it\in\Omega_{0}$);
in particular, $D_{0}=\mathbb{O}$. We write $\tau=(\tau_{it})_{it\in\Omega_{1}}$
for the vector of treatment effects of the treated units, $\theta=(\theta_{m})_{m=1}^{N_{1}-M}$
for the vector of underlying parameters, $\Gamma=(\Gamma_{it,j})_{it\in\Omega_{1},j\in\{1,\ldots,N_{1}-M\}}$
for the matrix linking the two, and $w_{1}=(w_{it})_{it\in\Omega_{1}}$
for the weight vector. Then we can write model and estimand as
\[
Y=Z\pi+D\tau+\varepsilon,\qquad\tau=\Gamma\theta,\qquad\tau_{w}=w_{1}'\tau,
\]
where $\expec{\varepsilon}=\0$, $\var{\varepsilon}=\Sigma$, and
$\Sigma$ has block structure according to units $i$. For unit $i$,
we write $A_{i}=(A_{it})_{t},X_{i}=(X_{it})_{t},Y_{i}=(Y_{it})_{t},\varepsilon_{i}=(\varepsilon_{it})_{t},v_{i}=(v_{it})_{t}$
and denote by $\Sigma_{i}=\var{\varepsilon_{i}}$ the within-unit
variance\textendash covariance matrix of error terms.

\subsection{Proofs of Results from Main Text}
\begin{proof}[Proof of \ref{prop:Underid}]
In the absence of never-treated units and defining $\tau_{-1}=0$,
we can write $\sum_{h\ne-1}\tau_{h}\one\left[K_{it}=h\right]=\tau_{K_{it}}.$
Now consider some collection of $\tau_{h}$ (with $\tau_{-1}=0$)
and FEs $\tilde{\alpha}_{i}$ and $\tilde{\beta}_{t}$. For any $\kappa\in\R$,
let $\tau_{h}^{\star}=\tau_{h}+\kappa\left(h+1\right)$, $\tilde{\alpha}_{i}^{\star}=\tilde{\alpha}_{i}+\kappa\left(E_{i}-1\right)$,
and $\tilde{\beta}_{t}^{\star}=\tilde{\beta}_{t}-\kappa t$. Then
for any observation $it$, $\tilde{\alpha}_{i}^{\star}+\tilde{\beta}_{t}^{\star}+\tau_{K_{it}}^{\star}=\tilde{\alpha}_{i}+\tilde{\beta}_{t}+\tau_{K_{it}}+\kappa\left(E_{i}-1\right)-\kappa t+\kappa\left(t-E_{i}+1\right)=\tilde{\alpha}_{i}+\tilde{\beta}_{t}+\tau_{K_{it}}$,
and equation \ref{eq:dynamicOLS} has exactly the same fit under the
original and modified FEs and $\tau_{h}$ coefficients, indicating
perfect collinearity.
\end{proof}
\begin{proof}[Proof of \ref{prop:OLS-linear}]
By the Frisch\textendash Waugh\textendash Lovell theorem, $\tau^{\text{static}}$
can be obtained by a regression of $\expec{Y_{it}}=\alpha_{i}+\beta_{t}+\tau_{it}D_{it}$
on $\tilde{D}_{it}$ (without a constant), where $\tilde{D}_{it}=D_{it}-\check{\alpha}_{i}-\check{\beta}_{t}$
are the residuals from the auxiliary regression of $D_{it}$ on the
unit and period FEs. Thus, $\tau^{\text{static}}=\frac{\sum_{it\in\Omega}\tilde{D}_{it}\left(\alpha_{i}+\beta_{t}+\tau_{it}D_{it}\right)}{\sum_{it\in\Omega}\tilde{D}_{it}^{2}}.$
We have $\sum_{it\in\Omega}\tilde{D}_{it}\alpha_{i}=\sum_{i}\alpha_{i}\sum_{t;it\in\Omega}\tilde{D}_{it}=0$
because the residuals in the auxiliary regression are orthogonal to
all unit indicators. Analogously, $\sum_{it\in\Omega}\tilde{D}_{it}\beta_{t}=0$.
Defining 
\begin{equation}
w_{it}^{\text{static}}=\frac{\tilde{D}_{it}}{\sum_{it\in\Omega}\tilde{D}_{it}^{2}},\label{eq:weights-FWL}
\end{equation}
we have that $\tau^{\text{static}}=\sum_{it\in\Omega}w_{it}^{\text{static}}\tau_{it}D_{it}=\sum_{it\in\Omega_{1}}w_{it}^{\text{static}}\tau_{it}$,
as required. Clearly, $w_{it}^{\text{static}}$ do not depend on the
outcome realizations. Moreover, these weights add up to one because
we have that $\sum_{it\in\Omega_{1}}\tilde{D}_{it}=\sum_{it\in\Omega}\tilde{D}_{it}D_{it}=\sum_{it\in\Omega}\tilde{D}_{it}\left(\tilde{D}_{it}+\check{\alpha}_{i}+\check{\beta}_{t}\right)=\sum_{it\in\Omega}\tilde{D}_{it}^{2}$,
where the last equality holds because $\tilde{D}_{it}$ are orthogonal
to the unit and period FEs.
\end{proof}
\begin{proof}[Proof of \ref{prop:example-static}]
We use the characterization of the static TWFE specification weights
in equation \ref{eq:weights-FWL}. Given the complete panel, the regression
of $D_{it}$ on TWFE produces residuals $\tilde{D}_{it}=D_{it}-\bar{D}_{i\cdot}-\bar{D}_{\cdot t}+\bar{D}_{\cdot\cdot}$,
where $\bar{D}_{i\cdot}=\frac{1}{3}\sum_{t=1}^{3}D_{it}$, $\bar{D}_{\cdot t}=\frac{1}{2}\sum_{i=A,B}D_{it}$,
and $\bar{D}_{\cdot\cdot}=\frac{1}{6}\sum_{i=A,B}\sum_{t=1}^{3}D_{it}$
\parencite{DeChaisemartin2018}. Plugging in $\bar{D}_{A\cdot}=2/3$,
$\bar{D}_{B\cdot}=1/3$, $\bar{D}_{\cdot1}=0$, $\bar{D}_{\cdot2}=1/2$,
$\bar{D}_{\cdot3}=1$, and $\bar{D}_{\cdot\cdot}=1/2$, and computing
$\sum_{it\in\Omega_{1}}\tilde{D}_{it}=1/3$, we have
\begin{align*}
\hat{\tau}^{\text{static}} & =\frac{\sum_{it\in\Omega}\tilde{D}_{it}Y_{it}}{\sum_{it\in\Omega_{1}}\tilde{D}_{it}}=\left(Y_{A2}-Y_{B2}\right)-\frac{1}{2}\left(Y_{A1}-Y_{B1}\right)-\frac{1}{2}\left(Y_{A3}-Y_{B3}\right).
\end{align*}
Similarly, the static estimand equals $\tau^{\text{static}}=\frac{\sum_{it\in\Omega_{1}}\tilde{D}_{it}\tau_{it}}{\sum_{it\in\Omega_{1}}\tilde{D}_{it}}=\tau_{A2}-\frac{1}{2}\left(\tau_{A3}-\tau_{B3}\right).$
\end{proof}
\begin{proof}[Proof of \ref{prop:anynegweights}]
Let $I$ be the total number of units and $I_{\text{ever}}$ be that
of ever-treated units (those treated by $t=T$). As in the proofs
of \ref{prop:OLS-linear,prop:example-static}, $w_{it}^{\text{static}}=\tilde{D}_{it}/\sum_{it\in\Omega}\tilde{D}_{it}^{2}$
with 
\begin{equation}
\tilde{D}_{it}=D_{it}-\bar{D}_{i\cdot}-\bar{D}_{\cdot t}+\bar{D},\label{eq:completepanel_weight}
\end{equation}
for $\bar{D}_{i\cdot}=\frac{1}{T}\sum_{t=1}^{T}D_{it}=\frac{T-\left(E_{i}-1\right)}{T}$,
$\bar{D}_{\cdot t}=\frac{1}{I}\sum_{i=1}^{I}D_{it}=\frac{\sum_{i}\one\left[E_{i}\le t\right]}{I}$,
and $\bar{D}_{\cdot\cdot}=\frac{1}{IT}\sum_{i,t}D_{it}=\frac{I_{\text{ever}}\left(T-\left(E_{\text{first}}-1\right)\right)-N_{0}^{\ast}}{IT}$,
where the last expression holds because treated observations are those
belonging to ever-treated units in periods since $E_{\text{first}}$,
excluding the $N_{0}^{\ast}$ pre-treatment observations.

Since $\bar{D}_{i\cdot}$ monotonically declines in $E_{i}$ and $\bar{D}_{\cdot t}$
increases in $t$, equation \ref{eq:completepanel_weight} implies
that the lowest weight on any treated observation corresponds to $E_{i}=E_{\text{first}}$
and $t=T$. Thus, there is no negative weighting if and only if $\tilde{D}_{it}\ge0$
for such observations. Considering equation \ref{eq:completepanel_weight}
for one of those observations and using $\bar{D}_{\cdot T}=\frac{I_{\text{ever}}}{I}$,
there is no negative weighting if and only if 
\begin{align*}
0 & \le1-\frac{T-\left(E_{\text{first}}-1\right)}{T}-\frac{I_{\text{ever}}}{I}+\frac{I_{\text{ever}}\left(T-\left(E_{\text{first}}-1\right)\right)-N_{0}^{\ast}}{IT}=\frac{N_{1}^{\ast}-N_{0}^{\ast}}{IT},
\end{align*}
where the last equality uses the definition of $N_{1}^{\ast}$. This
is equivalent to $N_{1}^{\ast}\ge N_{0}^{\ast}$, as required.
\end{proof}
\begin{proof}[Proof of \ref{prop:No-Longrun}]
For any observation $it$, $K_{it}=t-E_{i}\ge\bar{H}$ implies $t\ge E_{i}+\bar{H}\ge\min_{i}E_{i}+\bar{H}\ge\max_{i}E_{i}$.
Thus, all observations considered by the estimand correspond to the
periods in which all units are already treated. Consider one such
period $t^{\star}$ for which the total weights are non-zero, $\sum_{i\colon K_{it^{\star}}\ge\bar{H}}w_{it}\ne0$.
(It exists because all weights $w_{it}$ are assumed non-negative
and are not identically zero.) Then consider a data-generating process
in which $\beta_{t^{\star}}$ is replaced with $\beta_{t^{\star}}-\kappa$
for some $\kappa\ne0$ and $\tau_{it^{\star}}$ is replaced with $\tau_{it^{\star}}+\kappa$
for all $i$. This DGP is observationally equivalent in terms of the
observed $Y_{it}$ and continues to satisfy \ref{assu:A1,assu:A2}.
Yet, the estimand differs by any arbitrary $\kappa\sum_{i\colon K_{it^{\star}}\ge\bar{H}}w_{it}\ne0$,
and is therefore not identified.
\end{proof}
\begin{proof}[Proof of \ref{thm:OLS-BLUE}]
The result is a consequence of the Gauss\textendash Markov theorem.
By itself, the Gauss\textendash Markov theorem establishes the efficiency
of the OLS estimator for $\theta$. Here, we extend efficiency of
the estimation of $\theta$ to efficiency for the estimation of the
weighted sum of treatment effects $\tau_{w}=w_{1}'\Gamma\theta$.
By construction, $\hat{\tau}_{w}^{*}=w_{1}^{\prime}\Gamma\hat{\theta}^{*}$
for the OLS estimator $\hat{\theta}^{*}$ of $\theta$. For every
linear estimator $\hat{\tau}_{w}$ that is unbiased for $\tau_{w}$
for all $\theta$ there is a linear unbiased estimator $\hat{\theta}$
of $\theta$ (with variance $\Sigma_{\hat{\theta}}$) for which $\tilde{\tau}_{w}=w_{1}'\Gamma\hat{\theta}$,
for example the estimator $\hat{\theta}=\hat{\theta}^{*}+\Gamma'w_{1}(w_{1}'\Gamma\Gamma'w_{1})^{-1}(\hat{\tau}_{w}-w_{1}'\Gamma\hat{\theta}^{*})$
(where we assume that $\Gamma'w_{1}\neq\0$, since the estimand and
efficient estimator are otherwise zero). Indeed, 
\begin{align*}
w_{1}'\Gamma\hat{\theta} & =w_{1}'\Gamma\hat{\theta}^{*}+w_{1}'\Gamma\Gamma'w_{1}(w_{1}'\Gamma\Gamma'w_{1})^{-1}(\hat{\tau}_{w}-w_{1}'\Gamma\hat{\theta}^{*})=\hat{\tau}_{w},\\
\expec{\hat{\theta}} & =\expec{\hat{\theta}^{*}}+\Gamma'w_{1}(w_{1}'\Gamma\Gamma'w_{1})^{-1}\left(\expec{\hat{\tau}_{w}}{-}\expec{w_{1}'\Gamma\hat{\theta}^{*}}\right)=\theta+\Gamma'w_{1}(w_{1}'\Gamma\Gamma'w_{1})^{-1}(w_{1}'\Gamma\theta{-}w_{1}'\Gamma\theta)=\theta.
\end{align*}
Under spherical errors, the OLS estimator $\hat{\theta}^{*}$ is the
best linear unbiased estimator for $\theta$ in the regression $Y=Z\pi+D\Gamma\theta+\varepsilon$,
with variance $\Sigma_{\hat{\theta}^{*}}$ that is minimal (in the
partial ordering implied by positive semi-definiteness) among the
variances of linear unbiased estimators of $\theta$ by Gauss\textendash Markov.
Hence, $\Var(w_{1}^{\prime}\Gamma\hat{\theta}^{*})-\Var(w_{1}^{\prime}\Gamma\hat{\theta})=w_{1}'\Gamma(\Sigma_{\hat{\theta}^{*}}-\Sigma_{\hat{\theta}})\Gamma'w_{1}\leq0$,
establishing efficiency. The efficient linear estimator of $\tau_{w}$
is also unique; indeed, if there was some unbiased linear estimator
$\hat{\tau}_{w}$ with $\var{\tilde{\tau}_{w}}=\var{\hat{\tau}_{w}^{*}}$
but $\expec{(\hat{\tau}_{w}-\hat{\tau}_{w}^{\ast})^{2}}>0$ (and thus
$\cov{\hat{\tau}_{w},\hat{\tau}_{w}^{\ast}}<\var{\hat{\tau}_{w}^{\ast}}$),
then $\frac{\hat{\tau}_{w}^{*}+\hat{\tau}_{w}}{2}$ would be an unbiased
linear estimator with lower variance.

It remains to argue unbiasedness in the heteroskedastic case. Since
the OLS estimator $\hat{\theta}^{*}$ remains unbiased for $\theta$
even without spherical errors, so does $\hat{\tau}_{w}^{*}=w_{1}'\Gamma\hat{\theta}^{*}$
for $\tau_{w}=w_{1}'\Gamma\theta$.
\end{proof}
\begin{proof}[Proof of \ref{thm:imputation}]
The efficient estimator from \ref{thm:OLS-BLUE} is obtained from
the OLS estimator $\hat{\tau}^{*}$ of $\tau$ in $Y=Z\pi+D\tau+\varepsilon$
by setting $\hat{\tau}_{w}^{*}=w_{1}'\hat{\tau}^{*}$. We now show
that the OLS estimator $\hat{\tau}_{w}^{\ast}$ has the desired imputation
form. By Frisch\textendash Waugh\textendash Lovell applied to residualization
of $Y$ and $Z$ with respect to $D$, the OLS estimate $\hat{\pi}^{*}$
of $\pi$ in the linear regression $Y=Z\pi+D\tau+\varepsilon$ is
the same as the estimate of $\pi$ in the linear regression $Y_{0}=Z_{0}\pi+\varepsilon$
restricted to $\Omega_{0}$. Indeed, \[\I_{N}-D(D'D)^{-1}D=\I_{N}-\begin{psmallmatrix}\I_{N_{1}} & \Null\\\Null & \Null\end{psmallmatrix}=\begin{psmallmatrix}\Null & \Null\\\Null & \I_{N_{0}}\end{psmallmatrix}\]and
thus $\hat{\pi}^{*}=(Z'(\I_{N}-D(D'D)^{-1}D)Z)^{-1}Z'(\I_{N}-D(D'D)^{-1}D)Y=(Z_{0}'Z_{0})^{-1}Z_{0}'Y_{0}$.

The OLS estimator $\hat{\tau}^{*}$ of $\tau$ in $Y=Z\pi+D\tau+\varepsilon$
is the same as the OLS estimator in the regression of $Y-Z\hat{\pi}^{*}$
on $D$. This is because $(\hat{\pi}^{*},\hat{\tau}^{*})$ minimize
the sum of squares $\|Y-Z\hat{\pi}-D\hat{\tau}\|^{2}$ over choices
of $(\hat{\pi},\hat{\tau})$, and $\hat{\tau}^{*}$ therefore minimizes
$\|Y-Z\hat{\pi}^{*}-D\hat{\tau}\|^{2}$ over $\hat{\tau}$ given $\hat{\pi}^{*}$.
We can therefore write $\hat{\tau}^{*}=(D'D)^{-1}D'(Y-Z\hat{\pi}^{*})=Y_{1}-Z_{1}\hat{\pi}^{*}$,
which has the desired imputation form.
\end{proof}
\begin{proof}[Proof of \ref{prop:generalimputation}]
 As in the proof of \ref{thm:OLS-BLUE}, there exists an unbiased
linear estimator $\hat{\tau}$ of $\tau$ such that $\hat{\tau}_{w}=w_{1}^{\prime}\hat{\tau}$.
We now construct a linear estimator $\hat{C}$ that is unbiased for
$Z_{1}\pi$, does not depend on $Y_{1}$, and yields $\hat{\tau}_{w}=w_{1}'(Y_{1}-\hat{C})$.
To this end, let $\hat{C}=Y_{1}-\hat{\tau}$. Then $\hat{C}$ is a
linear estimator with $\E[\hat{C}]=\E[Y_{1}]-\E[\hat{\tau}]=Z_{1}\pi$.
Since $\hat{C}$ is linear, we can write $\hat{C}=UY_{1}+VY_{0}$
for matrices $U,V$. Since $Z_{1}\pi=\expec{\hat{C}}=U\expec{Y_{1}}+V\expec{Y_{0}}=U\tau+(UZ_{1}+VZ_{0})\pi$
for all $\tau,\pi$, we must have $U=\Null$. Therefore, $\hat{C}$
satisfies the requirement of the proposition.
\end{proof}
\begin{proof}[Proof of \ref{prop:consistency}]
Writing $v_{i}=(v_{it})_{t}$, consistency follows from $\E\left[\hat{\tau}_{w}\right]=\tau_{w}$,
$\var{\hat{\tau}_{w}}=\sigma_{w}^{2}=\sum_{i=1}^{I}v_{i}'\Sigma_{i}v_{i}\leq\min\left\{ \sum_{i}\left(\sum_{t;it\in\Omega}|v_{it}|\right)^{2},R\left(\sum_{it\in\Omega}v_{it}^{2}\right)\right\} \bar{\sigma}^{2}\rightarrow0$.
Here the first case covers the condition $\sum_{i}\left(\sum_{t;it\in\Omega}|v_{it}|\right)^{2}\rightarrow0$
from \ref{assu:Herfindahl} and the second covers the alternative
condition $R\left(\sum_{it\in\Omega}v_{it}^{2}\right)\rightarrow0$
from \ref{fn:R-criterion}, where we use $v_{i}'\Sigma_{i}v_{i}\leq\|v\|^{2}\cdot(\text{max eigenvalue of \ensuremath{\Sigma_{i}}})\leq\|v\|^{2}\cdot R\cdot\bar{\sigma}^{2}$.
\end{proof}
\begin{proof}[Proof of \ref{prop:asymptoticnormality}]
Write $\hat{\tau}_{w}-\tau_{w}=\sum_{it\in\Omega}v_{it}\varepsilon_{it}=\sum_{i}\zeta_{i}$
with $\zeta_{i}=v_{i}'\varepsilon_{i},\E[\zeta_{i}]=0,\Var(\zeta_{i})=v_{i}'\Sigma_{i}v_{i}.$
Write $p=2+\kappa$ and let $q$ be the solution to $\frac{1}{p}+\frac{1}{q}=1$
(so in particular $1<q<2<p$). Using H\"{o}lder's inequality to establish
$\sum_{t;it\in\Omega}|v_{it}|^{\frac{1}{q}}\:\left(|v_{it}|^{\frac{1}{p}}|\varepsilon_{it}|\right)\leq\left(\sum_{t;it\in\Omega}|v_{it}|^{\frac{q}{q}}\right)^{\frac{1}{q}}\left(\sum_{t;it\in\Omega}|v_{it}|^{\frac{p}{p}}|\varepsilon_{it}|^{p}\right)^{\frac{1}{p}}$
for any $i$, and using $\E\left[|\varepsilon_{it}|^{p}\right]\leq C$
and $\frac{p}{q}+1=p$, we have that
\begin{align*}
 & \E[|\zeta_{i}|^{2+\kappa}]=\expec{\lvert\sum_{t;it\in\Omega}v_{it}\varepsilon_{it}\rvert^{p}}\leq\expec{\Big(\sum_{t;it\in\Omega}|v_{it}\varepsilon_{it}|\Big)^{p}}=\expec{\Big(\sum_{t;it\in\Omega}|v_{it}|^{\frac{1}{q}}\:|v_{it}|^{\frac{1}{p}}|\varepsilon_{it}|\Big)^{p}}\\
 & \leq\Big(\sum_{t;it\in\Omega}|v_{it}|\Big)^{\frac{p}{q}}\sum_{t;it\in\Omega}|v_{it}|\:\expec{|\varepsilon_{it}|^{p}}\leq\Big(\sum_{t;it\in\Omega}|v_{it}|\Big)^{\frac{p}{q}+1}C=\Big(\sum_{t;it\in\Omega}|v_{it}|\Big)^{p}C.
\end{align*}
Hence,
\begin{align*}
\frac{\sum_{i}\E[|\zeta_{i}|^{2+\kappa}]}{\Big(\sum_{i}\Var(\zeta_{i})\Big)^{\frac{2+\kappa}{2}}} & =\frac{\sum_{i}\E[|\zeta_{i}|^{2+\kappa}]}{\sigma_{w}^{2+\kappa}}\leq\frac{\sum_{i}\Big(\sum_{t;it\in\Omega}|v_{it}|\Big)^{2+\kappa}C}{\sigma_{w}^{2+\kappa}}=\frac{\wnorm^{2+\kappa}}{\sigma_{w}^{2+\kappa}}\sum_{i}\Big(\frac{\sum_{t;it\in\Omega}|v_{it}|}{\wnorm}\Big)^{2+\kappa}C\rightarrow0,
\end{align*}
where we have used that $\limsup\wnorm^{2}/\sigma_{w}^{2}<\infty$
and that $\sum_{i}\left(\frac{\sum_{t;it\in\Omega}|v_{it}|}{\wnorm}\right)^{2+\kappa}\rightarrow0$,
and so by the Lyapunov central limit theorem we have that $\sigma_{w}^{-1}(\hat{\tau}_{w}-\tau_{w})\stackrel{d}{\rightarrow}\N(0,1).$
\end{proof}
\begin{proof}[Proof of \ref{thm:se}]
We have that:
\begin{align*}
\text{for \ensuremath{D_{it}}} & =0, & \tilde{\varepsilon}_{it}=\hat{\varepsilon}_{it} & =Y_{it}-A_{it}'\hat{\lambda}_{i}^{*}-X_{it}'\hat{\delta}^{*}=\varepsilon_{it}-A_{it}'(\hat{\lambda}_{i}^{*}-\lambda_{i})-X_{it}'(\hat{\delta}^{*}-\delta),\\
\text{for \ensuremath{D_{it}}} & =1, & \tilde{\varepsilon}_{it}=\hat{\tau}_{it} & -\tilde{\tau}_{it}=Y_{it}-A_{it}'\hat{\lambda}_{i}^{*}-X_{it}'\hat{\delta}^{*}-\tilde{\tau}_{it}=\varepsilon_{it}+\tau_{it}-\tilde{\tau}_{it}-A_{it}'(\hat{\lambda}_{i}^{*}-\lambda_{i})-X_{it}'(\hat{\delta}^{*}-\delta),
\end{align*}
so, for each $i$, $\sum_{t;it\in\Omega}v_{it}\tilde{\varepsilon}_{it}=\sum_{t;D_{it}=0}v_{it}\hat{\varepsilon}_{it}+\sum_{t;D_{it}=1}v_{it}(\hat{\tau}_{it}-\tilde{\tau}_{it})=v_{i}'\varepsilon_{i}-v_{i}'A_{i}(\hat{\lambda}_{i}^{*}-\lambda_{i})-v_{i}'X_{i}(\hat{\delta}^{*}-\delta)+\sum_{t;D_{it}=1}v_{it}(\tau_{it}-\tilde{\tau}_{it})$.
Since the estimator $\hat{\tau}_{w}$ is invariant with respect to
a change in $\lambda_{i},$ and $\lambda_{i}$ only appears within
unit $i$ with covariates $A_{i}$, we must have that $0=\frac{\partial}{\partial\lambda_{i}}\hat{\tau}_{w}=\frac{\partial}{\partial\lambda_{i}}v_{i}'Y_{i}=v_{i}'\left(\frac{\partial}{\partial\lambda_{i}}Y_{i}\right)=v_{i}'A_{i}$.
Hence, $\hat{\sigma}_{w}^{2}=\sum_{i=1}^{I}\left(v_{i}'\varepsilon_{i}+\sum_{t,D_{it}=1}v_{it}(\tau_{it}-\tilde{\tau}_{it})-v_{i}'X_{i}(\hat{\delta}^{*}-\delta)\right)^{2}$.
We show convergence of $\hat{\sigma}_{w}^{2}$ by establishing consistency
of $\bar{\sigma}_{w}^{2}=\sum_{i=1}^{I}\left(v_{i}'\varepsilon_{i}+\sum_{t,D_{it}=1}v_{it}\allowbreak(\tau_{it}-\bar{\tau}_{it})\right)^{2}$
with respect to $\sigma_{w}^{2}+\sigma_{\tau}^{2}$ and showing that
$\hat{\sigma}_{w}^{2}$ is close to $\bar{\sigma}_{w}^{2}$. We proceed
in three steps.

First, we consider consistency of $\bar{\sigma}_{w}^{2}$. Note that
for any $i$, $\expec{\left(v_{i}'\varepsilon_{i}+\sum_{t,D_{it}=1}v_{it}(\tau_{it}-\bar{\tau}_{it})\right)^{2}}=v_{i}'\Sigma_{i}v_{i}+\left(\sum_{t,D_{it}=1}v_{it}(\tau_{it}-\bar{\tau}_{it})\right)^{2}$,
and, for $\expec{\varepsilon_{it}^{4}},|\tau_{it}|^{4},|\bar{\tau}_{it}|^{4}\leq C$,
\begin{align*}
 & \var{\left(v_{i}'\varepsilon_{i}+\sum_{t,D_{it}=1}v_{it}(\tau_{it}-\bar{\tau}_{it})\right)^{2}}\leq\E\left[\left(v_{i}'\varepsilon_{i}+\sum_{t,D_{it}=1}v_{it}(\tau_{it}-\bar{\tau}_{it})\right)^{4}\right]\\
 & \leq16\:\E\left[\left(v_{i}'\varepsilon_{i}\right)^{4}+\left(\sum_{t,D_{it}=1}v_{it}(\tau_{it}-\bar{\tau}_{it})\right)^{4}\right]\leq16\:(1+2^{4})\:\left(\sum_{t;it\in\Omega}|v_{it}|\right)^{4}C.
\end{align*}
Since $\sum_{i}\left(\frac{\sum_{t;it\in\Omega}|v_{it}|}{\wnorm}\right)^{4}\rightarrow0$,
we therefore have that
\begin{align*}
\var{\wnorm^{-2}\sum_{i}\left(v_{i}'\varepsilon_{i}+\sum_{t,D_{it}=1}v_{it}(\tau_{it}-\bar{\tau}_{it})\right)^{2}} & \leq16\:(1+2^{4})\:\sum_{i}\left(\frac{\sum_{t;it\in\Omega}|v_{it}|}{\wnorm}\right)^{4}C\rightarrow0
\end{align*}
and thus $\wnorm^{-2}\left(\bar{\sigma}_{w}^{2}-\sigma_{w}^{2}-\sigma_{\tau}^{2}\right)\stackrel{p}{\rightarrow}0$.

Second, we show that $\wnorm^{-2}(\sigma_{w}^{2}+\sigma_{\tau}^{2})$
is bounded from above, which we will later use to bound the difference
between $\hat{\sigma}_{w}^{2}$ and $\bar{\sigma}_{w}^{2}$. To establish
a bound, note that
\begin{align*}
 & \sigma_{w}^{2}{+}\sigma_{\tau}^{2}=\sum_{i}\var{v_{i}'\varepsilon_{i}}{+}\sum_{i}\left(\sum_{t,D_{it}=1}v_{it}(\tau_{it}{-}\bar{\tau}_{it})\right)^{2}\leq\sum_{i}\left(\sum_{t;it\in\Omega}|v_{it}|\right)^{2}\left(\max_{t;it\in\Omega}\var{\varepsilon_{it}}{+}\max_{t;it\in\Omega}(\tau_{it}{-}\bar{\tau}_{it})^{2}\right)\\
 & \leq\left(\max_{it\in\Omega}\sqrt{\expec{\varepsilon_{it}^{4}}}+2\max_{it\in\Omega}\tau_{it}^{2}+2\max_{it\in\Omega}\bar{\tau}_{it}^{2}\right)\sum_{i}\left(\sum_{t;it\in\Omega}|v_{it}|\right)^{2}\leq5\sqrt{C}\wnorm^{2},
\end{align*}
where the first line used that 
\begin{align*}
 & \var{v_{i}'\varepsilon_{i}}=\sum_{t,s;it,is\in\Omega}v_{it}v_{is}\cov{\varepsilon_{it},\varepsilon_{is}}\leq\sum_{t,s;it,is\in\Omega}|v_{it}|\:|v_{is}|\:\sqrt{\var{\varepsilon_{it}}}\sqrt{\var{\varepsilon_{is}}}\\
 & \leq\left(\max_{t;it\in\Omega}\var{\varepsilon_{it}}\right)\cdot\sum_{t,s;it,is\in\Omega}|v_{it}|\:|v_{is}|=\left(\max_{t;it\in\Omega}\var{\varepsilon_{it}}\right)\cdot\left(\sum_{t;it\in\Omega}|v_{it}|\right)^{2}.
\end{align*}
Hence, $\wnorm^{-2}(\sigma_{w}^{2}+\sigma_{\tau}^{2})$ is bounded.
Together with the previous point, it follows that $\wnorm^{-2}\allowbreak\sum_{i}\left(v_{i}'\varepsilon_{i}\allowbreak+\sum_{t,D_{it}=1}v_{it}(\tau_{it}-\bar{\tau}_{it})\right)^{2}=O_{p}(1).$

Third, we bound the difference between $\hat{\sigma}_{w}^{2}$ and
$\bar{\sigma}_{w}^{2}$ by
\begin{align*}
 & \lvert\hat{\sigma}_{w}^{2}-\bar{\sigma}_{w}^{2}\rvert=\left|\sum_{i}\left(v_{i}'\varepsilon_{i}+\sum_{t,D_{it}=1}v_{it}(\tau_{it}-\tilde{\tau}_{it})-v_{i}'X_{i}(\hat{\delta}^{*}-\delta)\right)^{2}-\sum_{i}\left(v_{i}'\varepsilon_{i}+\sum_{t,D_{it}=1}v_{it}(\tau_{it}-\bar{\tau}_{it})\right)^{2}\right|\\
 & \leq\sum_{i}\left(\sum_{t,D_{it}=1}v_{it}(\tilde{\tau}_{it}-\bar{\tau}_{it})+v_{i}'X_{i}(\hat{\delta}^{*}-\delta)\right)^{2}\\
 & \phantom{\le}+2\underbrace{\left|\sum_{i}\left(\sum_{t,D_{it}=1}v_{it}(\tilde{\tau}_{it}-\bar{\tau}_{it})+v_{i}'X_{i}(\hat{\delta}^{*}-\delta)\right)\left(v_{i}'\varepsilon_{i}+\sum_{t,D_{it}=1}v_{it}(\tau_{it}-\bar{\tau}_{it})\right)\right|}_{\leq\sqrt{\sum_{i}\left(\sum_{t,D_{it}=1}v_{it}(\tilde{\tau}_{it}-\bar{\tau}_{it})+v_{i}'X_{i}(\hat{\delta}^{*}-\delta)\right)^{2}}\sqrt{\sum_{i}\left(v_{i}'\varepsilon_{i}+\sum_{t,D_{it}=1}v_{it}(\tau_{it}-\bar{\tau}_{it})\right)^{2}}}.
\end{align*}
To bound the $\sum_{i}\left(\sum_{t,D_{it}=1}v_{it}(\tilde{\tau}_{it}-\bar{\tau}_{it})+v_{i}'X_{i}(\hat{\delta}^{*}-\delta)\right)^{2}$
terms, note that 
\begin{align*}
 & \sum_{i}\left(\sum_{t,D_{it}=1}v_{it}(\tilde{\tau}_{it}{-}\bar{\tau}_{it})+v_{i}'X_{i}(\hat{\delta}^{*}{-}\delta)\right)^{2}\leq2\sum_{i}\left(\sum_{t,D_{it}=1}v_{it}(\tilde{\tau}_{it}{-}\bar{\tau}_{it})\right)^{2}+2\sum_{i}\left(\sum_{t;it\in\Omega}v_{i}'X_{i}(\hat{\delta}^{*}{-}\delta)\right)^{2}.
\end{align*}
Since $\wnorm^{-2}\sum_{i}\left(\sum_{t,D_{it}=1}v_{it}(\tilde{\tau}_{it}-\bar{\tau}_{it})\right)^{2}\stackrel{p}{\rightarrow}0$
and $\wnorm^{-2}\sum_{i}\left(\sum_{t;it\in\Omega}v_{it}X_{it}'(\hat{\delta}^{\ast}-\delta)\right)^{2}\stackrel{p}{\rightarrow}0$
holds, we obtain $\wnorm^{-2}\lvert\hat{\sigma}_{w}^{2}-\bar{\sigma}_{w}^{2}\rvert\stackrel{p}{\rightarrow}0$.
Combining with the first step, we conclude $\wnorm^{-2}(\hat{\sigma}_{w}^{2}-\sigma_{w}^{2}-\sigma_{\tau}^{2})\stackrel{p}{\rightarrow}0$.
\end{proof}
\begin{proof}[Proof of \ref{prop:pretesting}]
To show that $\hat{\tau}_{w}^{*}$ and $\hat{\gamma}$ are uncorrelated,
we invoke the logic of the Hausman test. Under the null hypothesis
$\gamma=0$, $\hat{\tau}_{w}^{*}$ is efficient for $\tau_{\tilde{w}}=\sum_{it\in\Omega_{1}}\tilde{w}_{it}\tau_{it}$
for the weights $\tilde{w}_{it}$ that $\hat{\tau}_{w}^{*}$ places
on treated observations $Y_{it}$, $it\in\Omega_{1}$. (If $\Gamma=\I$
then $\tilde{w}=w$, but otherwise the weights can differ.) Moreover,
$\hat{\gamma}$ is an unbiased estimator of $\gamma=0$, and thus
has to be uncorrelated with $\hat{\tau}_{w}^{\ast}$. Indeed, otherwise
$\var{\hat{\tau}_{w}^{\ast}+\zeta^{\prime}\hat{\gamma}}=\var{\hat{\tau}_{w}^{\ast}}+2\zeta'\cov{\hat{\gamma},\hat{\tau}_{w}^{\ast}}+\zeta'\var{\hat{\gamma}}\zeta<\var{\hat{\tau}_{w}^{\ast}}$
for $\zeta=-\cov{\hat{\gamma},\hat{\tau}_{w}^{\ast}}\varepsilon$
with small $\varepsilon>0$, leading to the more efficient unbiased
estimator $\hat{\tau}_{w}^{\ast}+\zeta^{\prime}\hat{\gamma}$ of $\tau_{\tilde{w}}$.
Since the variances of $Y_{it}$ do not depend on $\gamma$, the covariance
between $\hat{\tau}_{w}^{*}$ and $\hat{\gamma}$ is still zero under
alternatives $\gamma\neq0$. If the error terms are also normal, then
$\hat{\tau}_{w}^{*}$ and $\hat{\gamma}$ (as linear estimators) are
jointly normal and therefore also independent. As a consequence, conditioning
on $\hat{\gamma}$ being outside some test rejection region does not
affect the distribution of $\hat{\tau}_{w}^{*}$.

A similar statement holds asymptotically without the error normality
assumption. Consider for simplicity the case where the null hypothesis
$\gamma=0$ holds. If the sequence of $\Xi=\left(\sigma_{w}^{-1}(\hat{\tau}_{w}^{*}-\tau_{\tilde{w}}),\right.\allowbreak\left.\var{\hat{\gamma}}^{-1/2}\hat{\gamma}\right)$
is asymptotically normal and $\|\Xi\|^{2}$ asymptotically uniformly
integrable, then $\Xi$ converges to a bivariate standard normal distribution,
and conditioning on $\var{\hat{\gamma}}^{-1/2}\hat{\gamma}\not\in R_{\gamma}$
for any non-stochastic rejection region $R_{\gamma}$ (that is a Borel
set of Lebesgue measure below one with a boundary of measure zero)
does not affect the asymptotic distribution of $\sigma_{w}^{-1}(\hat{\tau}_{w}^{*}-\tau_{\tilde{w}})$.\footnote{This claim extends to alternatives $\gamma$ that are local to zero,
i.e. $\var{\hat{\gamma}}^{1/2}\gamma\rightarrow a_{\gamma}$ for some
$a_{\gamma}$, such that the probability of rejection is below one
in the limit. In this case, we may have $\tau_{\tilde{w}}\neq\expec{\hat{\tau}_{w}^{*}}$,
and the claim applies to $\Xi=\left(\sigma_{w}^{-1}(\hat{\tau}_{w}^{*}-\expec{\hat{\tau}_{w}^{*}}),\var{\hat{\gamma}}^{-1/2}(\hat{\gamma}-\gamma)\right)$.}

We now prove this claim. If $\|\Xi\|^{2}$ is asymptotically uniformly
integrable along the asymptotic sequence, then the components of $\Xi$
and $\Xi\Xi'$ are also asymptotically uniformly integrable. Since
$\Xi$ is also asymptotically normal, $\Xi\stackrel{d}{\longrightarrow}\N\left(0,V_{\Xi}\right)$
with $V_{\Xi}=\lim\var{\Xi}=\begin{psmallmatrix}1 & 0' \\ 0 & \I \end{psmallmatrix}$
by Theorem 2.20 in \textcite{van2000asymptotic}. In other words,
$\Xi\stackrel{d}{\longrightarrow}(N_{1},N_{2})$ with independent
$N_{1}\sim\N(0,1),N_{2}\sim\N(0,\I)$. Then, by the Portmanteau lemma
(Lemma 2.2 in \textcite{van2000asymptotic}) and independence in the
limit, for $B$ denoting another Borel set in $\mathbb{R}$ with a
boundary of Lebesgue measure zero,
\begin{align*}
\mathbb{P}\left(\sigma_{w}^{-1}(\hat{\tau}_{w}^{*}-\tau_{\tilde{w}})\in B\middle|\var{\hat{\gamma}}^{-1/2}\hat{\gamma}\notin R_{\gamma}\right) & \rightarrow\text{\ensuremath{\frac{\mathbb{P}\left(N_{1}\in B,N_{2}\notin R_{\gamma}\right)}{\mathbb{P}\left(N_{2}\notin R_{\gamma}\right)}}}=\mathbb{P}(N_{1}\in B),
\end{align*}
so the asymptotic distribution of $\sigma_{w}^{-1}(\hat{\tau}_{w}^{*}-\tau_{\tilde{w}})$
is not affected by conditioning on $\var{\hat{\gamma}}^{-1/2}\hat{\gamma}\notin R_{\gamma}$.
\end{proof}

\subsection{Proofs of Appendix Results}
\begin{proof}[Proof of \ref{prop:identification}]
In the first case, writing $\tau^{*}=(\tau_{it})_{it\in\Omega_{1},w_{it}\ne0}$,
and adopting notation for vectors of $Y_{it}$ and $\varepsilon_{it}$
similar to matrices for $Z_{it}$, we have that $Y_{1}^{*}=Z_{1}^{*}\pi+\tau^{*}+\varepsilon_{1}^{*},Y_{0}=Z_{0}\pi+\varepsilon_{0}.$
There exists a matrix $M$ such that $Z_{1}^{*}=MZ_{0}$, since otherwise
$\text{rank}\left(\begin{psmallmatrix}Z_{1}^{*} \\ Z_{0}\end{psmallmatrix}\right) > \text{rank}(Z_{0})$.
Then we have that $\tau^{*}=\expec{Y_{1}^{*}-Z_{1}^{*}\pi}=\expec{Y_{1}^{*}-MZ_{0}\pi}=\expec{Y_{1}^{*}-MY_{0}}$,
so $\tau^{*}$ and thus $\tau_{w}$ are identified.

In the second case, assume without loss that unit $i=1$ is observed
for all time periods up to $T(w)$, and for all $it\in\Omega$ with
$i>1$ and $t\leq T(w)$ let $\tilde{Y}_{it}=Y_{it}-Y_{1t}$, which
fulfills $\expec{\tilde{Y}_{it}}=A_{t}^{\prime}(\lambda_{i}-\lambda_{1})+D_{it}\tau_{it}$.
For every $i>1$ with $\sum_{t;it\in\Omega_{1}}\lvert w_{it}\rvert>0$,
write $\tilde{Y}_{i}=(\tilde{Y}_{it})_{it\in\Omega_{0},t\leq T(w)}$,
$\tilde{A}_{i}=(A_{it}')_{it\in\Omega_{0},t\leq T(w)}$, and $\tilde{\lambda}_{i}=\lambda_{i}-\lambda_{1}$.
We have that $\expec{\tilde{Y}_{i}}=\tilde{A}_{i}\tilde{\lambda}_{i}$,
so $\tilde{\lambda}_{i}=(\tilde{A}_{i}'\tilde{A}_{i})^{-1}\tilde{A}_{i}'\expec{\tilde{Y}_{i}}$
(where the inverse exists because of full column rank of $\tilde{A}_{i}$).
For every $it\in\Omega_{1}$ with $\lvert w_{it}\rvert>0$, we must
have that $i>1$ and $\sum_{t;it\in\Omega_{1}}\lvert w_{it}\rvert>0$,
so we can identify $\tau_{it}$ by $\tau_{it}=\expec{\tilde{Y}_{it}}-A_{t}'\tilde{\lambda}_{i}=\expec{\tilde{Y}_{it}}-A_{t}'(\tilde{A}_{i}'\tilde{A}_{i})^{-1}\tilde{A}_{i}'\expec{\tilde{Y}_{i}}$.
Hence, $\tau_{w}$ is identified.
\end{proof}
\begin{proof}[Proof of \ref{prop:model-imputation}]
Since $\hat{\tau}_{w}$ is linear by assumption we can write $\hat{\tau}_{w}=v_{1}'Y_{1}+v_{0}'Y_{0}$.
Unbiasedness implies that $\expec{\hat{\tau}_{w}}=v_{1}'\Gamma\theta+(v_{1}'Z_{1}+v_{0}'Z_{0})\pi=w_{1}'\Gamma\theta$
for any $\theta$ and $\pi$. Hence, $\Gamma'v_{1}=\Gamma'w_{1}$
and $v_{1}'Z_{1}+v_{0}'Z_{0}=\0'$. It follows that $\hat{\tau}_{w}$
is an unbiased estimator of $\tau_{v}=v_{1}'\tau$ for all $\tau$.
The remaining part of the proposition is a direct consequence of \ref{prop:generalimputation}.
\end{proof}
\begin{proof}[Proof of \ref{prop:olsweights}]
Write $\Phi_{Z}=\I-Z(Z'Z)^{-1}Z'$ for the annihilator matrix with
respect to the control variables, then
\begin{align*}
\Phi & =\Phi_{Z}|_{\Omega_{1}\times\Omega_{1}}=\I-Z_{1}(Z'Z)^{-1}Z_{1}^{\prime}=\I-Z_{1}(Z_{1}'Z_{1}+Z_{0}'Z_{0})^{-1}Z_{1}^{\prime},\\
\Phi_{0} & =\Phi_{Z}|_{\Omega_{0}\times\Omega_{1}}=-Z_{0}(Z'Z)^{-1}Z_{1}^{\prime}=-Z_{0}(Z_{1}'Z_{1}+Z_{0}'Z_{0})^{-1}Z_{1}^{\prime}.
\end{align*}
By the Frisch\textendash Waugh\textendash Lovell theorem for $\hat{\theta}$
in \ref{thm:OLS-BLUE}, $\hat{\theta}=\left((D\Gamma)'\Phi_{Z}(D\Gamma)\right)^{-1}(D\Gamma)'\Phi_{Z}Y$,
and, since $v^{\ast\prime}Y=\hat{\tau}_{w}^{\ast}=w_{1}^{\prime}\Gamma\hat{\theta}$,
using $D=\begin{psmallmatrix}\I\\\mathbb{O}\end{psmallmatrix}$, and
plugging in the expressions for $\Phi,\Phi_{0}$, \begin{align*}v^{*}&=\Phi_{Z}D\Gamma\left((D\Gamma)'\Phi_{Z}(D\Gamma)\right)^{-1}\Gamma'w_{1}=\begin{psmallmatrix}\Phi\\\Phi_{0}\end{psmallmatrix}\Gamma(\Gamma'\Phi\Gamma)^{-1}\Gamma'w_{1}\\&=\left(\begin{psmallmatrix}\I\\\Null\end{psmallmatrix}-Z(Z'Z)^{-1}Z_{1}'\right)\Gamma(\Gamma'(\I-Z_{1}(Z'Z)^{-1}Z_{1}^{\prime})\Gamma)^{-1}\Gamma'w_{1},\end{align*}as
required by the proposition.

In the special case of $\Gamma=\I_{N_{1}}$, $\Gamma(\Gamma'\Phi\Gamma)^{-1}\Gamma'=\Phi^{-1}$
(where we note that $\Phi$ is invertible by the assumption that $\tau$
is identified, which requires that $\Phi_{Z}D$ is not collinear or,
equivalently, that $D'\Phi_{Z}D=\Phi$ is not singular), simplifying
the expression to $v^{*}=\begin{psmallmatrix}\Phi\\\Phi_{0}\end{psmallmatrix}\Gamma(\Gamma'\Phi\Gamma)^{-1}\Gamma'w_{1}=\begin{psmallmatrix}w_{1}\\\Phi_{0}\Phi^{-1}w_{1}\end{psmallmatrix}$.
To simplify $\Phi_{0}\Phi^{-1}$, we note that $(Z'Z)^{-1}Z_{1}^{\prime}=(Z_{0}'Z_{0})^{-1}Z_{0}'Z_{0}(Z'Z)^{-1}Z_{1}^{\prime}=(Z_{0}'Z_{0})^{-1}(Z'Z(Z'Z)^{-1}-Z_{1}'Z_{1}(Z'Z)^{-1})Z_{1}^{\prime}=(Z_{0}'Z_{0})^{-1}Z_{1}'(\I-Z_{1}(Z'Z)^{-1}Z_{1}^{\prime})$,
and thus, plugging in for $(Z'Z)^{-1}Z_{1}^{\prime}$,
\begin{align*}
 & \Phi_{0}\Phi^{-1}=-Z_{0}(Z'Z)^{-1}Z_{1}^{\prime}(\I-Z_{1}(Z'Z)^{-1}Z_{1}^{\prime})^{-1}\\
 & =-Z_{0}(Z_{0}'Z_{0})^{-1}Z_{1}'(\I-Z_{1}(Z'Z)^{-1}Z_{1}^{\prime})(\I-Z_{1}(Z'Z)^{-1}Z_{1}^{\prime})^{-1}=-Z_{0}(Z_{0}'Z_{0})^{-1}Z_{1}',
\end{align*}
which shows the expression for $\Gamma=\I_{N_{1}}$ and thus concludes
the proof.
\end{proof}
\begin{proof}[Proof of \ref{prop:imputation-model}]
We show that the weights from \ref{prop:olsweights} are the same
as those stated in the proposition. We can solve the optimization
problem for $v_{1}^{*}$ from the Lagrangian relaxation $\min_{v}v'\Phi^{-1}v-2\lambda'\Gamma'(v-w_{1})$
with the first order condition $\Gamma\lambda=\Phi{}^{-1}v$, which
is solved for $v_{1}^{*}=\Phi\Gamma(\Gamma'\Phi\Gamma)^{-1}\Gamma'w_{1}$
(as claimed) with $\lambda=(\Gamma'\Phi\Gamma)^{-1}\Gamma'w_{1}$.
Here, $\Phi^{-1}$ can be written as the variance $\Var^{\#}\left[\hat{\tau}^{*}\right]$
of the OLS estimator $\hat{\tau}^{*}$ of $\tau$ with unrestricted
heterogeneity and spherical errors with unit variance, which is 
\begin{align*}
 & \Var^{\#}\left[\hat{\tau}^{*}\right]=\Var^{\#}\left[(D'(\I-Z(Z'Z)^{-1}Z')D)^{-1}D'(\I-Z(Z'Z)^{-1}Z')Y\right]\\
 & =(D'(\I\!-\!Z(Z'Z)^{-1}Z')D)^{-1}D'(\I\!-\!Z(Z'Z)^{-1}Z')\Var^{\#}\left[Y\right](\I\!-\!Z(Z'Z)^{-1}Z')D(D'(\I\!-\!Z(Z'Z)^{-1}Z')D)^{-1}\\
 & =(D'(\I-Z(Z'Z)^{-1}Z')D)^{-1}=\Phi^{-1}.
\end{align*}
The weight on $Y_{0}$ is then implied by imputation. Indeed, the
imputation estimator from \ref{thm:imputation} with weights $v_{1}^{*}$
is $v_{1}^{*\prime}(Y_{1}+\Phi^{-1}\Phi_{0}Y_{0})$ by \ref{prop:olsweights}
with $\Gamma=\I_{N_{1}}$. Given $v_{1}^{*}=\Phi\Gamma(\Gamma'\Phi\Gamma)^{-1}\Gamma'w_{1}$,
this yields the same weights $\Phi_{0}\Gamma(\Gamma'\Phi\Gamma)^{-1}\Gamma'w_{1}$
on $Y_{0}$ as $v_{0}^{*}$ in the proof of \ref{prop:olsweights}.
\end{proof}

\begin{proof}[Proof of \ref{prop:consistency-sufficient}]
 To prove consistency, we show that the Herfindahl condition from
\ref{assu:Herfindahl} is fulfilled for $\hat{\tau}_{w}^{*}$, that
is, $\|v^{*}\|_{H}^{2}\rightarrow0$, which allows us to invoke \ref{prop:consistency}.
As a preliminary result, note that for any $b\geq1$, $S\in\mathbb{Z}_{+}$,
and any $a_{1},\dots,a_{S}$, Jensen's inequality implies
\begin{equation}
\left|\sum_{s=1}^{S}a_{s}\right|^{b}\leq S^{b-1}\sum_{s=1}^{S}|a_{s}|^{b}.\label{eq:Jensen}
\end{equation}
Assume without loss of generality that $\beta_{1}=0$. To bound the
weights $v_{it}^{*}$, we consider the alternative unbiased linear
estimator $\hat{\alpha}_{i}^{\#}=Y_{i1},\hat{\beta}_{t}^{\#}=\frac{\sum_{i;it\in\Omega_{0}}(Y_{it}-Y_{i1})}{\sum_{i;it\in\Omega_{0}}1}$
of the unit and period FEs. Since $\sum_{it\in\Omega_{0}}v_{it}^{*}Y_{it}$
is the best linear unbiased estimator for $(-\sum_{it\in\Omega_{1}}w_{it}(\alpha_{i}+\beta_{t}))$
under spherical errors, we can bound the sum of squares weights $\sum_{it\in\Omega_{0}}(v_{it}^{*})^{2}$
since they are the same as the variance $\Var^{\#}\left[\sum_{it\in\Omega_{0}}v_{it}^{*}Y_{it}\right]$
of $\sum_{it\in\Omega_{0}}v_{it}^{*}Y_{it}$ for spherical errors
$\varepsilon_{it}$ with unit variance. Specifically,
\begin{align*}
 & \sum_{it\in\Omega_{0}}(v_{it}^{*})^{2}=\Var^{\#}\left[\sum_{it\in\Omega_{0}}v_{it}^{*}Y_{it}\right]\leq\Var^{\#}\left[-\sum_{it\in\Omega_{1}}w_{it}(\hat{\alpha}_{i}^{\#}+\hat{\beta}_{t}^{\#})\right]\\
 & =\Var^{\#}\left[\sum_{i=1}^{I}\left(\sum_{t;it\in\Omega_{1}}w_{it}\right)\hat{\alpha}_{i}^{\#}+\sum_{t=2}^{T}\left(\sum_{i;it\in\Omega_{1}}w_{it}\right)\hat{\beta}_{t}^{\#}\right]\\
 & \leq2\Var^{\#}\left[\sum_{i=1}^{I}\left(\sum_{t;it\in\Omega_{1}}w_{it}\right)Y_{i1}\right]+2\Var^{\#}\left[\sum_{t=2}^{T}\left(\sum_{i;it\in\Omega_{1}}w_{it}\right)\left(\frac{\sum_{i;it\in\Omega_{0}}(Y_{it}-Y_{i1})}{\sum_{i;it\in\Omega_{0}}1}\right)\right]\\
 & \leq2\sum_{i=1}^{I}\Var^{\#}\left[\left(\sum_{t;it\in\Omega_{1}}w_{it}\right)Y_{i1}\right]+2T\sum_{t=2}^{T}\Var^{\#}\left[\left(\sum_{i;it\in\Omega_{1}}w_{it}\right)\left(\frac{\sum_{i;it\in\Omega_{0}}(Y_{it}-Y_{i1})}{\sum_{i;it\in\Omega_{0}}1}\right)\right]\\
 & \leq2\sum_{i=1}^{I}\left(\sum_{t;it\in\Omega_{1}}w_{it}\right)^{2}+8T\sum_{t=2}^{T}\frac{\left(\sum_{i;it\in\Omega_{1}}w_{it}\right)^{2}}{\sum_{i;it\in\Omega_{0}}1},
\end{align*}
where we repeatedly use \ref{eq:Jensen} for $b=2$. Hence, we have
that
\begin{align*}
 & \sum_{i}\left(\sum_{t;it\in\Omega}|v_{it}^{*}|\right)^{2}\!\leq\!2\sum_{i=1}^{I}\left(\sum_{t;it\in\Omega_{1}}|v_{it}^{*}|\right)^{2}\!+\!2\sum_{i=1}^{I}\left(\sum_{t;it\in\Omega_{0}}|v_{it}^{*}|\right)^{2}\!\leq\!2\sum_{i=1}^{I}\left(\sum_{t;it\in\Omega_{1}}|v_{it}^{*}|\right)^{2}\!+\!2T\sum_{it\in\Omega_{0}}(v_{it}^{*})^{2}\\
 & \leq2\sum_{i=1}^{I}\left(\sum_{t;it\in\Omega_{1}}|w_{it}|\right)^{2}+4T\sum_{i=1}^{I}\left(\sum_{t;it\in\Omega_{1}}w_{it}\right)^{2}+16T^{2}\sum_{t=2}^{T}\frac{\left(\sum_{i;it\in\Omega_{1}}w_{it}\right)^{2}}{\sum_{i;it\in\Omega_{0}}1}\rightarrow0
\end{align*}
under \ref{assu:lowlevel-consistent}, which implies \ref{assu:Herfindahl}
and thus consistency by \ref{prop:consistency}.
\end{proof}
\begin{proof}[Proof of \ref{prop:normality-sufficient}]
To establish asymptotic normality, we want to establish that $\sum_{i}\left(\frac{\sum_{t;it\in\Omega}|v_{it}|}{\wnorm}\right)^{2+\kappa}\rightarrow0$
for $\kappa=2$, which allows us to invoke \ref{prop:asymptoticnormality}.
By construction and since $\hat{\alpha}_{i}$ is the least-squares
solution to $\min\sum_{it;D_{it}=0}(Y_{it}-\hat{\alpha}_{i}-\hat{\beta}_{t})^{2}$
, we have $\hat{\tau}_{w}^{*}=\sum_{it;D_{it}=1}w_{it}(Y_{it}-\hat{\alpha}_{i}-\hat{\beta}_{t})$,
$\hat{\alpha}_{i}=\frac{\sum_{t;D_{it}=0}(Y_{it}-\hat{\beta}_{t})}{\sum_{t;D_{it}=0}1}$,
so we can express

\begin{align*}
 & \hat{\tau}_{w}^{\ast}=\sum_{it;D_{it}=1}w_{it}\left(Y_{it}-\frac{\sum_{s;D_{is}=0}(Y_{is}-\hat{\beta}_{s})}{\sum_{s;D_{is}=0}1}-\hat{\beta}_{t}\right),\\
 & =\sum_{it;D_{it}=1}w_{it}Y_{it}+\sum_{it;D_{it}=0}\frac{-\sum_{s;D_{is}=1}w_{is}}{\sum_{s;D_{is}=0}1}Y_{it}+\sum_{t=2}^{T}\left(\sum_{i;D_{it}=0}\frac{\sum_{s;D_{is}=1}w_{it}}{\sum_{s;D_{is}=0}1}-\sum_{i;D_{it}=1}w_{it}\right)\hat{\beta}_{t}\\
 & =\sum_{\mathclap{it;D_{it}=1}}w_{it}Y_{it}+\sum_{\mathclap{it;D_{it}=0}}\frac{-\sum_{s;D_{is}=1}w_{is}}{\sum_{s;D_{is}=0}1}Y_{it}+\sum_{\mathclap{it;D_{it}=0}}u_{it}Y_{it}=\sum_{i=1}^{I}\left(\sum_{t=1}^{E_{i}-1}\left(u_{it}-\frac{\sum_{s;D_{is}=1}w_{is}}{\sum_{s;D_{is}=0}1}\right)Y_{it}+\sum_{t=E_{i}}^{T}w_{it}Y_{it}\right)
\end{align*}
where the $u_{it}$ are the weights on the $Y_{it},it\in\Omega_{0}$
in $\sum_{t=2}^{T}\left(\sum_{i;D_{it}=0}\frac{\sum_{s;D_{is}=1}w_{it}}{\sum_{s;D_{is}=0}1}-\sum_{i;D_{it}=1}w_{it}\right)\hat{\beta}_{t}$,
and we write $E_{i}=T+1$ for the never-treated cohort. Hence, we
can write
\begin{equation}
v_{it}^{*}=\begin{cases}
w_{it}, & it\in\Omega_{1},\\
u_{it}-\frac{\sum_{s;D_{is}=1}w_{is}}{\sum_{s;D_{is}=0}1}, & it\in\Omega_{0}.
\end{cases}\label{eq:vstarandu}
\end{equation}

It follows with \ref{eq:Jensen} in the previous proof for $b=2+\kappa=4$
that
\begin{align*}
 & \sum_{i=1}^{I}\left(\sum_{t=1}^{T}|v_{it}^{*}|\right)^{2+\kappa}\leq T^{1+\kappa}\sum_{it\in\Omega}|v_{it}^{*}|^{2+\kappa}=T^{1+\kappa}\left(\sum_{it\in\Omega_{1}}|w_{it}|^{2+\kappa}+\sum_{it\in\Omega_{0}}\lvert u_{it}-\frac{\sum_{s;D_{is}=1}w_{is}}{\sum_{s;D_{is}=0}1}\rvert^{2+\kappa}\right)\\
 & \leq T{}^{1+\kappa}\Bigg(\sum_{it\in\Omega_{1}}|w_{it}|^{2+\kappa}+2^{1+\kappa}\sum_{it\in\Omega_{0}}|u_{it}|^{2+\kappa}+2^{1+\kappa}\underbrace{\sum_{it\in\Omega_{0}}\left|\frac{\sum_{s;D_{is}=1}w_{is}}{\sum_{s;D_{is}=0}1}\right|^{2+\kappa}}_{=\sum_{i=1}^{I}\frac{\lvert\sum_{t;D_{it}=1}w_{is}\rvert^{2+\kappa}}{\lvert\sum_{t;D_{it}=0}1\rvert^{1+\kappa}}}\Bigg)\\
 & \leq(2T){}^{1+\kappa}\left(\sum_{it\in\Omega_{1}}|w_{it}|^{2+\kappa}+\sum_{it\in\Omega_{0}}|u_{it}|^{2+\kappa}+T^{1+\kappa}\sum_{it\in\Omega_{1}}|w_{it}|^{2+\kappa}\right)\\
 & \leq(2T)^{2(1+\kappa)}\left(\sum_{it\in\Omega_{1}}|w_{it}|^{2+\kappa}+\sum_{it\in\Omega_{0}}|u_{it}|^{2+\kappa}\right).
\end{align*}
We now consider the two parts of this sum. First, note that there
is only a finite number of cohort\textendash period cells $et$ with
$E_{i}=e\geq t$, and within every such cell
\begin{align*}
\frac{\sum_{i;E_{i}=e}|w_{it}|^{2+\kappa}}{(\sum_{i;E_{i}=e}w_{it}^{2})^{\frac{2+\kappa}{2}}} & \leq\frac{\max_{i;E_{i}=e}|w_{it}|^{2+\kappa}\sum_{i;E_{i=e}}1}{\min_{i;E_{i}=e}|w_{it}|^{2+\kappa}(\sum_{i;E_{i=e}}1)^{\frac{2+\kappa}{2}}}\leq C^{2+\kappa}\frac{1}{(\sum_{i;E_{i=e}}1)^{\frac{\kappa}{2}}}\rightarrow0,
\end{align*}
For the second part of the sum, $u_{it}$ only depend on cohort and
period (since period FEs are invariant to exchanging unit identities
within cohorts), $u_{it}=U_{E_{i},t}$, so similarly, for $E_{i}=e<t$,
\[
\frac{\sum_{i;E_{i}=e}|u_{it}|^{2+\kappa}}{(\sum_{i;E_{i}=e}|u_{it}|^{2})^{\frac{2+\kappa}{2}}}=\frac{\sum_{i;E_{i}=e}|U_{et}|^{2+\kappa}}{(\sum_{i;E_{i}=e}|U_{et}|^{2})^{\frac{2+\kappa}{2}}}=\frac{1}{(\sum_{i;E_{i}=e}1)^{\frac{\kappa}{2}}}\rightarrow0.
\]
From the fact that $\frac{\sum_{i;E_{i}=e}|w_{it}|^{2+\kappa}}{(\sum_{i;E_{i}=e}w_{it}^{2})^{\frac{2+\kappa}{2}}}$
and $\frac{\sum_{i;E_{i}=e}|u_{it}|^{2+\kappa}}{(\sum_{i;E_{i}=e}u_{it}^{2})^{\frac{2+\kappa}{2}}}$
vanish for all $e$ and $t$ we now derive that $\frac{\sum_{it\in\Omega}|v_{it}^{*}|^{2+\kappa}}{(\sum_{it\in\Omega}v_{it}^{*}{}^{2})^{\frac{2+\kappa}{2}}}$
also vanishes. To this end, we note that 
\begin{align*}
\sum_{it\in\Omega_{0}}\left|\frac{\sum_{s;D_{is}=1}w_{is}}{\sum_{s;D_{is}=0}1}\right|^{2} & =\sum_{i=1}^{I}\left(\sum_{t;D_{it}=0}1\right)\left|\frac{\sum_{s;D_{is}=1}w_{is}}{\sum_{s;D_{is}=0}1}\right|^{2}\leq\sum_{i=1}^{I}\left|\sum_{t;it\in\Omega_{1}}w_{it}\right|^{2}\leq T\sum_{it\in\Omega_{1}}w_{it}^{2}.
\end{align*}
Using \ref{eq:vstarandu}, we obtain that $\frac{1}{2}\sum_{it\in\Omega_{0}}u_{it}^{2}\leq\sum_{it\in\Omega_{0}}\left|\frac{\sum_{s;D_{is}=1}w_{is}}{\sum_{s;D_{is}=0}1}\right|^{2}+\sum_{it\in\Omega_{0}}v_{it}^{*}{}^{2}\leq T\sum_{it\in\Omega_{1}}w_{it}^{2}+\sum_{it\in\Omega_{0}}v_{it}^{*}{}^{2}\leq T\sum_{it\in\Omega}v_{it}^{*}{}^{2}.$
At the same time, $\sum_{it\in\Omega}v_{it}^{*}{}^{2}\geq\sum_{it\in\Omega_{1}}w_{it}^{2}$.
Putting everything together,
\begin{align*}
 & \frac{\sum_{it\in\Omega}|v_{it}^{*}|^{2+\kappa}}{(\sum_{it\in\Omega}v_{it}^{*}{}^{2})^{\frac{2+\kappa}{2}}}\leq(2T)^{2(1+\kappa)}\frac{\sum_{it\in\Omega_{1}}|w_{it}|^{2+\kappa}+\sum_{it\in\Omega_{0}}|u_{it}|^{2+\kappa}}{(\sum_{it\in\Omega}v_{it}^{*}{}^{2})^{\frac{2+\kappa}{2}}}\\
 & \leq(2T)^{2(1+\kappa)}\frac{\sum_{it\in\Omega_{1}}|w_{it}|^{2+\kappa}}{\left(\sum_{it\in\Omega_{1}}w_{it}^{2}\right)^{\frac{2+\kappa}{2}}}+(2T)^{\frac{2+\kappa}{2}}(2T)^{2(1+\kappa)}\frac{\sum_{it\in\Omega_{0}}|u_{it}|^{2+\kappa}}{\left(\sum_{it\in\Omega_{0}}u_{it}^{2}\right)^{\frac{2+\kappa}{2}}}\\
 & \leq(2T)^{2(1+\kappa)}\sum_{et;e\leq t}\frac{\sum_{i;E_{i}=e}|w_{it}|^{2+\kappa}}{(\sum_{i;E_{i}=e}w_{it}^{2})^{\frac{2+\kappa}{2}}}+(2T)^{2(1+\kappa)+\frac{2+\kappa}{2}}\sum_{et;e>t}\frac{\sum_{i;E_{i}=e}|u_{it}|^{2+\kappa}}{(\sum_{i;E_{i}=e}u_{it}^{2})^{\frac{2+\kappa}{2}}}\rightarrow0.
\end{align*}
Since also $\wnorm^{2}=\sum_{i}\left(\sum_{t;it\in\Omega}|v_{it}^{*}|\right)^{2}\geq\sum_{it\in\Omega}v_{it}^{*}{}^{2}$,
we conclude that $\sum_{i}\left(\frac{\sum_{t;it\in\Omega}|v_{it}^{\ast}|}{\wnorm}\right)^{2+\kappa}\leq\frac{\sum_{it\in\Omega}|v_{it}^{*}|^{2+\kappa}}{(\sum_{it\in\Omega}v_{it}^{*}{}^{2})^{\frac{2+\kappa}{2}}}\rightarrow0,$
allowing us to invoke \ref{prop:asymptoticnormality} to obtain asymptotic
normality.
\end{proof}
\begin{proof}[Proof of \ref{prop:se-short-sufficient}]
We show that the assumptions of this proposition imply the assumptions
of \ref{thm:se}, which guarantee that standard errors are asymptotically
conservative. The two preceding proofs establish that \ref{assu:Herfindahl}
holds and $\sum_{i}\left(\frac{\sum_{t;it\in\Omega}|v_{it}|}{\wnorm}\right)^{4}\rightarrow0$.
It remains to show that 
\begin{align}
\wnorm^{-2}\sum_{i}\left(\sum_{t;it\in\Omega_{1}}v_{it}^{*}(\tilde{\tau}_{it}-\bar{\tau}_{it})\right)^{2} & \stackrel{p}{\rightarrow}0, & \wnorm^{-2}\sum_{i}\left(\sum_{t;it\in\Omega}v_{it}^{*}(\hat{\beta}_{t}-\beta_{t})\right)^{2} & \stackrel{p}{\rightarrow}0.\label{eq:nuisanceconsistency}
\end{align}
We assume throughout that the first time-fixed effect is dropped,
$\beta_{1}=0$, which is without loss since it does not affect the
estimator or the standard errors. We first consider the left expression
for $\tilde{\tau}_{it}=\tilde{\tau}_{it}^{A}\equiv\tilde{\tau}=\frac{\sum_{i}(\sum_{t;it\in\Omega_{1}}v_{it}^{*})(\sum_{t;it\in\Omega_{1}}v_{it}^{*}\hat{\tau}_{it})}{\sum_{i}(\sum_{t;it\in\Omega_{1}}v_{it}^{*})^{2}}$,
with the corresponding $\bar{\tau}_{it}\equiv\bar{\tau}$. In this
case,
\begin{align*}
\sum_{i}\left(\sum_{t;it\in\Omega_{1}}v_{it}^{*}(\tilde{\tau}_{it}{-}\bar{\tau}_{it})\right)^{2} & \!=\!(\tilde{\tau}{-}\bar{\tau})^{2}\sum_{i}\left(\sum_{t;it\in\Omega_{1}}v_{it}^{*}\right)^{2}\!=\!\frac{\left(\sum_{i}\left(\sum_{t;it\in\Omega_{1}}v_{it}^{*}\right)\left(\sum_{t;it\in\Omega_{1}}v_{it}^{*}(\hat{\tau}_{it}{-}\tau_{it})\right)\right)^{2}}{\sum_{i}(\sum_{t;it\in\Omega_{1}}v_{it}^{*})^{2}}.
\end{align*}
When $\tilde{\tau}_{it}$ are cohort\textendash horizon cell averages,
specifically $\tilde{\tau}_{it}=\tilde{\tau}_{it}^{B}=\tilde{\tau}_{E_{i}t}$
for $\tilde{\tau}_{et}=\frac{\sum_{i;E_{i}=e}v_{it}^{*}{}^{2}\hat{\tau}_{it}}{\sum_{i;E_{i}=e}v_{it}^{*}{}^{2}}$,
and $\bar{\tau}_{it}=\bar{\tau}_{E_{i}t}$ are defined correspondingly,
\begin{align*}
 & \sum_{i}\left(\sum_{t;it\in\Omega_{1}}v_{it}^{*}(\tilde{\tau}_{it}-\bar{\tau}_{it})\right)^{2}=\sum_{e}\sum_{i;E_{i}=e}\left(\sum_{t\geq e}v_{it}^{*}(\tilde{\tau}_{et}-\bar{\tau}_{et})\right)^{2}\leq T\sum_{e}\sum_{i;E_{i}=e}\sum_{t\geq e}\left(v_{it}^{*}(\tilde{\tau}_{et}-\bar{\tau}_{et})\right)^{2}\\
 & =T\sum_{e}\sum_{t\geq e}(\tilde{\tau}_{et}-\bar{\tau}_{et})^{2}\sum_{i;E_{i}=e}v_{it}^{*}{}^{2}=T\sum_{e}\sum_{t\geq e}\frac{\left(\sum_{i;E_{i}=e}v_{it}^{*}{}^{2}(\hat{\tau}_{it}-\tau_{it})\right)^{2}}{\sum_{i;E_{i}=e}v_{it}^{*}{}^{2}}.
\end{align*}
We finally consider the expression in \ref{eq:nuisanceconsistency}
involving the time fixed effects $\hat{\beta}_{t}$. Here,
\begin{align*}
 & \sum_{i}\left(\sum_{t;it\in\Omega}v_{it}^{*}(\hat{\beta}_{t}-\beta_{t})\right)^{2}\leq T\sum_{it\in\Omega}(\hat{\beta}_{t}-\beta_{t})^{2}v_{it}^{*}{}^{2}=T\sum_{t}(\hat{\beta}_{t}-\beta_{t})^{2}\sum_{i;it\in\Omega}v_{it}^{*}{}^{2}.
\end{align*}
From these three expressions, we conclude that
\begin{align*}
\expec{\sum_{i}\left(\sum_{t;it\in\Omega_{1}}v_{it}^{*}(\tilde{\tau}_{it}^{A}-\bar{\tau}_{it})\right)^{2}} & \leq\frac{\var{\sum_{i}\left(\sum_{t;it\in\Omega_{1}}v_{it}^{*}\right)\left(\sum_{t;it\in\Omega_{1}}v_{it}^{*}\hat{\tau}_{it}\right)}}{\sum_{i}(\sum_{t;it\in\Omega_{1}}v_{it}^{*})^{2}},\\
\expec{\sum_{i}\left(\sum_{t;it\in\Omega_{1}}v_{it}^{*}(\tilde{\tau}_{it}^{B}-\bar{\tau}_{it})\right)^{2}} & \leq T\sum_{e}\sum_{t\geq e}\frac{\var{\sum_{i;E_{i}=e}v_{it}^{*}{}^{2}\hat{\tau}_{it}}}{\sum_{i;E_{i}=e}v_{it}^{*}{}^{2}},\\
\expec{\sum_{i}\left(\sum_{t;it\in\Omega}v_{it}^{*}(\hat{\beta}_{t}-\beta_{t})\right)^{2}} & \leq T\sum_{t}\var{\hat{\beta}_{t}}\sum_{i;it\in\Omega}v_{it}^{*}{}^{2}.
\end{align*}
We now bound the three variances above. Each of them contains a linear
estimator. Such a linear estimator $\hat{\eta}=\sum_{it\in\Omega}a_{it}Y_{it}$
has variance $\var{\hat{\eta}}=\sum_{i}\E\left(\sum_{t;it\in\Omega}a_{it}\varepsilon_{it}\right)^{2}\leq T\sum_{it\in\Omega}a_{it}^{2}\;\E\varepsilon_{it}^{2}\leq T\bar{\sigma}^{2}\sum_{it\in\Omega}a_{it}^{2}$.
We can therefore bound it by $T\bar{\sigma}^{2}$ times the variance
$\Var^{\#}(\hat{\eta})=\sum_{it\in\Omega}a_{it}^{2}$ in data with
uncorrelated and spherical errors terms with unit variance. In such
data, the estimators $\hat{\tau}_{it}$ and $\hat{\beta}_{t}$ have
minimal variance (by Gauss\textendash Markov), which extends to their
weighted averages as in the proof of \ref{thm:OLS-BLUE}. Under $\Var^{\#}$,
we can therefore bound the respective variances by the variances using
any other linear unbiased estimators of the $\tau_{it}$ and $\beta_{t}$.
Recalling that we normalized $\beta_{1}=0$, we can use the estimators
$\beta_{t}^{\#}=\frac{\sum_{i;E_{i}=\infty}(Y_{it}-Y_{i1})}{\sum_{i;E_{i}=\infty}1}$
of $\beta_{t}$ and $Y_{it}-Y_{i1}-\beta_{t}^{\#}$ of $\tau_{it}$.
We find that
\begin{align*}
 & \Var^{\#}\left[\sum_{i}\left(\sum_{t;it\in\Omega_{1}}v_{it}^{*}\right)\left(\sum_{t;it\in\Omega_{1}}v_{it}^{*}\hat{\tau}_{it}\right)\right]\leq\Var^{\#}\left[\sum_{i}\left(\sum_{t;it\in\Omega_{1}}v_{it}^{*}\right)\left(\sum_{t;it\in\Omega_{1}}v_{it}^{*}(Y_{it}-Y_{i1}-\beta_{t}^{\#})\right)\right]\\
 & \leq3\Var^{\#}\left[\sum_{i}\left(\sum_{t;it\in\Omega_{1}}v_{it}^{*}\right)^{2}Y_{i1}\right]+3\Var^{\#}\left[\sum_{i}\left(\sum_{t;it\in\Omega_{1}}v_{it}^{*}\right)\left(\sum_{t;it\in\Omega_{1}}v_{it}^{*}Y_{it}\right)\right]\\
 & \phantom{=}+3\Var^{\#}\left[\sum_{i}\left(\sum_{t;it\in\Omega_{1}}v_{it}^{*}\right)\left(\sum_{t;it\in\Omega_{1}}v_{it}^{*}\beta_{t}^{\#}\right)\right]\\
 & =3\sum_{i}\left(\sum_{t;it\in\Omega_{1}}v_{it}^{*}\right)^{4}+3\sum_{i}\left(\sum_{t;it\in\Omega_{1}}v_{it}^{*}\right)^{2}\sum_{t;it\in\Omega_{1}}v_{it}^{*}{}^{2}+3\Var^{\#}\left[\sum_{t}\beta_{t}^{\#}\sum_{i;it\in\Omega_{1}}v_{it}^{*}\left(\sum_{s;is\in\Omega_{1}}v_{is}^{*}\right)\right]\\
 & \leq3(T+1)\sum_{i}\left(\sum_{t;it\in\Omega_{1}}v_{it}^{*}\right)^{2}\sum_{t;it\in\Omega_{1}}v_{it}^{*}{}^{2}+3T\sum_{t}\left(\sum_{i;it\in\Omega_{1}}v_{it}^{*}\left(\sum_{s;is\in\Omega_{1}}v_{is}^{*}\right)\right)^{2}\Var^{\#}\left[\beta_{t}^{\#}\right]\\
 & \leq3(T+1)\sum_{e}\sum_{i;E_{i}=e}\left(\sum_{t;it\in\Omega_{1}}v_{it}^{*}\right)^{2}\underbrace{\sum_{t\geq e}\max_{j;E_{j}=e}w_{jt}^{2}}_{\mathclap{\leq\frac{C^{2}\sum_{j;E_{j}=e}\sum_{t\geq e}w_{jt}^{2}}{\sum_{j;E_{j}=e}1}}}+3T\underbrace{\sum_{t}\Var^{\#}\left[\beta_{t}^{\#}\right]\sum_{i;it\in\Omega_{1}}v_{it}^{*}{}^{2}\cdot\sum_{i;it\in\Omega_{1}}\left(\sum_{s;is\in\Omega_{1}}v_{is}^{*}\right)^{2}}_{\leq\left(\sum_{i}\left(\sum_{t;it\in\Omega_{1}}v_{it}^{*}\right)^{2}\right)\cdot\left(\sum_{it\in\Omega_{1}}v_{it}^{*}{}^{2}\right)\cdot\left(\sum_{t}\Var^{\#}\left[\beta_{t}^{\#}\right]\right)}\\
 & \leq3(T+1)\left(\sum_{i}\left(\sum_{t;it\in\Omega_{1}}v_{it}^{*}\right)^{2}\right)\cdot\left(\sum_{it\in\Omega_{1}}v_{it}^{*}{}^{2}\right)\cdot\left(\frac{C^{2}}{\min_{e}\sum_{i;E_{i}=e}1}+\sum_{t}\Var^{\#}\left[\beta_{t}^{\#}\right]\right),
\end{align*}
where we have repeatedly used the Cauchy\textendash Schwartz inequality
and that weights do not vary too much within cohort\textendash period
cells. Similarly, for each $e$ and $t\geq e$,
\begin{align*}
 & \Var^{\#}\left[\sum_{i;E_{i}=e}v_{it}^{*}{}^{2}\hat{\tau}_{it}\right]\leq\Var^{\#}\left[\sum_{i;E_{i}=e}v_{it}^{*}{}^{2}(Y_{it}-Y_{i1}-\beta_{t}^{\#})\right]\\
 & \leq3\Var^{\#}\left[\sum_{i;E_{i}=e}v_{it}^{*}{}^{2}Y_{it}\right]+3\Var^{\#}\left[\sum_{i;E_{i}=e}v_{it}^{*}{}^{2}Y_{i1}\right]+3\Var^{\#}\left[\sum_{i;E_{i}=e}v_{it}^{*}{}^{2}\beta_{t}^{\#}\right]\\
 & =6\sum_{i;E_{i}=e}v_{it}^{*}{}^{4}+3\left(\sum_{i;E_{i}=e}v_{it}^{*}{}^{2}\right)^{2}\Var^{\#}\left[\beta_{t}^{\#}\right]\\
 & \leq6\left(\sum_{i;E_{i}=e}v_{it}^{*}{}^{2}\right)\cdot\underbrace{\left(\max_{i;E_{i}=e}v_{it}^{*}{}^{2}\right)}_{\leq C^{2}\frac{\sum_{i;E_{i}=e}v_{it}^{*}{}^{2}}{\sum_{i;E_{i}=e}1}}+3\left(\sum_{i;E_{i}=e}v_{it}^{*}{}^{2}\right)^{2}\Var^{\#}\left[\beta_{t}^{\#}\right]\\
 & \leq6\left(\sum_{i;E_{i}=e}v_{it}^{*}{}^{2}\right)\cdot\left(\sum_{it\in\Omega_{1}}v_{it}^{*}{}^{2}\right)\left(\frac{C^{2}}{\sum_{i;E_{i}=e}1}+\Var^{\#}\left[\beta_{t}^{\#}\right]\right).
\end{align*}
Finally, for each $t$, $\Var^{\#}\left[\hat{\beta}_{t}\right]\leq\Var^{\#}\left[\beta_{t}^{\#}\right]=\frac{\Var^{\#}\left[\sum_{i;E_{i}=\infty}(Y_{it}-Y_{i1})\right]}{\left(\sum_{i;E_{i}=\infty}1\right)^{2}}\leq\frac{4}{\sum_{i;E_{i}=\infty}1}\rightarrow0.$

Putting everything together,
\begin{align*}
 & \expec{\wnorm^{-2}\sum_{i}\left(\sum_{t;it\in\Omega_{1}}v_{it}^{*}(\tilde{\tau}_{it}^{A}-\bar{\tau}_{it})\right)^{2}}\leq T\bar{\sigma}^{2}\frac{\Var^{\#}\left[\sum_{i}\left(\sum_{t;it\in\Omega_{1}}v_{it}^{*}\right)\left(\sum_{t;it\in\Omega_{1}}v_{it}^{*}\hat{\tau}_{it}\right)\right]}{\wnorm^{2}\sum_{i}(\sum_{t;it\in\Omega_{1}}v_{it}^{*})^{2}}\\
 & \leq3(T+1)T\bar{\sigma}^{2}\frac{\sum_{it\in\Omega_{1}}v_{it}^{*}{}^{2}}{\wnorm^{2}}\left(\frac{C^{2}}{\min_{e}\sum_{j;E_{j}=e}1}+\sum_{t}\Var^{\#}\left[\tilde{\beta}_{t}\right]\right)\rightarrow0,\\
 & \expec{\wnorm^{-2}\sum_{i}\left(\sum_{t;it\in\Omega_{1}}v_{it}^{*}(\tilde{\tau}_{it}^{B}-\bar{\tau}_{it})\right)^{2}}\leq T\bar{\sigma}^{2}\frac{T\sum_{e}\sum_{t\geq e}\frac{\Var^{\#}\left[\sum_{i;E_{i}=e}v_{it}^{*}{}^{2}\hat{\tau}_{it}\right]}{\sum_{i;E_{i}=e}v_{it}^{*}{}^{2}}}{\wnorm^{2}}\\
 & \leq6T^{2}\bar{\sigma}^{2}\frac{\sum_{it\in\Omega_{1}}v_{it}^{*}{}^{2}}{\wnorm^{2}}\sum_{e}\sum_{t\geq e}\left(\frac{C^{2}}{\sum_{i;E_{i}=e}1}+\Var^{\#}\left[\tilde{\beta}_{t}\right]\right)\rightarrow0,\\
 & \expec{\wnorm^{-2}\sum_{i}\left(\sum_{t;it\in\Omega}v_{it}^{*}(\hat{\beta}_{t}-\beta_{t})\right)^{2}}\leq T\bar{\sigma}^{2}\sum_{t}\Var^{\#}\left[\hat{\beta}_{t}\right]\frac{\sum_{i;it\in\Omega}v_{it}^{*}{}^{2}}{\wnorm}\rightarrow0,
\end{align*}
where we have used that cohort sizes increase and that $\frac{\sum_{it\in\Omega_{1}}v_{it}^{*}{}^{2}}{\wnorm^{2}}\leq\frac{\sum_{it\in\Omega}v_{it}^{*}{}^{2}}{\wnorm^{2}}\leq1$.
This establishes \ref{eq:nuisanceconsistency}, and thus the conditions
of \ref{thm:se}.
\end{proof}
\begin{proof}[Proof of \ref{prop:leaveout}]
 As in the proof of \ref{thm:se}, we have that $\sum_{t;it\in\Omega}A_{i}^{\prime}\left(\hat{\lambda}_{i}-\lambda_{i}\right)=0$
by unbiasedness of $\hat{\tau}_{w}$. Therefore
\begin{align*}
 & \expec{\hat{\sigma}_{w}^{2}}=\expec{\sum_{i}\left(\sum_{t;it\in\Omega}v_{it}\tilde{\varepsilon}_{it}^{LO}\right)^{2}}=\expec{\sum_{i}\left(\sum_{t;it\in\Omega}v_{it}\left(\varepsilon_{it}-X_{it}^{\prime}(\hat{\delta}^{-i}-\delta)-D_{it}(\tilde{\tau}_{it}^{-i}-\tau_{it})\right)\right)^{2}}\\
 & =\expec{\sum_{i}\left(\sum_{t;it\in\Omega}v_{it}\varepsilon_{it}\right)^{2}}+\expec{\sum_{i}\left(\sum_{t;it\in\Omega}v_{it}\left(X_{it}^{\prime}(\hat{\delta}^{-i}-\delta)+D_{it}(\tilde{\tau}_{it}^{-i}-\tau_{it})\right)\right)^{2}}\geq\sigma_{w}^{2},
\end{align*}
where we have used that $\varepsilon_{it}$ is uncorrelated to $\hat{\delta}^{-i},\tilde{\tau}_{it}^{-i}$.
\end{proof}
\begin{proof}[Proof of \ref{lem:linear-est-weights}]
By standard OLS results, $\hat{\psi}_{w}=w^{\prime}\left(z^{\prime}z\right)^{-1}z^{\prime}y$.
Thus, $\hat{\psi}_{w}=v^{\prime}y$ for weights $v=z\left(z^{\prime}z\right)^{-1}w$
which do not depend on $y$. Hence, $v_{j}=z_{j}^{\prime}\check{\psi}$
for $\check{\psi}=\left(z^{\prime}z\right)^{-1}w$.
\end{proof}
\begin{proof}[Proof of \ref{lem:Equal-sensitivity}]
Any linear estimator unbiased for $\tau_{w}$ under \ref{assu:A1,assu:A2}
has to be numerically invariant to adding any combination of unit
and period FEs to the outcome. Thus, $\hat{\tau}_{w}$ would be the
same with the data $\tilde{Y}_{it}=Y_{it}-\kappa_{0}-\kappa_{1}E_{i}+\kappa_{1}t=-\left(\kappa_{0}+\kappa_{1}K_{it}\right)D_{it}$.
These data satisfy \ref{assu:A1,assu:A2} with the corresponding $\tau_{it}=-\left(\kappa_{0}+\kappa_{1}K_{it}\right)$
for $it\in\Omega_{1}$, and thus $\expec{\hat{\tau}_{w}}=\sum_{it\in\Omega_{1}}w_{it}\tau_{it}=-\sum_{it\in\Omega_{1}}w_{it}\left(\kappa_{0}+\kappa_{1}K_{it}\right).$
\end{proof}
\begin{proof}[Proof of \ref{lem:no_sensitivity_ranking}]
Without loss of generality, suppose $B_{\hat{\tau}_{w}^{A}}(y_{0})>0$,
and thus $B_{\hat{\tau}_{w}^{A}}(y_{0})>B_{\hat{\tau}_{w}^{B}}(y_{0})$.
\ref{lem:Equal-sensitivity} implies that there exists a set of violations
of \ref{assu:A1,assu:A2}, $y_{0}^{\star}\in\mathbb{R}^{\left|\Omega_{0}\right|}$,
such that $B_{\hat{\tau}_{w}^{A}}(y_{0}^{\star})=B_{\hat{\tau}_{w}^{B}}(y_{0}^{\star})>B_{\hat{\tau}_{w}^{A}}(y_{0})$.
Since $B$ is linear in $y_{0}$ for any linear estimator, this implies
$0<B_{\hat{\tau}_{w}^{A}}(y_{0}^{\star}-y_{0})<B_{\hat{\tau}_{w}^{B}}(y_{0}^{\star}-y_{0})$,
which concludes the proof with $\tilde{y}_{0}=y_{0}^{\star}-y_{0}$.
\end{proof}
\printbibliography

\section*{Appendix Tables and Figures}

\addcontentsline{toc}{section}{Appendix Tables and Figures}\setcounter{table}{0} \renewcommand{\thetable}{A\arabic{table}}
\setcounter{figure}{0} \renewcommand{\thefigure}{A\arabic{figure}}
\renewcommand\theHtable{Appendix.\thetable}\renewcommand\theHfigure{Appendix.\thefigure}

\begin{table}[H]
\noindent \begin{centering}
\caption{\ref{tab:replication_BP} Estimates Relative to Binned OLS\label{tab:Coef-differences}}
\medskip{}
{\small{}}%
\begin{tabular}{lccccc}
\toprule 
 & \multicolumn{2}{c}{{\small{}Without disbursement}} &  & \multicolumn{2}{c}{{\small{}With disbursement}}\tabularnewline
 & \multicolumn{2}{c}{{\small{}method FEs}} &  & \multicolumn{2}{c}{{\small{}method FEs}}\tabularnewline
\cmidrule{2-3} \cmidrule{3-3} \cmidrule{5-6} \cmidrule{6-6} 
 & {\small{}No binning} & {\small{}Imputation} &  & {\small{}No binning} & {\small{}Imputation}\tabularnewline
 & {\small{}(1)} & {\small{}(2)} &  & {\small{}(3)} & {\small{}(4)}\tabularnewline
\midrule
{\small{}Contemporaneous month} & {\small{}-7.56} & {\small{}-4.45} &  & {\small{}-19.69} & {\small{}-17.04}\tabularnewline
 & {\small{}(4.59)} & {\small{}(5.68)} &  & {\small{}(7.94)} & {\small{}(9.72)}\tabularnewline
 & {\small{}{[}0.100{]}} & {\small{}{[}0.433{]}} &  & {\small{}{[}0.013{]}} & {\small{}{[}0.080{]}}\tabularnewline
{\small{}First month after} & {\small{}-11.59} & {\small{}-11.78} &  & {\small{}-30.74} & {\small{}-18.83}\tabularnewline
 & {\small{}(5.91)} & {\small{}(8.17)} &  & {\small{}(10.36)} & {\small{}(16.26)}\tabularnewline
 & {\small{}{[}0.050{]}} & {\small{}{[}0.149{]}} &  & {\small{}{[}0.003{]}} & {\small{}{[}0.247{]}}\tabularnewline
{\small{}Second month after} & {\small{}-14.60} & {\small{}4.44} &  & {\small{}-34.34} & {\small{}-16.52}\tabularnewline
 & {\small{}(7.02)} & {\small{}(22.92)} &  & {\small{}(12.78)} & {\small{}(29.34)}\tabularnewline
 & {\small{}{[}0.038{]}} & {\small{}{[}0.846{]}} &  & {\small{}{[}0.007{]}} & {\small{}{[}0.574{]}}\tabularnewline
 &  &  &  &  & \tabularnewline
{\small{}Three-month total} & {\small{}-33.74} & {\small{}-11.78} &  & {\small{}-84.77} & {\small{}-52.39}\tabularnewline
 & {\small{}(17.45)} & {\small{}(33.28)} &  & {\small{}(30.66)} & {\small{}(48.00)}\tabularnewline
 & {\small{}{[}0.053{]}} & {\small{}{[}0.723{]}} &  & {\small{}{[}0.006{]}} & {\small{}{[}0.275{]}}\tabularnewline
\bottomrule
\end{tabular}\medskip{}
\par\end{centering}
\noindent \raggedright{}\emph{\footnotesize{}Notes:}{\footnotesize{}
The coefficients reported in this table are differences between the
estimates from OLS with no binning (columns 1 and 3) or imputation
(columns 2 and 4) and the binned OLS specification in Table 2 of the
draft. Standard errors clustered by household are reported in parentheses;
p-values for the null that the difference is equal to zero are shown
in brackets.}{\footnotesize\par}
\end{table}

\begin{figure}[H]
\caption{Weights Implied by Dynamic Specifications with and without Trimming\label{fig:Trimming}}

\vspace{4mm}
\begin{centering}
\begin{tabular}{cc}
{\small{}A: With Trimming} & {\small{}B: Without Trimming}\tabularnewline
\includegraphics[width=0.35\textwidth]{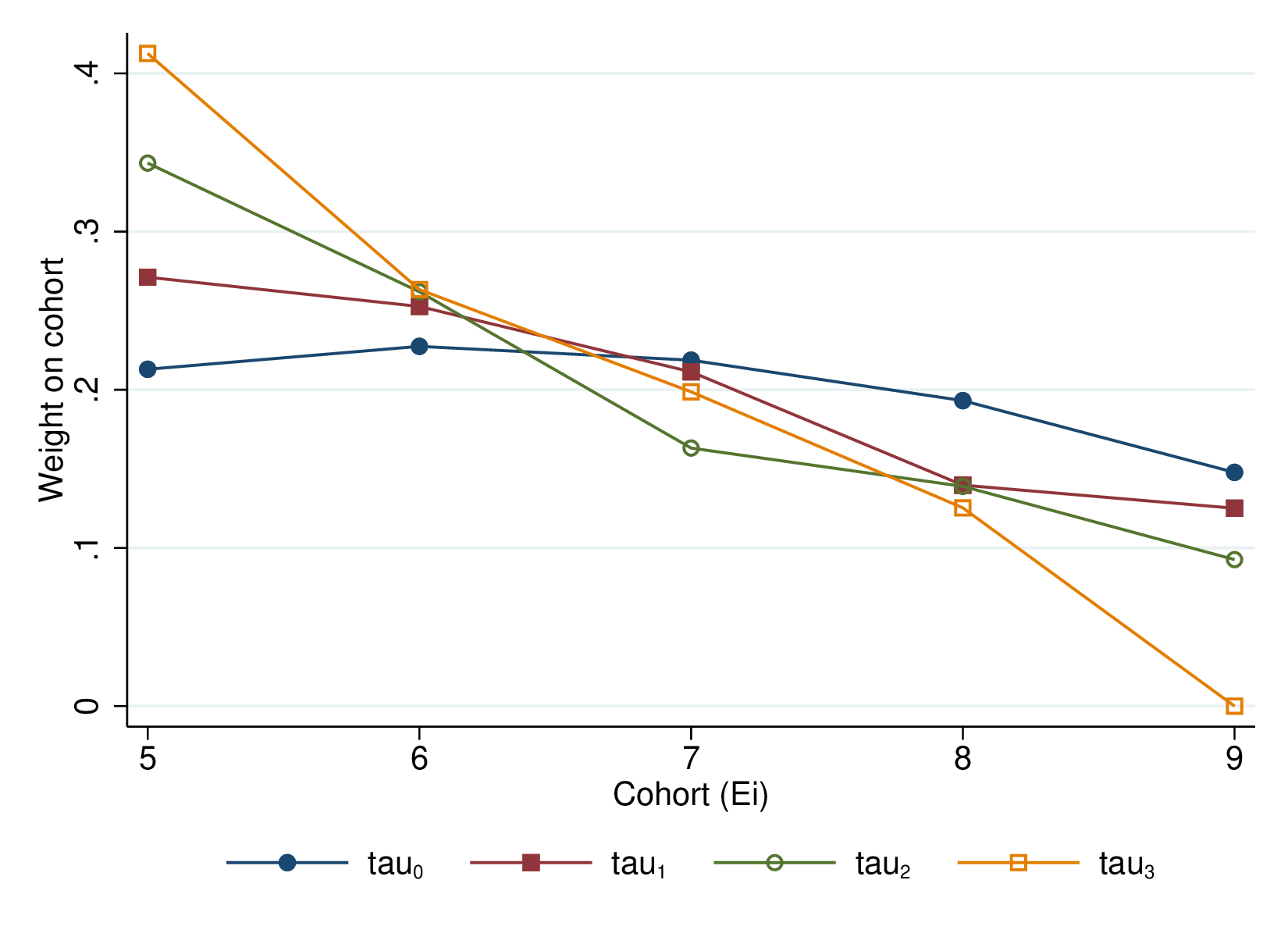} & \includegraphics[width=0.35\textwidth]{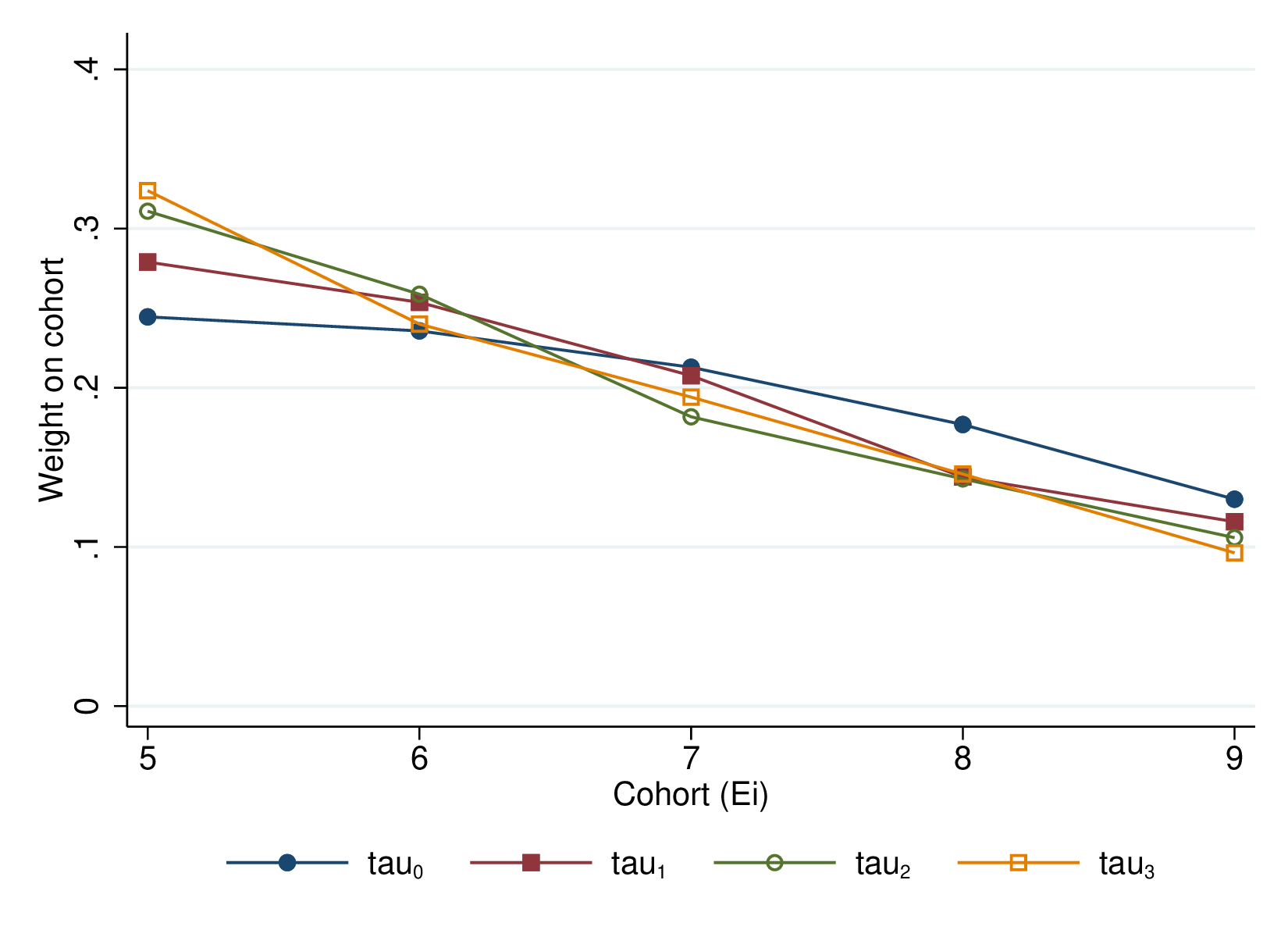}\tabularnewline
\end{tabular}
\par\end{centering}
\emph{\small{}Notes:}{\small{} For the numerical example described
in \ref{subsec:Trimming} this figure reports the total weight that
the $\hat{\tau}_{h}$ estimator from the semi-dynamic specification,
for each horizon $h=0,\dots,3$, places on the treated observations
from each cohort $e$ observed $h$ periods after treatment. The horizontal
axis and the lines correspond to the cohort $e$ and horizon $h$,
respectively. Panel A trims the sample to include observations with
$K_{it}\in[-4,3]$ only, while Panel B includes all data (but does
not report the weights for the coefficients $\tau_{4},\dots,\tau_{7}$).
The weights placed by $\hat{\tau}_{h}$ on observations at horizons
other than $h$ are not shown.}{\small\par}
\end{figure}

\begin{landscape}

\begin{figure}
\caption{Efficiency and Bias of Alternative Estimators\label{fig:Monte-Carlo-Simulation}}
\medskip{}

\begin{centering}
\begin{tabular}{cccc}
\multicolumn{4}{l}{\textbf{\small{}Standard deviations under different DGPs}}\tabularnewline
{\small{}1a: Spherical errors} & {\small{}1b: Heteroskedasticity} & {\small{}1c: AR(1)} & {\small{}1d: Wild clustered}\tabularnewline
\includegraphics[width=0.32\textwidth]{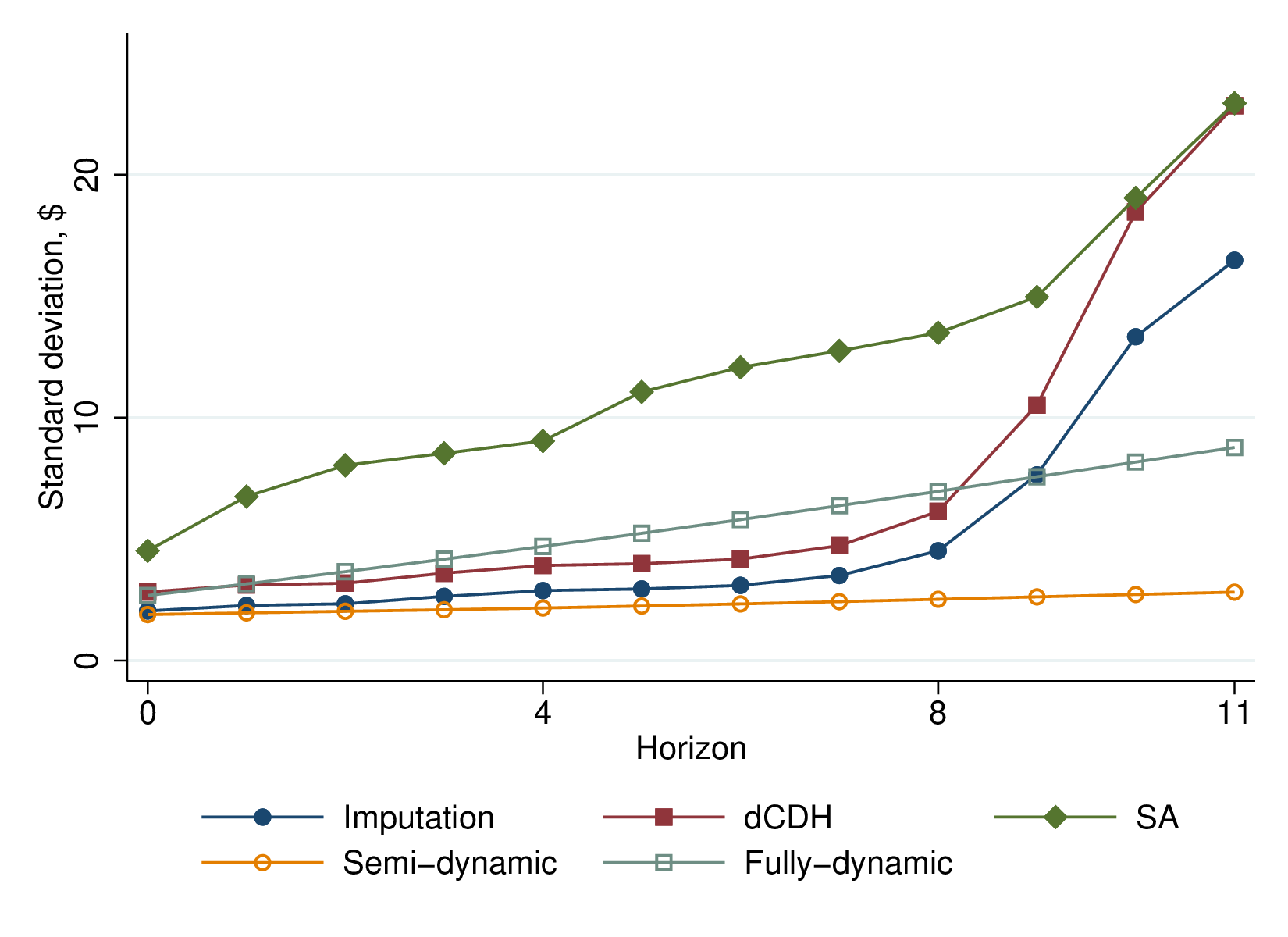} & \includegraphics[width=0.32\textwidth]{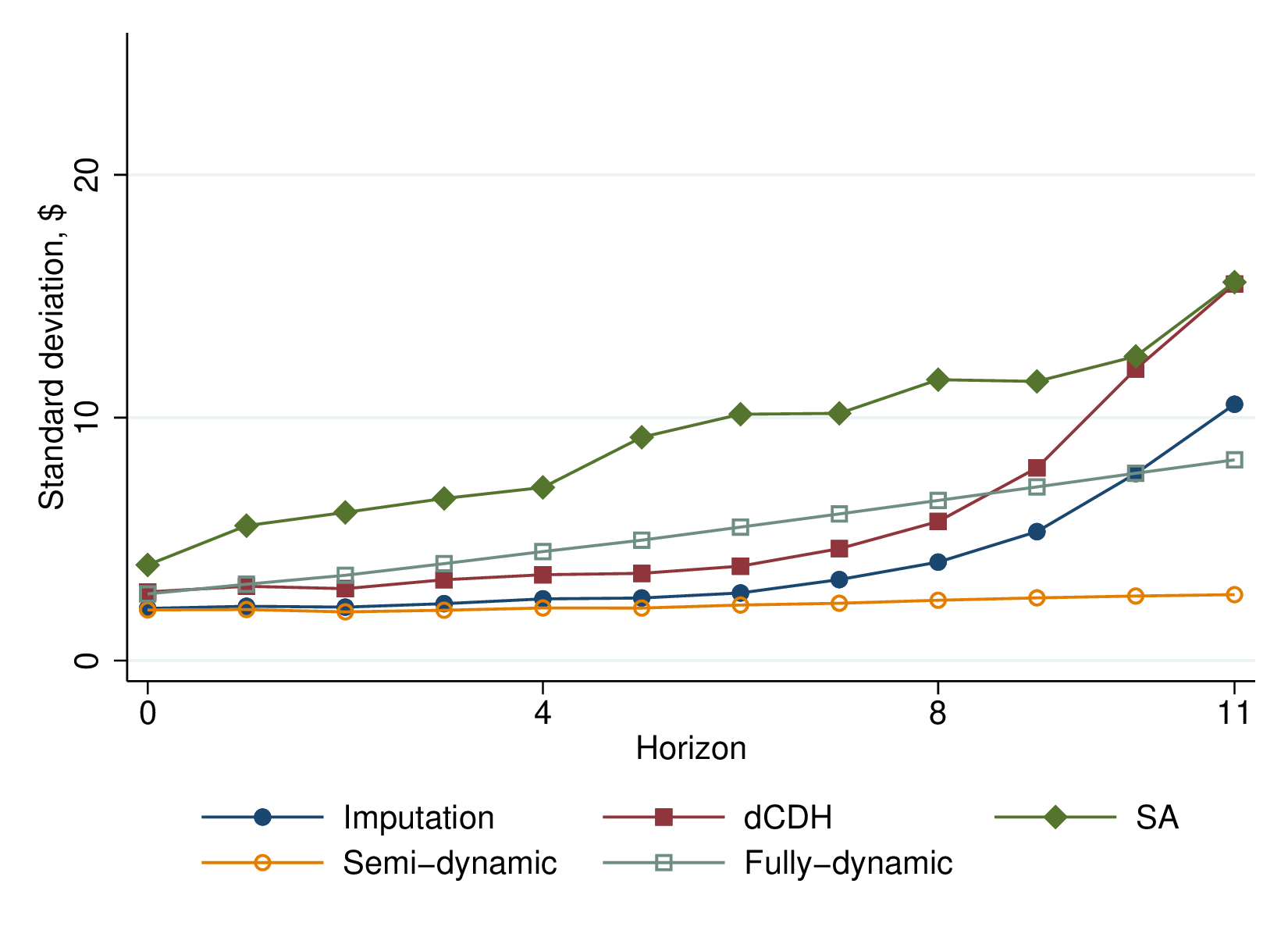} & \includegraphics[width=0.32\textwidth]{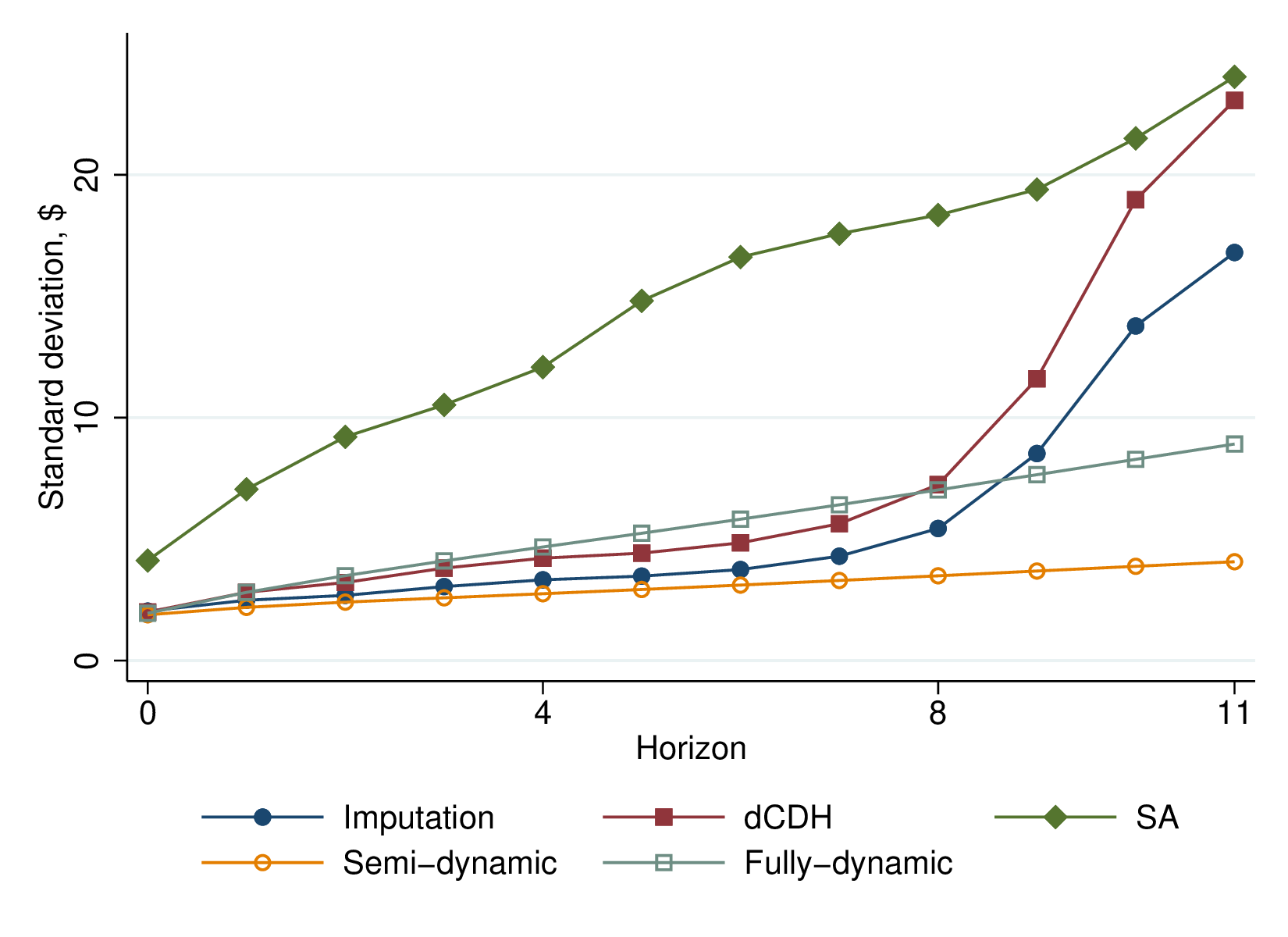} & \includegraphics[width=0.32\textwidth]{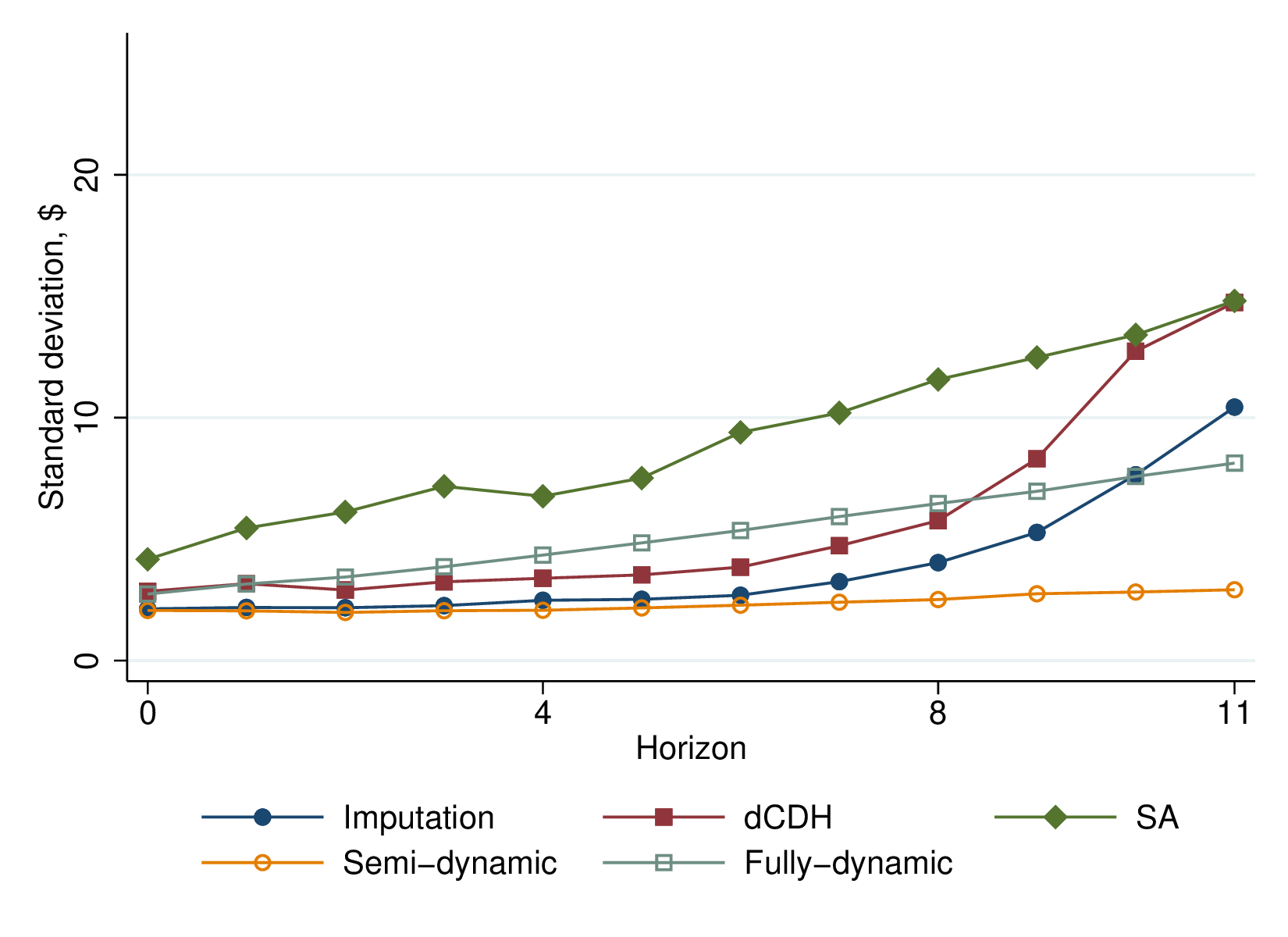}\tabularnewline
 &  &  & \tabularnewline
\multicolumn{4}{l}{\textbf{\footnotesize{}SD relative to the imputation estimator}}\tabularnewline
{\footnotesize{}2a: Spherical errors} & {\footnotesize{}2b: Heteroskedasticity} & {\footnotesize{}2c: AR(1)} & {\footnotesize{}2d: Wild clustered}\tabularnewline
\includegraphics[width=0.32\textwidth]{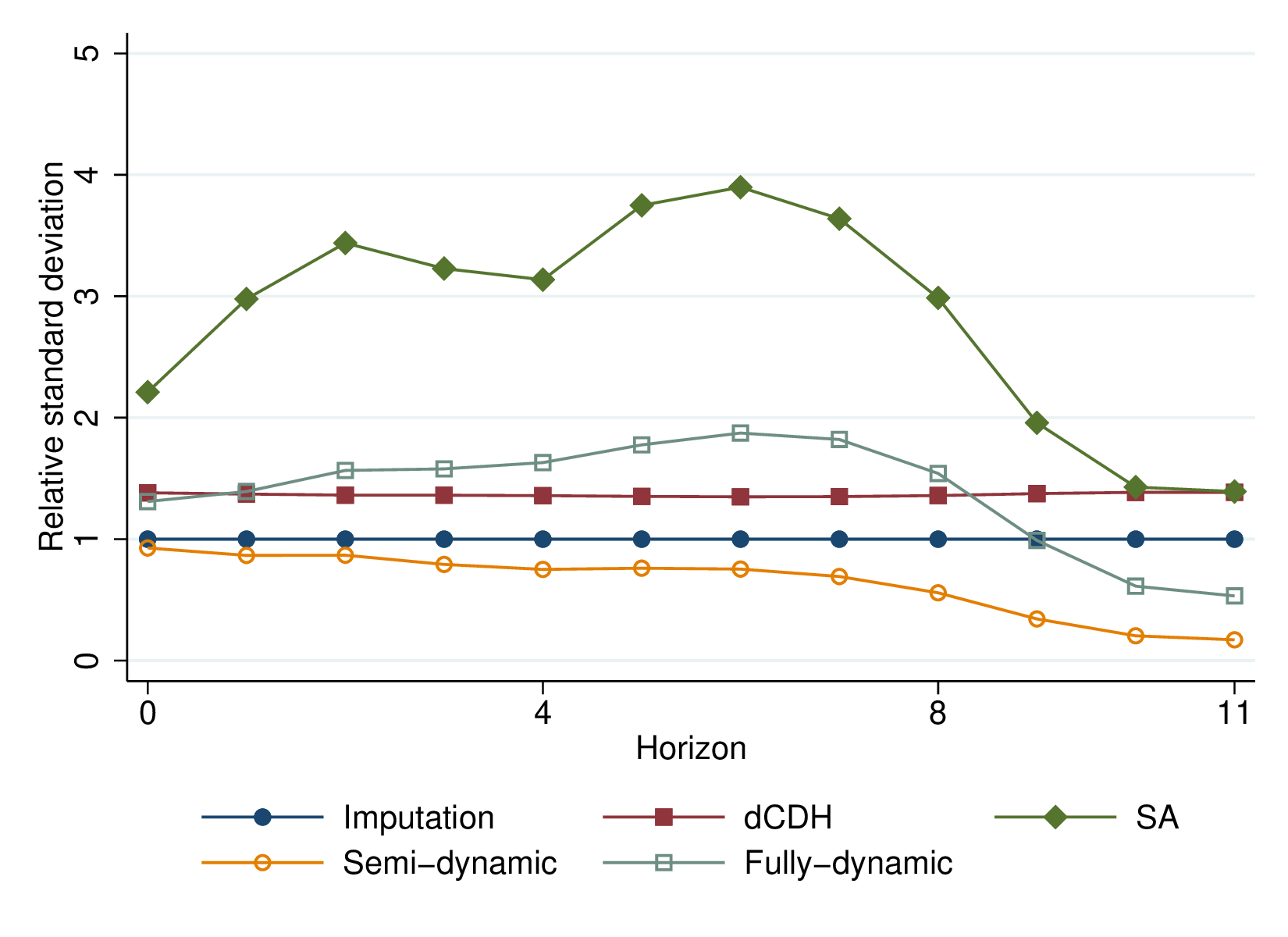} & \includegraphics[width=0.32\textwidth]{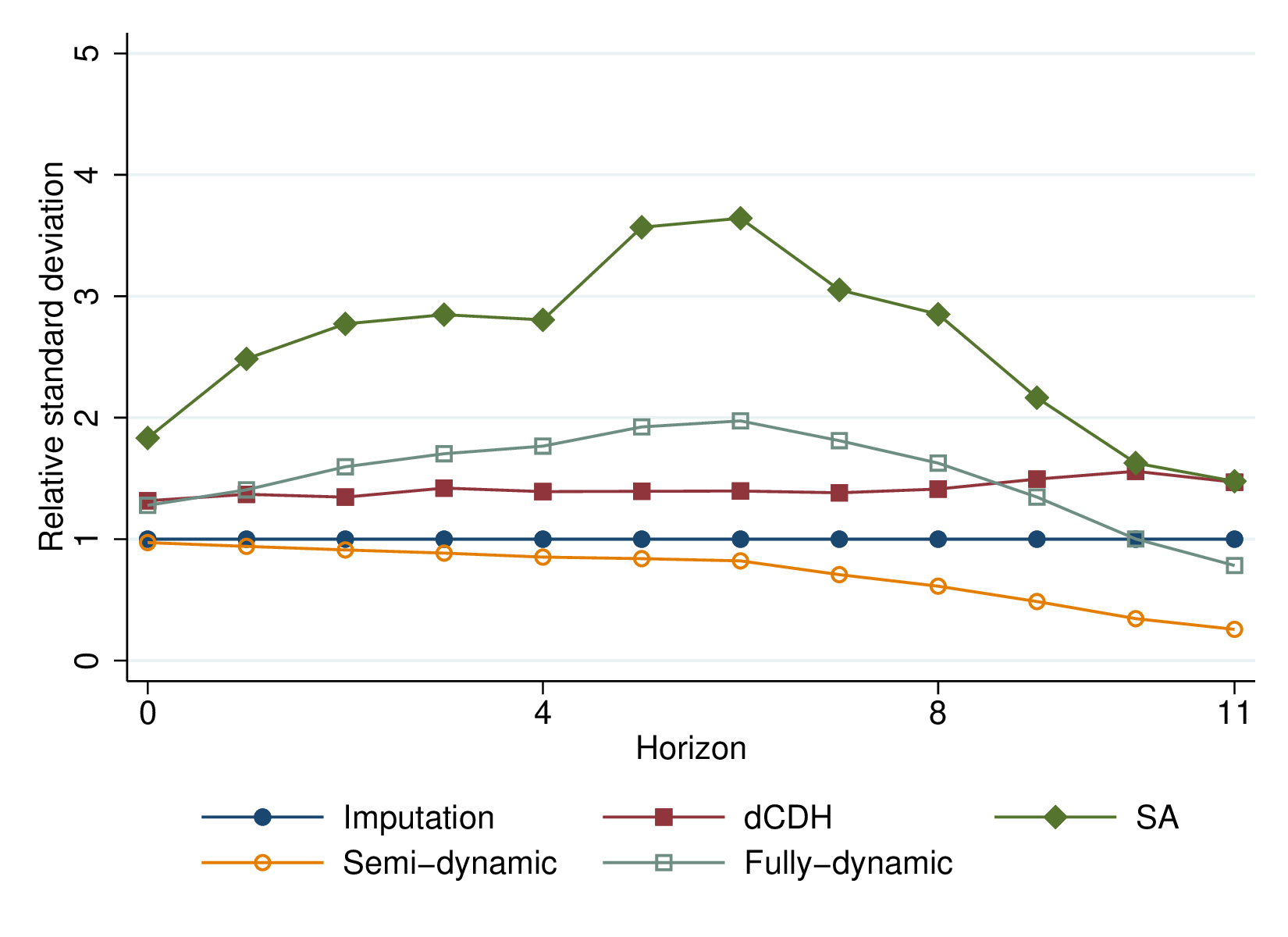} & \includegraphics[width=0.32\textwidth]{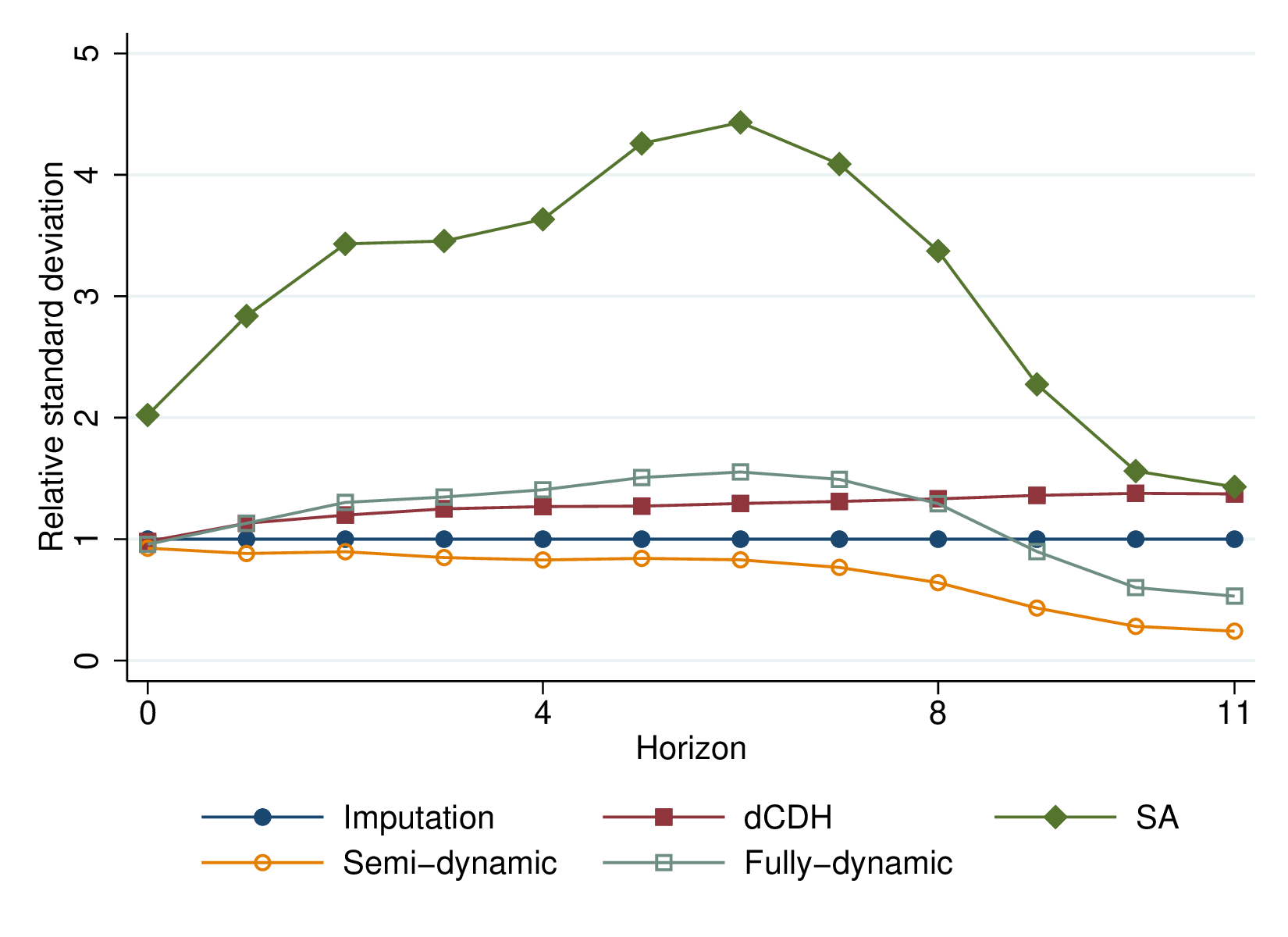} & \includegraphics[width=0.32\textwidth]{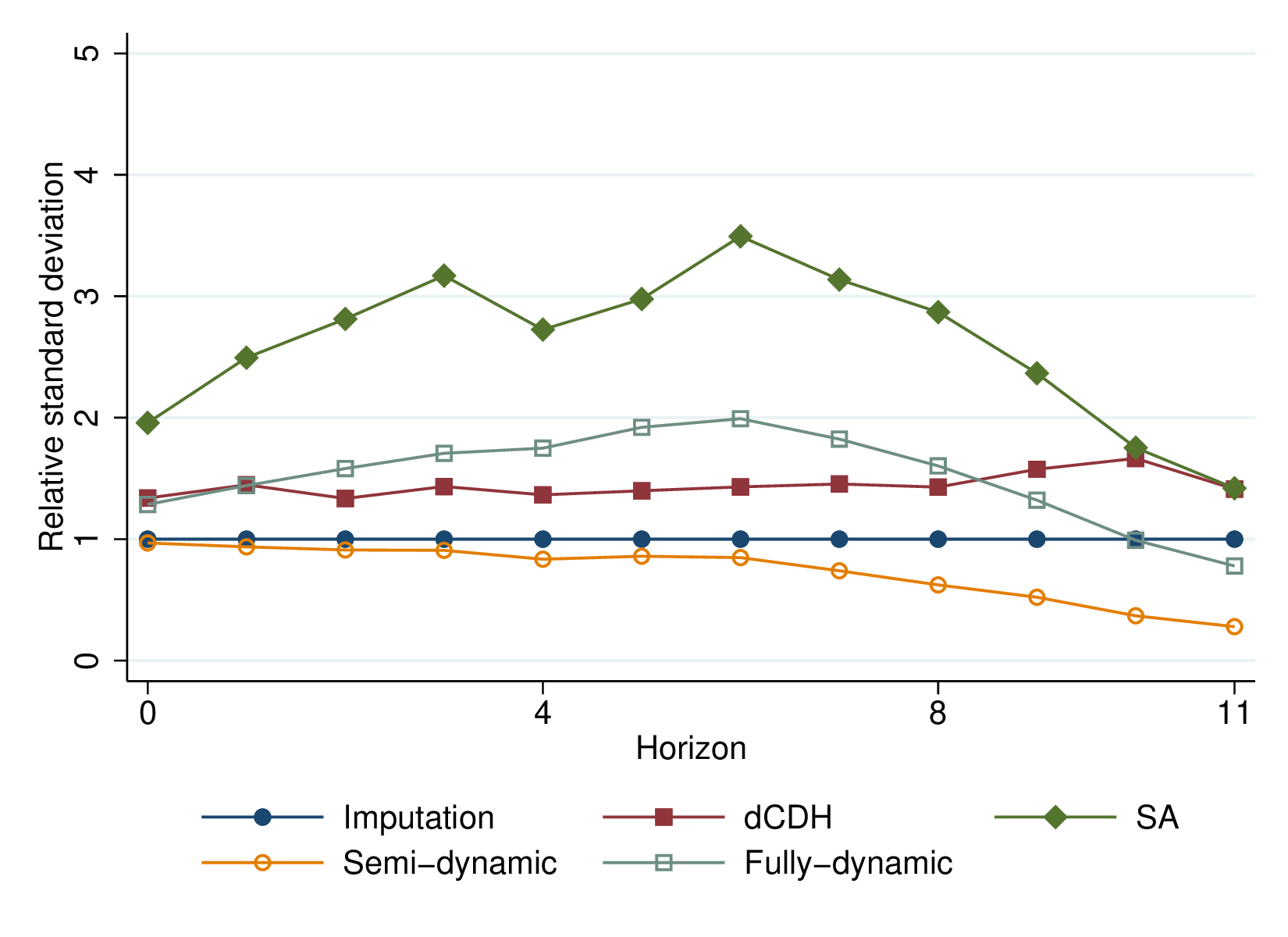}\tabularnewline
 &  &  & \tabularnewline
\multicolumn{3}{l}{\textbf{\small{}Other simulations}} & \tabularnewline
\multirow{1}{*}{{\small{}3: Bias}} & {\small{}4: RMSE, spherical errors} & \multirow{1}{*}{{\small{}5: Sensitivity to anticipation}} & \multirow{1}{*}{{\small{}6: Imputation estimator coverage}}\tabularnewline
\includegraphics[width=0.32\textwidth]{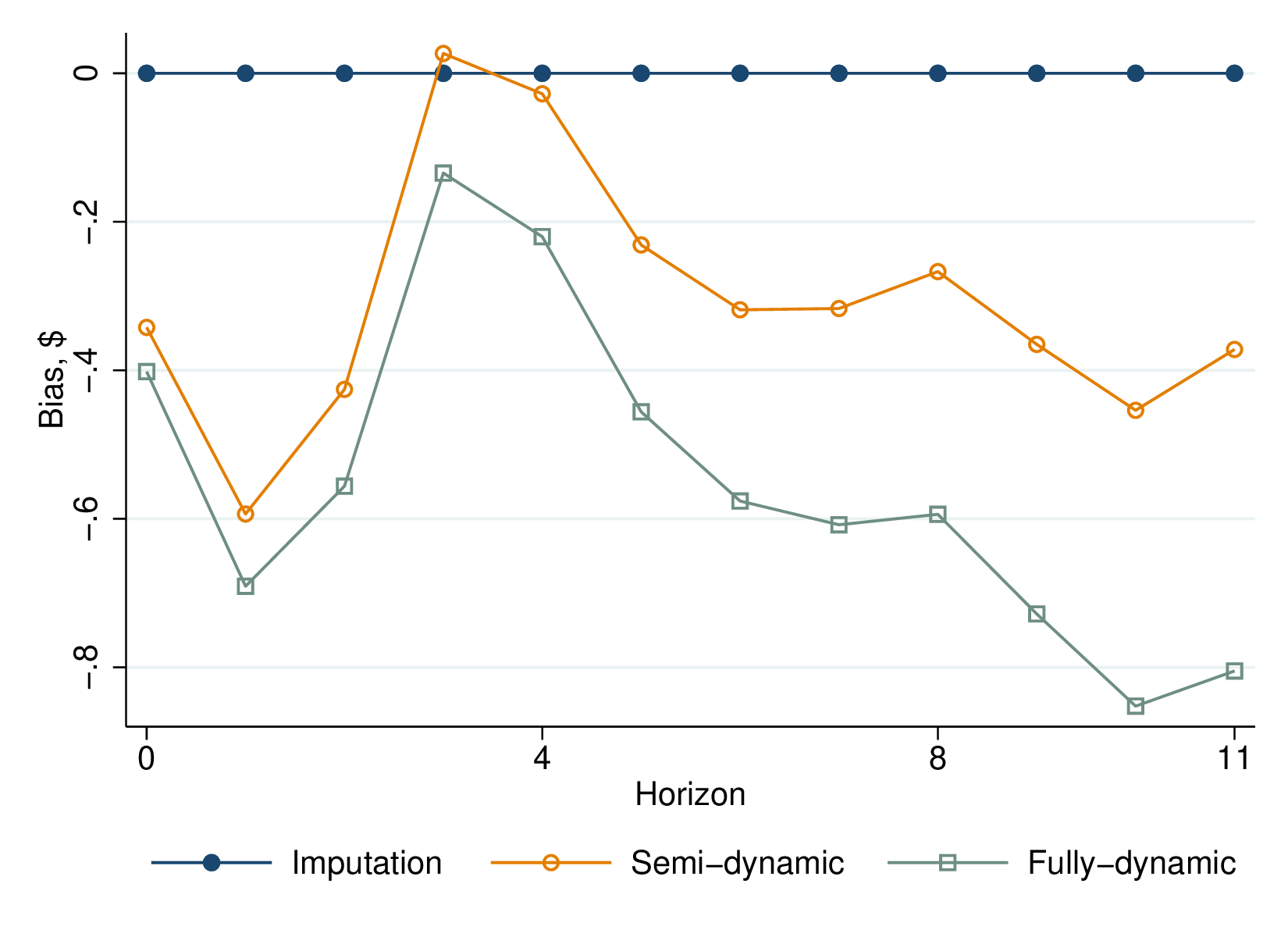} & \includegraphics[width=0.32\textwidth]{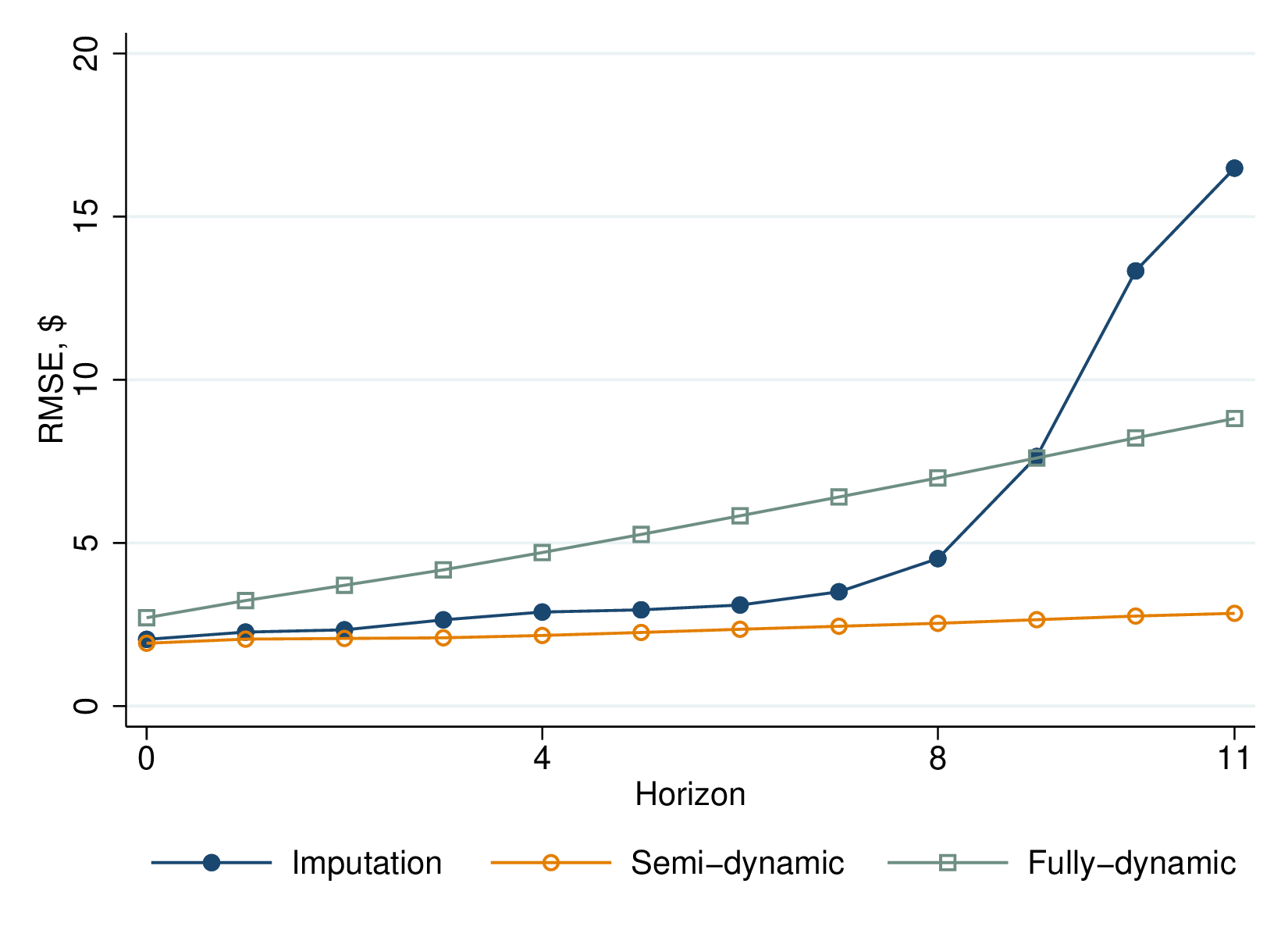} & \includegraphics[width=0.32\textwidth]{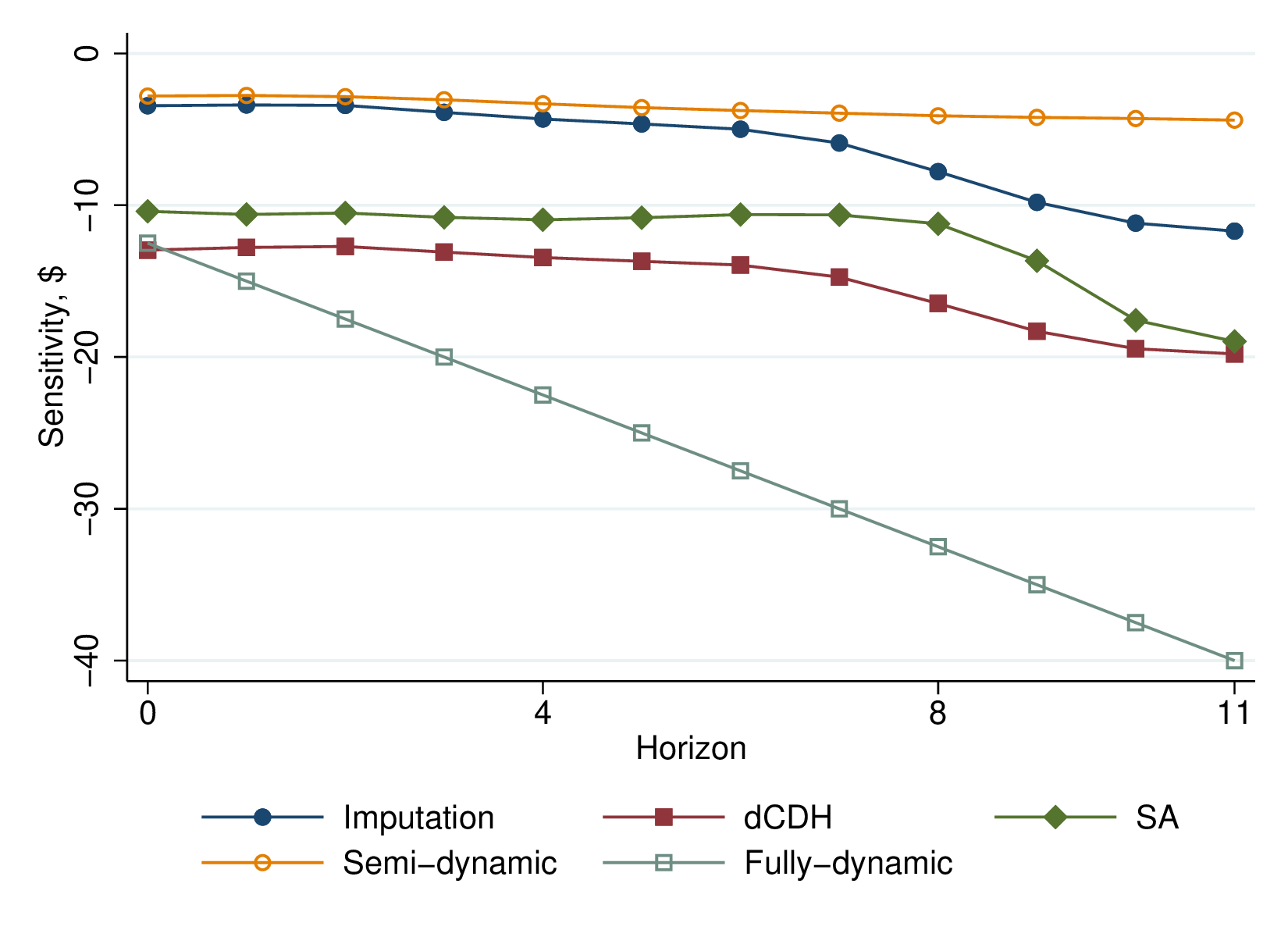} & \includegraphics[width=0.32\textwidth]{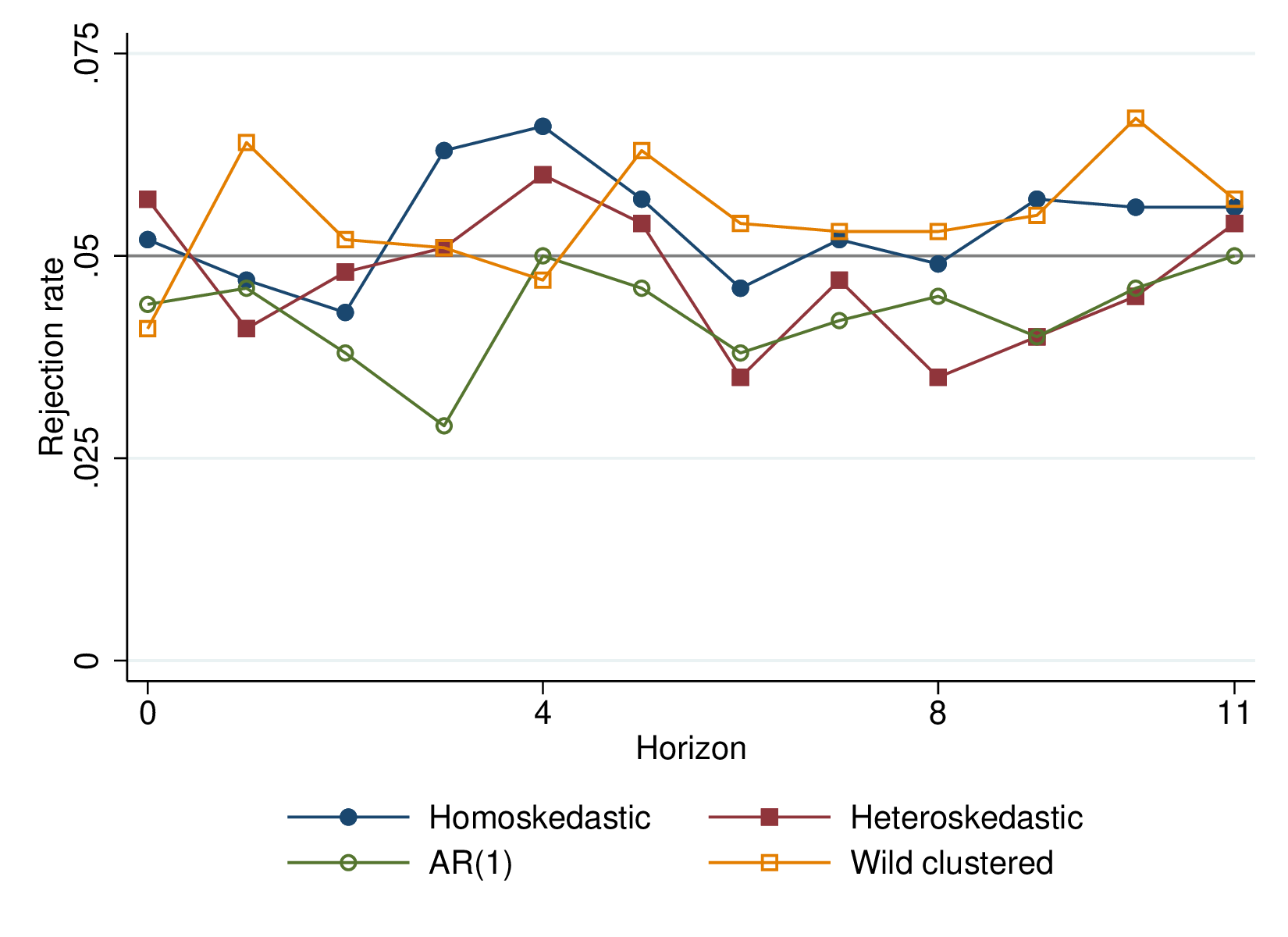}\tabularnewline
\end{tabular}
\par\end{centering}
\begin{centering}
\vspace{2mm}
\par\end{centering}
\emph{\small{}Notes:}{\small{} See \ref{subsec:Monte-Carlo-BP-Appx-NEW}
for a detailed description of the data-generating processes and reported
statistics.}{\small\par}
\end{figure}
\end{landscape}
\end{document}